\documentclass[aps,prx,10pt,twocolumn,showpacs,superscriptaddress,nofootinbib,longbibliography]{revtex4-2}
\usepackage{amsmath,amssymb,amsthm}
\usepackage{mathrsfs}
\usepackage[margin=1in]{geometry}
\usepackage{booktabs}
\usepackage{tabularx}
\usepackage{graphicx}
\usepackage{url}
\usepackage{wasysym}
\usepackage[justification=raggedright, singlelinecheck=false]{caption}

\newcolumntype{Y}{>{\raggedright\arraybackslash}X}

\usepackage[dvipsnames]{xcolor}
\usepackage{quantikz}
\usepackage{pgfplots}
\pgfplotsset{compat=1.18}
\usepackage{subcaption}
\usepackage[pagebackref=true]{hyperref}
\makeatletter
\let\BR@NAT@make@cite@list\NAT@make@cite@list
\def\NAT@make@cite@list{\BR@NAT@make@cite@list\Hy@backout{\@citeb}}
\makeatother
\usepackage[capitalise,nameinlink,noabbrev]{cleveref}
\crefname{lemma}{lemma}{lemmas}
\crefname{proposition}{proposition}{propositions}
\crefname{definition}{definition}{definitions}
\crefname{theorem}{theorem}{theorems}
\crefname{example}{example}{examples}
\crefname{section}{section}{sections}
\crefname{appendix}{appendix}{appendices}
\crefname{figure}{figure}{figures}
\crefname{equation}{eq.}{eqs.}
\crefname{table}{table}{tables}

\makeatletter
\backrefparscanfalse
\let\BR@pending\@empty
\let\BR@rtx@BibitemShut\BibitemShut
\def\BibitemShut#1{\let\BR@pending\backrefprint\BR@rtx@BibitemShut{#1}}
\let\BR@rtx@bibitem@fin\bibitem@fin
\def\bibitem@fin{\BR@rtx@bibitem@fin\BR@pending\let\BR@pending\@empty}
\makeatother
\renewcommand*{\backref}[1]{}
\renewcommand*{\backrefalt}[4]{%
  \ifcase #1 \or (page~#2.)\else (pages~#2.)\fi}
\usepackage{xcolor}
\usepackage{bbm}

\usetikzlibrary{arrows.meta,decorations.pathmorphing}

\definecolor{cbError}{HTML}{D55E00}
\definecolor{cboscar}{HTML}{0072B2}  
\definecolor{cbFix}{HTML}{009E73}   
\definecolor{cbWarn}{HTML}{E69F00} 

\theoremstyle{plain}
\newtheorem{theorem}{Theorem}[section]
\newtheorem{lemma}{Lemma}[section]
\newtheorem{proposition}{Proposition}[section]
\newtheorem{corollary}{Corollary}[section]

\theoremstyle{definition}
\newtheorem{definition}{Definition}[section]

\theoremstyle{remark}
\newtheorem{remark}{Remark}[section]
\theoremstyle{definition}
\theoremstyle{remark}

\theoremstyle{definition}
\newtheorem{question}{Question}
\crefname{question}{question}{questions}

\definecolor{dgYellow}{HTML}{4B2A7B}  
\definecolor{dgCyan2}{HTML}{1e98cd} 
\definecolor{dgCyan}{HTML}{2E6B3A}   
\definecolor{dgPurple}{HTML}{4B2A7B}  
\definecolor{dgViolet}{HTML}{9B1FC9} 
\definecolor{dgGreen}{HTML}{2E6B3A}  
\definecolor{dgOrange}{HTML}{EDA03A}  

\definecolor{dgBrown}{HTML}{8A5A30}

\newcommand{\Ad}{\operatorname{Ad}}
\newcommand{\ad}{\operatorname{ad}}

\newcommand{\Leb}{\operatorname{Leb}}

\newcommand{\Lie}{\mathrm{Lie}}
\newcommand{\Span}{\operatorname{span}}
\newcommand{\tr}{\operatorname{Tr}}

\newcommand{\Ucal}{\mathcal U}
\newcommand{\Comm}{\mathcal C(A,B)}
\newcommand{\Grep}{\widehat{\mathcal{G}}}

\renewcommand{\ket}[1]{\left | #1  \right \rangle}
\renewcommand{\bra}[1]{\left \langle #1 \right|}

\newcommand{\I}{\mathbbm{I}}
\newcommand{\C}{\mathbb{C}}
\newcommand{\R}{\mathbb{R}}

\newcommand{\D}{\mathcal{D}}
\newcommand{\g}{\mathfrak{g}}
\newcommand{\h}{\mathfrak{h}}
\newcommand{\tf}{\mathfrak{t}}
\DeclareMathOperator*{\esssup}{ess\,sup}

\newcommand{\fibvol}{\mathsf{V^F}}
\newcommand{\Vol}{\mathsf{V}}
\newcommand{\Hd}{\mathsf{H}}
\newcommand{\Lprod}[1]{U_{<#1}}
\newcommand{\Rprod}[1]{U_{>#1}}
\newcommand{\vol}{\operatorname{\mathsf{vol}}}
\newcommand{\timeB}{t}

\begin{document}

\title{Generating random unitaries by products of conjugated Hamiltonian evolutions}

\author{Oscar Scholin \href{https://orcid.org/0009-0005-3537-4736}{\includegraphics[scale=0.04]{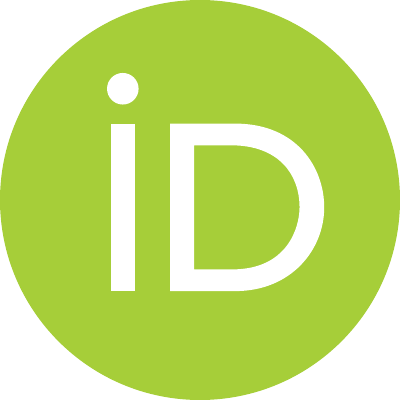}}}
\thanks{Corresponding authors: oscar.scholin@physics.ox.ac.uk, sathya.subramanian@cs.ox.ac.uk}
\affiliation{Department of Atomic and Laser Physics, University of Oxford, United Kingdom}

\author{Apollonas S. Matsoukas-Roubeas \href{https://orcid.org/0000-0001-5517-0224}{\includegraphics[scale=0.04]{orcidid.pdf}}}
\affiliation{Department of Applied Mathematics and Theoretical Physics, University of Cambridge, United Kingdom}

\author{Sathyawageeswar Subramanian \href{https://orcid.org/0000-0001-8716-3424}{\includegraphics[scale=0.04]{orcidid.pdf}}}
\thanks{Corresponding authors: oscar.scholin@physics.ox.ac.uk, sathya.subramanian@cs.ox.ac.uk}
\affiliation{Department of Computer Science, University of Oxford, United Kingdom}

\date{September 30, 2026}

\begin{abstract}
Random unitary operations are a fundamental resource for quantum information processing, yet provably generating them with experimentally available controls remains challenging. 
Existing approaches often require calculating higher-order operator expectation values, relying on statistical assumptions about the experimental Hamiltonian's spectrum to make these calculations tractable.
We show how to implement random unitaries by a sequence of evolutions generated by a fixed Hamiltonian conjugated by a global control operator. 
We give a geometric argument that establishes exponential convergence to the uniform random (Haar) measure on the accessible unitary group under uniform parameter sampling, beyond a finite sequence-length threshold that we bound.
We show this holds when the connected Lie subgroup generated by the conjugated Hamiltonians is closed. Native dynamics and global control thus provide a constructive route to uniform randomness, with applications to many experimental platforms including Rydberg atoms, bosons and fermions in optical lattices, transmons, and trapped ions.
\end{abstract}

\maketitle

\tableofcontents

\section{Introduction}\label{sec:intro}
\begin{figure*}[!ht]
    \includegraphics[width=\linewidth]{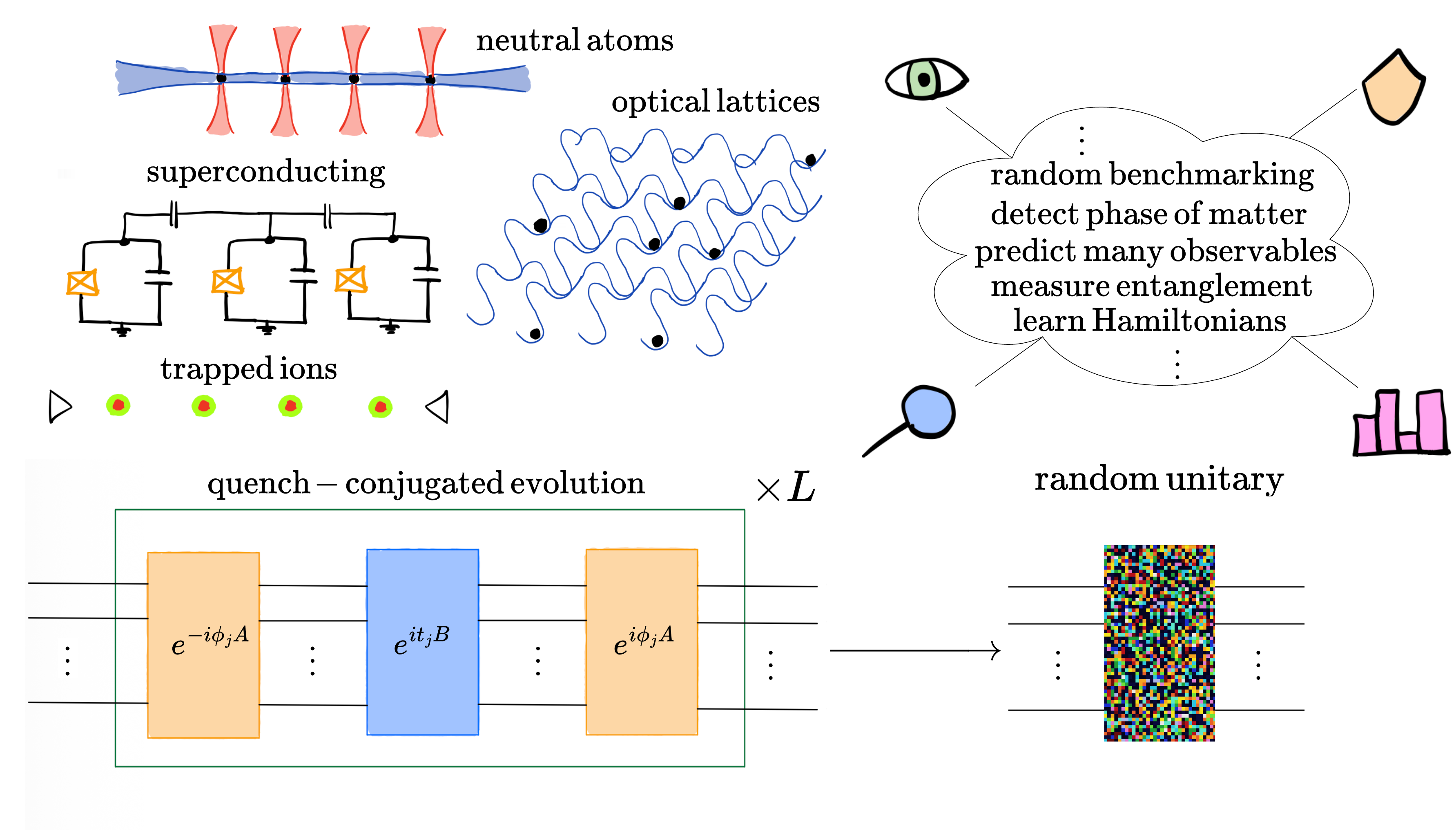}
    \caption{\textbf{Implementing random unitaries with global control.}
We introduce a protocol that applies to a broad range of experimental platforms, including neutral atoms, superconducting circuits, bosons and fermions in optical lattices, and trapped ions (upper left). Its central construction, quench-conjugated evolution (lower left), consists of $L$ successive global unitary operations, $e^{i\phi_j A}e^{it_j B}e^{-i\phi_j A}
= e^{it_j H(\phi_j)},$
where the control parameters $\phi_j$ and evolution times $t_j$ are sampled independently at each step, with the times drawn from a finite union of bounded intervals of nonnegative durations. No reversal of physical evolution is required. 
$A$ is not a separate operation that must be applied, but rather is the generator of the unitary change of basis of the Hamiltonian $B$.
Provided that the connected Lie subgroup associated with the generated Lie algebra is closed, the resulting ensemble converges exponentially in $L$ to the  Haar measure beyond a finite product-length threshold (lower right). These random unitaries provide a resource for various quantum information tasks (upper right).
}
    \label{fig:overall}
\end{figure*}
Random unitary operations are a foundational resource for quantum information processing. They underpin randomized measurements and benchmarking, with applications extending from statistical learning theory to quantum cryptography~\cite{elben2022,magesan2011,huang2020,huang2022,lancien2020,ji2018,metger2024}. Many applications require only finitely many moments of the uniform Haar distribution, reproduced by unitary designs~\cite{dankert2009,harrow2009expanders}. The challenge is to generate these statistical resources with experimentally available controls.

As quantum many-body systems grow to thousands of particles and beyond, precise, independent control of every constituent becomes increasingly demanding. Global control offers a different organizing principle for existing devices and future architectures: manipulate many particles through a small set of shared parameters. Circuit constructions provide rigorous routes to unitary designs, but their control requirements need not match those of globally driven hardware~\cite{brandao2016,metger2024}. Hamiltonian approaches bring randomization closer to native dynamics, although particular constructions require repeatedly resampling disorder in the Hamiltonian parameters during coherent evolution \cite{elben2018,vermersch2018} or invoke random-matrix assumptions about chaotic eigenvectors~\cite{zhou2026single,zhou2026three,sun2026,yang2026,nandy2026}. The latter works transfer their analytical predictions to experimentally relevant Hamiltonians primarily through numerical studies of small systems~\cite{zhou2026three,yang2026}. Their random-matrix guarantees do not, by themselves, certify randomization for a specified physical Hamiltonian. 

Global control can already support universal dynamics under suitable conditions without assumptions of quantum chaos~\cite{cesa2023,hu2025}. This suggests that useful randomness may emerge from the interplay between native dynamics and global controls. 
Lie-algebraic studies of variational circuits also establish design-convergence results for ensembles generated by sets of Hamiltonians acting on the full system, but applying these results to a prescribed control ensemble requires quantitative bounds on its moment operators~\cite{larocca2025,larocca2022,ragone2024,fontana2024}. For the global-control mechanism considered here, the challenge is therefore constructive: specify how the control parameter and evolution times should be sampled, prove that the resulting products explore the accessible group, and establish their rate of convergence to Haar measure. 
We seek these guarantees with evolution times restricted to a fixed finite union of bounded intervals of nonnegative times.\\

\paragraph*{Our result.}
We consider a quantum system with a fixed reference Hamiltonian $B$ and a global control parameter $\phi$ that rotates the Hamiltonian through unitary conjugation:
\begin{equation}\label{eqn:H_cond}
    H(\phi) = e^{i \phi A} B e^{-i \phi A},
\end{equation}
where the Hermitian operator $A$ generates the control-induced rotation. We illustrate this construction in the bottom left of \Cref{fig:overall}. 
By independently sampling the control parameter $\phi$ and evolution times, we construct a unitary random walk from successive evolutions under these Hamiltonians.
For this walk, we prove open-set irreducibility and convergence to the Haar measure, with a quantitative asymptotic rate, under a single structural condition: the connected Lie subgroup associated with the Lie algebra generated by the available Hamiltonians $\{iH(\phi)\}$ is closed. That is, we show every nonempty open region of the accessible group is reached with nonzero probability at some finite product length (``walkability''), and the endpoint distribution approaches the uniform Haar measure as the length increases (``uniformity'').\footnote{See the Technical Overview in \Cref{sec:tech_overview} for precise statements and definitions.} When the generated group is the full special unitary group, convergence reproduces the Haar moments required for approximate unitary designs, supplying the statistical resource underlying the randomized protocols described above.

Uniform randomness over the accessible group can therefore be approached using a fixed Hamiltonian and global control, without requiring additional local addressability, dynamically reconfigured disorder, or assumptions of quantum chaos. Our guarantees follow from the geometry of the product map of the conjugated evolutions, without evaluating moment operators.\\

\paragraph*{Examples.}
Our global-control mechanism is naturally available across a broad range of quantum platforms, including Rydberg atoms, bosons and fermions in optical lattices, superconducting circuits, and trapped ions, which we illustrate in the top left of \Cref{fig:overall}. We give a discussion of such implementations in \Cref{apdx:examples}. In platforms such as QuEra's Aquila Rydberg atom array~\cite{wurtz2023}, global control is a native hardware constraint. In others, including transmon systems, global driving remains available even when local control is also possible~\cite{zhao2023}. Our results thus identify both how restricted hardware can generate useful randomness and how more extensively controlled devices can access the same resource using a reduced set of controls.\\

\paragraph*{Organization.}The paper is organized as follows. 

In \Cref{sec:background}, we discuss applications of random unitaries and related work on their generation in circuit and Hamiltonian models. In \Cref{sec:tech_overview}, we formulate the technical questions and outline our proofs. In \Cref{sec:prelim_results}, we introduce the definitions and assumptions and establish the preliminary results. In \Cref{sec:main_results}, we prove open-set irreducibility and establish two routes to Haar sampling: uniform sampling of the input parameters yields exponential convergence with product length beyond a finite threshold, while, under additional hypotheses, a nonuniform sampling rule determined by the normal Jacobian and fiber volume yields exact Haar measure at finite length.

The Appendix contains five sections. \Cref{apdx:def} collects the relevant mathematical definitions, organized by subject. \Cref{apdx:nec_prior_results} restates the prior known results used to prove \Cref{thm:exp_conv}, \Cref{apdx:rep_theory} introduces the representation-theoretic concepts needed for \Cref{cor:anoussis}, and \Cref{apdx:extra_proofs} supplies our remaining technical proofs. \Cref{apdx:lin_ind} illustrates our construction of alternating conjugated Hamiltonian evolutions with one- and two-qubit examples. Finally, \Cref{apdx:examples} describes implementations in Rydberg atoms, Bose and Fermi Hubbard models, superconducting qubits, and trapped ions, and explains how symmetries constrain the accessible operations.\\

\paragraph*{Notation.}Some notational conventions we use. 
Let $n \in \mathbb{N}$. $[n] := \{0, 1, \dots, n-1 \}$.
$\dim$, unless otherwise noted, is the dimension over $\R$.
$\Lie(G)$ means the Lie algebra of the Lie group $G$.
Let $\mathfrak k$ be a Lie algebra over $\R$ and $S \subseteq \mathfrak k$. We write the generated subalgebra $\Lie_\R \langle S \rangle:= \bigcap \{ \mathfrak l \leq \mathfrak k \, : \, S \subseteq \mathfrak l\}$.
The font of a symbol indicates the type of object: fraktur i.e. mathfrak ($\g$, $\h$, $\mathfrak k$) denotes a Lie algebra; calligraphic i.e. mathcal ($\D$, $\Ucal$, $\mathcal N_L$, $\mathcal S$) a set; script i.e. mathscr ($\mathscr V_{U,k}$, $\mathscr M^{(k)}_\mu$) mixed-unitary and non-mixed-unitary channels; and sans-serif  ($\vol$, $\Vol$, $\fibvol_L$, $\Hd^{k}$) a volume. Vector spaces we write with capital Latin italics.

\section{Context and related work}\label{sec:background}
Random unitaries have a long history in quantum information and computation. Early applications used random-unitary averaging for entanglement distillation, approximate encryption, data hiding, and remote state preparation~\cite{bennett1996,hayden2004}. The prospect of reproducing useful Haar statistics with efficiently implementable operations motivated early circuit constructions and an NMR demonstration of pseudorandomness~\cite{emerson2003}, followed by the systematic development of unitary designs~\cite{dankert2009}. These ideas now underpin benchmarking, randomized measurements, and classical-shadow methods for learning quantum objects~\cite{magesan2011,brydges2019,elben2019,huang2020,elben2022}, which extract selected properties of a quantum state from random measurement statistics without reconstructing the full state. Related scrambling dynamics can generate metrologically useful entanglement for quantum-enhanced sensing~\cite{kobrin2024}. The notion of designs defined by matching finitely many Haar moments should be distinguished from cryptographic pseudorandomness, which instead requires indistinguishability by computationally bounded observers~\cite{ji2018,metger2024}.

\paragraph*{Circuits over local gate sets.} Circuit-based approaches give rigorous constructions of approximate 2-designs, establish the multiqubit Clifford group as an exact 3-design, and generate higher-order designs using quantum tensor-product expanders and local random circuits of polynomial depth~\cite{webb2016,zhu2017,dankert2009,harrow2009,harrow2009expanders,brandao2016}. Subsequent constructions have sharpened the required circuit depth and randomness~\cite{metger2024}. These results provide strong guarantees when local gate operations are available; our concern is instead how comparable randomness can arise directly from native Hamiltonian dynamics under restricted control.

\paragraph*{State designs.} A parallel line of work studies state designs and projected ensembles, in which measurements on part of a many-body system generate ensembles of conditional states on the remainder~\cite{ippoliti2022,cotler2023,choi2023,mcginley2023,iosue2024}. These studies connect quantum dynamics and measurement to the emergence of Haar-like state statistics, including through deep thermalization. Here, the distinction between states and unitaries is important: reproducing Haar statistics for a family of output states does not, by itself, establish the corresponding unitary-design property for the underlying processes.

\paragraph*{Stochastic and randomized Hamiltonian constructions.} Hamiltonian constructions are known to provide several routes to uniform randomness. Onorati et al. prove design convergence and mixing-time bounds for local stochastic Hamiltonians inducing Brownian motion on the unitary group~\cite{onorati2017}. Banchi, Burgarth, and Kastoryano consider a driven Hamiltonian \(H(t)=H_0+g(t)V\), where \(H_0\) and \(V\) are fixed and \(g(t)\) is a scalar control pulse~\cite{banchi2017}. For fully controllable systems, they establish convergence to unitary designs under Gaussian stochastic driving with stationary correlations and finite correlation time, and analyze the convergence rate in a driven chain.
Their control enters additively, whereas ours acts by conjugation and therefore preserves the instantaneous spectrum; moreover, our construction uses products of evolutions with independently sampled durations rather than continuous stochastic driving. Crucially, our result does not require full controllability: it establishes convergence in total variation to Haar measure on the closed accessible group, including when symmetries restrict it to a proper subgroup of \(U(d)\).
Nakata et al. construct designs from alternating random diagonal evolutions in complementary bases and from nearly time-independent Hamiltonians with spin-glass-type interactions~\cite{nakata2017}, with a related pseudorandom construction implemented in a 12-spin NMR system~\cite{li2019}. Together, these results establish rigorous routes from randomized Hamiltonian dynamics to unitary designs, but under control models structurally different from our setting of fixed-Hamiltonian global control.

\paragraph*{Realistic physical systems.}
For atomic simulators, Elben et al. and Vermersch et al. propose generating random unitaries through sequences of quenches in local disorder potentials, with numerical studies in spin and Hubbard models and applications to estimating density-matrix moments~\cite{elben2018,vermersch2018}. In these protocols, the disorder is resampled during coherent evolution, and each sequence must be reproduced across measurement repetitions. Closer to our control setting, Vermersch et al. also consider a reduced-control protocol in which a single fixed disorder pattern is switched on and off for randomly chosen durations, finding numerical evidence for approximate $2$-design formation in a Fermi--Hubbard model~\cite{vermersch2018}. Our results provide analytical guarantees for randomization under such restricted control, without resampling spatial disorder during the evolution. Disorder-based randomization in optical lattices itself remains a target of current experimental efforts~\cite{unirand_report}. Related Hamiltonian-shadow protocols instead use chaotic or locally scrambled dynamics to encode information about a state into measurement outcomes, constructing estimators adapted to the resulting measurement ensemble rather than requiring that ensemble to reproduce Haar statistics~\cite{hu2022,hu2023}.

\paragraph*{Chaotic Hamiltonians.} Recent work asks whether chaotic Hamiltonian dynamics, supplemented by a small number of switches or intermediate operations, can reproduce the Haar moments required for unitary designs. Proposed constructions include a single quench between independently sampled Hamiltonians~\cite{zhou2026single}, temporal ensembles of three fixed Hamiltonians~\cite{zhou2026three}, and evolutions interrupted by Pauli operations or more general perturbations~\cite{sun2026,yang2026,nandy2026}. In these approaches, time averaging randomizes phases within a fixed energy eigenbasis, while Hamiltonian switches or intermediate operations mix between eigenstates. The need for such additional operations is sharpened by Cui et al., who show that ensembles generated by time-independent, constant-local Hamiltonians remain efficiently distinguishable from Haar randomness even at arbitrarily long evolution times, and cannot form unitary $2$-designs below an inverse-superpolynomial error threshold~\cite{cui2025random}. Thus, long-time evolution under a single fixed Hamiltonian is not enough; additional control must provide a mechanism for mixing between its eigenspaces.

The Hamiltonian-based analytical guarantees rely on distinct statistical hypotheses. The single-quench calculation uses GUE spectral and eigenvector statistics~\cite{zhou2026single}, while the three-Hamiltonian temporal construction assumes nonresonant spectra, independent Haar-distributed eigenbasis overlaps, and large-dimension self-averaging~\cite{zhou2026three}. Pauli-assisted constructions invoke Gaussian matrix-element and Pauli-spectrum statistics~\cite{sun2026}, or an explicit eigenvector-thermalization ansatz~\cite{nandy2026}. Yang and Wu instead establish a Haar-eigenbasis-averaged theorem under additive nonresonance and suppression conditions on the intermediate ensemble~\cite{yang2026}. These temporal-ensemble arguments use infinite-time filtering and large-Hilbert-space asymptotics, with numerical tests in structured many-body models. Applying their design guarantees to a specified experimental Hamiltonian therefore requires establishing the relevant spectral and eigenvector hypotheses.

\paragraph*{Parametrized Hamiltonians and QAOA.} Related convergence results arise in the study of barren plateaus in variational quantum circuits, including the quantum alternating operator ansatz (QAOA). For fully controllable repeated-layer ansatze, with $\g=\mathfrak{su}(d)$ and independently and uniformly sampled gate parameters, Larocca et al. establish convergence to approximate unitary $2$-designs~\cite[Thm.~1]{larocca2022}; analogous conclusions hold within invariant subspaces when the restricted dynamics are controllable. Ragone et al. establish design-convergence bounds on the dynamical Lie group at each fixed moment order, expressed through the corresponding single-layer moment operator~\cite[Thm.~2 and Supplement Information Section III]{ragone2024}. Fontana et al. prove polynomial-depth mixing for polynomial-dimensional dynamical Lie algebras when each step selects a generator uniformly from an orthogonal basis of the algebra and samples its evolution parameter uniformly over one complete period~\cite[Thm.~2.6]{fontana2024}.

These moment-based methods provide a possible route to analyzing our ensemble. Obtaining quantitative depth bounds, however, requires bounding the largest singular value of the single-step moment operator restricted to the orthogonal complement of its Haar-invariant subspace. In the chaotic-Hamiltonian constructions discussed above, random-matrix eigenbasis statistics make related moment and frame-potential calculations tractable. Obtaining comparable bounds for a specified Hamiltonian without such statistical assumptions remains a nontrivial analytical problem.\\

Our approach bypasses moment-operator analysis by studying the geometry of the map that generates the unitary random walk. 
Motivated by the use of global-phase quenches on QuEra's Aquila processor~\cite{matsoukasroubeas2026} to measure Renyi entropy, we consider a general framework interleaved evolutions of a fixed Hamiltonian under global conjugation. 
Our central technical contribution is an explicit construction of a submersion point, with a product-length bound determined by the dimension of the generated Lie algebra and the number of linearly independent conjugated Hamiltonians~\Cref{lem:L0_bound}. If the connected Lie subgroup generated by $\{iH(\phi)\}$ is closed, this construction ensures that the random walk's endpoint distribution becomes absolutely continuous with respect to Haar measure on the accessible group after finitely many steps~\Cref{thm:non0_ac}. This implies open-set irreducibility: every nonempty open region of the accessible group can be reached with positive probability after finitely many steps. 
Established compact-group random-walk results~\Cref{apdx:nec_prior_results} then allow us to show exponential convergence in total variation to Haar measure. 
Determining the sequence lengths and physical evolution times required for specific hardware and randomization tasks remains an open quantitative problem.

\section{Technical overview}\label{sec:tech_overview}
\begin{figure*}[ht]
\centering
\begin{tikzpicture}[
    font=\small,
    >={Stealth[length=2mm]},
    arr/.style={->, thick},
    lbl/.style={inner sep=2pt},
    box/.style={draw, thick, rounded corners=1pt, inner sep=4pt},
    squigbox/.style={draw, thick, decorate, decoration={snake, amplitude=0.5pt, segment length=4pt}, inner sep=4pt},
]

\node[box, dgYellow] (MR1) at (2.4,5.7)  {Conv. to Haar:~\Cref{thm:exp_conv}};
\node[box, dgYellow] (MR2) at (2.7,4.7)  {Exact Haar: \Cref{thm:exact}};
\node[lbl, dgCyan2]   (BAG) at (7.8,5.7)  {\cite[Thm.~3]{bhattacharya1972}, \cite[Cor.~4.2]{anoussis2004}};
\node[squigbox, dgPurple] (AC)  at (12.0,5.2) {A.c.: \Cref{thm:non0_ac}};
\node[lbl, dgGreen]  (GS)  at (12.0,4.1) {\Cref{lem:L0_bound}};
\node[box, dgPurple] (Fin) at (1.4,6.8) {Unitary designs \Cref{cor:designs}};
\node[lbl, dgPurple] (Off) at (10.0,6.5) {OS. irred. \Cref{cor:os_irred}};

\node[dgViolet, font=\normalsize] at (0.25,4.0) {Main \Cref{sec:main_results}};
\draw[dgViolet, thick] (-0.9,3.55) -- (15.3,3.4);
\node[dgViolet, font=\normalsize, anchor=west] at (-1.05,3.12) {Prelim. \Cref{sec:prelim_results}};

\node[lbl]                       (Def) at (0.1,1.4)  {\Cref{def:main}};
\node[lbl, dgOrange, anchor=west] (G)   at (2.0,2.2)  {\Cref{prop:G_group}};
\node[lbl, dgGreen,  anchor=west] (J)   at (2.0,1.4)  {\Cref{def:normal_jacob}, \Cref{rem:jacobian}};
\node[lbl, dgBrown,  anchor=west] (h)   at (2.0,0.6)  {\Cref{eqn:h_subalgebra}, \Cref{lem:V_m_h}};
\node[lbl, dgPurple]             (mu)  at (6.3,3.0)  {\Cref{def:pushforward}};
\node[lbl, dgGreen,  anchor=west] (Sub) at (7.0,1.4)  {\Cref{lem:jacobian_real_analytic}, \Cref{prop:submersion_almost_everywhere}};
\node[lbl, dgCyan,   anchor=west] (im)  at ([xshift=0.8cm, yshift=-0.4cm]Sub.east) {\Cref{lem:im_J_subset_h}};
\node[lbl, dgOrange]             (Con) at (14.0,2.2) {\Cref{prop:h_g_submersion}};
\node[lbl, dgOrange]             (N)   at (7.8,2.2) {\Cref{prop:subalgebra}};
\node[lbl, anchor=west]          (Ext) at (12.9,.5) {\Cref{lem:comm_comm_sufficient}};

\coordinate (D) at ([xshift=2pt]Def.east);
\coordinate (Gout) at ([xshift=2pt]G.east);
\coordinate (Nout) at ([xshift=2pt]N.east);
\coordinate (Sout) at (11.3,1.4);
\coordinate (Aout) at ([xshift=-2pt]AC.west);
\coordinate(hout) at (5.33,0.6);
\draw[arr, dgOrange] (D) to[out=30,in=195]  (G.west);
\draw[arr, dgGreen]  (D) -- (J.west);
\draw[arr, dgBrown]  (D) to[out=-30,in=165] (h.west);
\draw[arr, dgPurple] (Gout) to[out=30,in=230] (mu);
\draw[arr, dgPurple] (mu.east) to[out=15,in=230] ([xshift=-9pt]AC.south);
\draw[arr, dgOrange] (Gout) -- (N);
\draw[arr, dgOrange] (Nout) -- (Con);
\draw[arr, dgOrange]  (hout) to[out=10,in=210] (N);
\draw[arr, dgGreen]  (Sout) to[out=70,in=235] ([xshift=-4pt]GS.south);
\draw[arr, dgGreen]  (J) -- (Sub);

\draw[arr, dgCyan]   (Sout) to[out=-45,in=180] (im.west);
\draw[arr, dgGreen]  (Nout) .. controls ([xshift=1.1cm, yshift=-0.3cm]Nout) and ([xshift=-0.25cm, yshift=0.8cm]im.north) .. (im.north);

\draw[arr, dgOrange] (im.east) to[out=15,in=290] (Con.south);
\draw[arr, dashed]   (Ext) to[out=30,in=-45] (Con);

\draw[arr, dgGreen]  (Con.north) to[out=95,in=-80] ([xshift=4pt]GS.south);
\draw[arr, dgPurple] (GS) -- (AC);
\draw[arr, dgCyan2]  (Aout) to[out=180,in=-30] (BAG.south);
\draw[arr, dgYellow] (BAG) -- (MR1);
\draw[arr, dgYellow] (Aout) to[out=210,in=0] (MR2.east);
\draw[arr, dgPurple] (MR1.north -| Fin) -- (Fin);
\draw[arr, dgPurple] (MR2.west) .. controls (-0.6,4.7) and (-0.6,6.3) .. (Fin.south);
\draw[arr, dgPurple] (Aout) to[out=120,in=270] (Off.south);

\end{tikzpicture}
\caption{\textbf{Proof flow.} Fundamental definitions appear in black at the lower left; arrows indicate logical dependencies moving to the right, up, and back to the left. Colors group results by topic: orange for the group $G$; green for the Jacobian of $U_L$; brown for the generated Lie subalgebra $\h$; dark purple for measure-theoretic properties of the pushforward measure $\mu$; and light blue for the results of Bhattacharya and Anoussis--Gatzouras, restated in \Cref{apdx:nec_prior_results}. 
The box with a wavy border (absolute continuity, a.c., with respect to the Haar measure) implies open-set irreducibility (OS. irred.), while the boxes with straight borders establish convergence to the Haar measure. The exact-Haar branch additionally requires the fiber-volume and boundedness hypotheses of \Cref{thm:exact}.
The dashed black line indicates a sufficient condition for $\h=\g$, the structural assumption used to construct a finite-length submersion point.
The purple line separates the preliminary results in \Cref{sec:prelim_results} from the main results in \Cref{sec:main_results}.}
\label{fig:proof}
\end{figure*}

\begin{figure*}[!ht]
    \includegraphics[width=\linewidth]{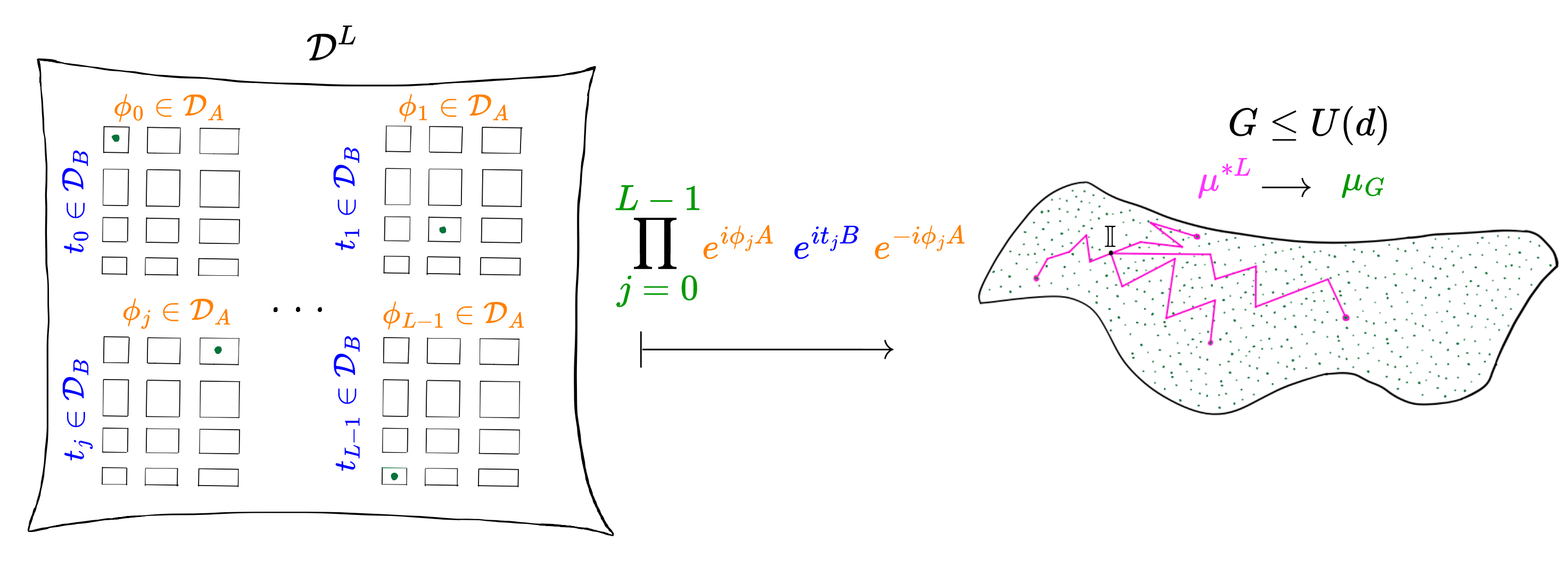}
    \caption{\textbf{Schematic of the protocol.}
Left: each collection of $L$ parameter pairs $(\phi_j,\timeB_j)\in\D_A\times\D_B$, with $\phi_j$ shown in orange and $\timeB_j$ in blue, defines a point $x\in\D^L$ (green).
Middle: the product map
$U_L(x)=\prod_{j=0}^{L-1} e^{i\phi_j A} e^{i\timeB_j B} e^{-i\phi_j A}$
maps these parameters to a unitary in the compact Lie group $G\leq \mathrm{U}(d)$. Each factor corresponds to one step of the random walk (magenta). For independent, identically distributed steps with distribution $\mu$, the endpoint distribution after $L$ steps is $\mu^{*L}$. Our main result establishes conditions under which this distribution converges to the Haar measure $\mu_G$, the uniform probability measure on $G$, illustrated by the green points on the right.}
    \label{fig:schematic}
\end{figure*}

Consider a system with a Hamiltonian and a parameterized global control operation that acts rapidly compared with the system's dynamics. 
\begin{question}[Open-set irreducibility]\label{question:topo_irred}
    Is it possible to choose the Hamiltonian and the sampling of the global control parameter so that the resulting random walk on the compact group generated by these operations is \emph{open-set irreducible}?\footnote{See \Cref{def:topo_irred}.}. That is, starting at $\I$, can a walk reach every nonempty open subset of the group with positive probability in finitely many steps? 
\end{question}

\begin{question}[Asymptotic uniformity]\label{question:conv}
    Is it possible to choose the Hamiltonian and the sampling of the global control parameter so that the resulting random walk on the compact group generated by these operations \emph{converges in total variation to Haar measure}?\footnote{See \Cref{def:asymptotic_uniformity}.}. That is, starting at $\I$, can the total variation distance between the distribution of endpoints from the walk and Haar measure tend to zero as the number of steps goes to infinity?
\end{question}

\paragraph*{Results.}
We establish both properties under the conditions stated below. Suppose that the system is governed by a $d\times d$ Hamiltonian $H(\phi)$ depending on a control parameter $\phi$, with fixed Hermitian matrices $A$ and $B$ such that $H(\phi)=e^{i\phi A}Be^{-i\phi A}$, as in \Cref{eqn:H_cond}. Our construction starts from the \emph{quench-conjugated Hamiltonian evolution}
\begin{equation}\label{eqn:F_first}
    F(\phi,\timeB) := e^{i\phi A}e^{i\timeB B}e^{-i\phi A}.
\end{equation}
Here, $B$ is the system's reference Hamiltonian, while $A$ generates the unitary change of basis induced by the control parameter. We construct products of $F$ with independently sampled control parameters and evolution times. Each additional factor defines a step of a unitary random walk on the group generated by the allowed evolutions. This is illustrated in \Cref{fig:schematic}.

\subsection{Proof overview: high-level}
In \Cref{sec:preliminaries}, we establish the sufficient conditions that imply open-set irreducibility and asymptotic uniformity.
In particular, the choice of $F$ enables us to analytically prove the properties of the total derivative of the products of $F$.
This key step, \Cref{lem:L0_bound}, is our main technical contribution.
We elaborate in the next subsection.

In \Cref{sec:main_results}, we establish two routes to Haar sampling. The first is to recognize that our first stage results allow us to apply results Bhattacharya~\cite[Thm.~3]{bhattacharya1972} and Anoussis--Gatzouras~\cite[Cor.~4.2]{anoussis2004}. These allow us to choose a uniform measure on the input parameter domain  and obtain an exponentially shrinking distance, with an explicit form of the asymptotic rate, from our walk endpoint distribution to the uniform Haar distribution as the number of steps in the random walk increases.
Our second approach specifies a joint distribution on the parameter pairs that yields exact Haar measure at finite length, provided the regular fiber volume is positive and finite for Haar-almost every output and the ratio of the normal Jacobian to the fiber volume is essentially bounded with respect to the uniform input measure.

\subsection{Proof overview: detailed}\label{sec:proof_overview}

To understand the argument more deeply, we should have in mind the results of Bhattacharya~\cite[Thm.~3]{bhattacharya1972} and Anoussis--Gatzouras~\cite[Cor.~4.2]{anoussis2004}, which we provide in Appendix \Cref{apdx:nec_prior_results}.

Bhattacharya's result says the following: suppose we choose elements independently from a compact, connected group $G$ described by a probability measure $\mu$.
We want to show that as we choose more elements from $G$ and combine them together using the group operation $L$ times, the resulting measure $\mu^{*L}$ converges to the (normalized) uniform random measure, which is called the (normalized) Haar measure $\mu_G$.
Bhattacharya says that if for some $L$,  $\mu^{*L}$ has a non-zero absolutely continuous component with respect to $\mu_G$,\footnote{See \Cref{def:a.c.}.} then for $L' \geq L$ the total variation distance of $\mu^{*L'}$ to $\mu_G$ decays exponentially fast in $L'-L$.

In light of these results, it remains to establish the following statements:
(1) show $G$ is compact and connected, and 
(2) bound a finite length $L$ for which $\mu^{*L}\ll\mu_G$.

The second guiding result, due to Anoussis--Gatzouras, takes Bhattacharya's theorem a step further by providing an explicit form of the asymptotic rate of convergence using representation theory. Therefore, once we have shown (1) and (2), the asymptotic rate is implied. This is how we obtain \Cref{thm:exp_conv}.
There is also an explicit, but more conservative, rate that we can obtain using Doeblin minorization~\cite[Thm.~8]{roberts2004},  that we address in \Cref{rem:doeblin}.

Once we have shown statements (1) and (2), we show in \Cref{thm:exact} how modifying the joint distribution of the input parameters yields exact Haar measure at finite length. This requires two additional hypotheses: the regular fiber volume is positive and finite for Haar-almost every output, and the ratio of the normal Jacobian to the fiber volume is essentially bounded with respect to the uniform input measure. The idea is to eliminate bias by adjusting the input sampling to compensate directly for variations in the normal Jacobian and fiber volume.

There are several necessary results for the main argument, but their proofs distract from the important flow of ideas. To keep the exposition focused, we collect these proofs in Appendix \Cref{apdx:extra_proofs}.

\Cref{fig:proof} shows the overall flow of ideas. We begin on the bottom left, flow upward and to the right, and then wrap back around to the upper left.
A purple line divides the ``Preliminaries" in the bottom with the ``Main'' results in the upper part.
Arrows denote the flow of dependencies.
In the lower left, the base definitions (\Cref{def:main}) along with the specification of $A$ and $B$ are shown in black.  
Three parallel directions open: showing $G$ is a compact Lie subgroup (orange), deriving an explicit form of the Jacobian of the length $L$ product map $U_L(x)= \prod_{j=0}^{L-1}F(x_j)$ (green), and constructing the  generated Lie subalgebra $\h$ from a linearly independent set of the generators of $F(x)$ (brown).\footnote{See \Cref{def:gen_lie_sub}.}

\Cref{prop:G_group} is the first result of the orange branch in \Cref{fig:proof}. This gives $G$ the compactness needed in statement (1), but not yet the connectedness. Connectedness will be automatically resolved in \Cref{prop:subalgebra}.
Moreover, this proof does not require the time domain $\D$ to contain negative durations, which could present a physical problem of ``evolving backwards in time'', or equivalently having to flip the sign of every term in the Hamiltonian, which may not be possible in a given physical system.

We choose a construction of $\h$ in \Cref{lem:V_m_h} to match the form of the columns of the Jacobian $J_{L,x}$. We explicitly link the two via \Cref{lem:im_J_subset_h} (green), by feeding \Cref{lem:V_m_h} into \Cref{prop:subalgebra} (orange) which we elaborate on below, in which we show that $\mathrm{im}(J_{L,x}) \subseteq \h$. 
Continuing in this line of argument, we show in \Cref{lem:jacobian_real_analytic} and \Cref{prop:submersion_almost_everywhere} (green) that once we can demonstrate a single submersion point, i.e. a choice of input parameters such that the Jacobian has full rank, then $U_L$ is a submersion almost everywhere. 

Keeping in mind that our goal is to make measure theoretic statements, we define the pushforward measure $\mu$ and the normalized Haar measure on $G$, $\mu_G$ (dark purple) branching off from the initial result about $G$.

Once a submersion point exists at length $L$, real analyticity implies that $U_L$ is a submersion almost everywhere.
The coarea formula\footnote{See \Cref{thm:coarea}.} then gives $\mu^{*L}\ll\mu_G$, completing our second technical task.
The intuition is that at a submersion point, a walker can move in all tangent directions of $G$. If $G$ is compact and connected, then this local reachability extends, through repeated steps, to the ability to reach every nonempty open subset of $G$ with positive probability, which is open-set irreducibility.

Next, we resolve connectedness.
In \Cref{prop:subalgebra} (orange), we show three main results. First, $\h$ is a Lie subalgebra of $\g = \mathrm{Lie}(G)$. Then $\h$ has a corresponding unique, connected immersed Lie group $H_\h$ by the Lie subalgebra-Lie group correspondence (\Cref{def:subalgebra}). 
Second, we show $\overline{H_\h} = G$. This implies $G$ is connected.
Third, $\h = \g$ if and only if $H_\h$ is closed. 
From \Cref{prop:h_g_submersion} (green), a necessary condition for a submersion point to exist is $\h = \g$.
To determine, given $A$ and $B$, whether $\h = \g$, we give a sufficient condition in \Cref{lem:comm_comm_sufficient}. It is not required for any of the main results, but it is an easy test to check.
Our first technical main question is now complete.

The first of our major results is to construct an explicit submersion point in finite length, which we do in \Cref{lem:L0_bound} (green). In particular, we obtain a bound on the minimum $L_0$ to yield a submersion point if $\h = \g$. This is relatively straightforward because of the structure of $F$.

In particular, $F(\phi,t)=e^{i\phi A}e^{itB}e^{-i\phi A}=e^{tC_\phi}$
satisfies $F(\phi,0)=I$ for every $\phi$. 
Consequently, when all time variables vanish, the time derivative columns of the product Jacobian are $C(\phi_j)=iH(\phi_j)$. By the definition of $r$ in \Cref{eqn:r}, we can choose $r$ linearly independent such columns.
Then, since $V_{m_0}=\g$, for any current product $P$, the vectors $\Ad_{PU_\ell(\xi)}C_\phi$, with $\ell\leq m_0$, span $\g$. Unless the Jacobian already has full rank, we can therefore choose one outside the span of its existing columns and append the $\ell$ conjugating factors followed by $F(\phi,0)$ to the point that will become a submersion point. Repeating this construction adds at least one independent column using at most $m_0+1$ additional factors each time, until full rank is reached. 
Finally, continuity of the differential allows us to replace the zero durations by sufficiently small positive ones while preserving full rank, yielding an interior submersion point at the same product length.

In contrast, if we chose instead the unconjugated form $O(\phi,t)=e^{i\phi A}e^{itB}$, residual evolutions under $A$ remain even at zero times and conjugate subsequent Jacobian columns, making this construction less direct and without, it would appear, a straightforward way to construct an explicit submersion point without imposing more structural constraints on $A$ and $B$.

At this point, the results of the next section follow. \Cref{thm:non0_ac} (dark purple) requires a submersion point almost everywhere, which we have just supplied, and uses the coarea formula as observed before. 
This implies open-set irreducibility, which we formalize in \Cref{cor:os_irred}, which answers \Cref{question:topo_irred}. 
\Cref{thm:exp_conv} establishes exponential convergence, while \Cref{thm:exact} gives exact Haar sampling under its additional hypotheses. Both address \Cref{question:conv}. When $G=\mathrm{U}(d)$, these results yield the corresponding unitary-design statements in \Cref{cor:designs}.

\section{Preliminary Results}\label{sec:prelim_results}

In this section, we prove the results in the bottom half of \Cref{fig:proof}. We begin in the bottom left corner in black by defining the main objects of our study.

We include zero evolution time for use in the construction below, but an open interval around zero would also include negative times; we therefore keep the allowed times nonnegative, use an extension when taking derivatives at zero, and later replace zero durations by sufficiently small positive ones.

\subsection{Our preliminaries}\label{sec:preliminaries}
Let $A,B$ be $d\times d$ Hermitian matrices and take $B \neq 0$.
Let $\D_A\subset\R$ be a nonempty finite union of bounded open intervals. Let $\D_B\subset[0,\infty)$ be a bounded, relatively open set that is a finite union of intervals, and assume that
\begin{equation}
    [0,\varepsilon)\subseteq\D_B
\end{equation}
for some $\varepsilon>0$.
The other components are bounded open intervals. Neither parameter domain is required to be connected.

Write $\D=\D_A\times\D_B$ for the set of allowed parameter pairs $(\phi,t)$. Its interior consists of the pairs with strictly positive allowed times:
\begin{equation}
    \D^\circ=\D_A\times(\D_B\cap(0,\infty)).
\end{equation}
Because the parameter domains are bounded and contain intervals of positive length, $\D$ has positive, finite Lebesgue area. We therefore define the uniform probability measure on $\D$ by\footnote{See \Cref{def:sigma_algebra}.}
\begin{equation}
    \lambda(E)
    :=\frac{\Leb(E)}{\Leb(\D)},
    \qquad E\in\mathcal B(\D).
\end{equation}

\begin{definition}[Main objects]\label{def:main}
Let $\phi \in \D_A$,  $\timeB\in \D_B$, and $\D=\D_A\times\D_B$. 
Write $\Ad_X Y = X Y X^{-1}$.
We express \Cref{eqn:F_first} in terms of its generator explicitly, 
\begin{equation}
\begin{gathered}
    F(\phi, \timeB) := e^{\timeB C(\phi)}, \quad C(\phi) = i \Ad_{e^{i \phi A}} B = i H(\phi), \\ F \, : \, \D \longrightarrow \mathrm{U}(d).
\end{gathered}
\end{equation}
Write $x = (x_0, x_1, \dots, x_{L-1})$ where each $x_j = (\phi_j, \timeB_j)\in\D$. Define
\begin{equation}
    U_L(x) := \prod_{j=0}^{L-1} F(x_j), \quad \quad  U_L \, : \, \D^L \longrightarrow \mathrm{U}(d).
\end{equation}
For $L\geq1$, we write $\Ucal_L$ for the set of all length $L$ products of time evolutions under conjugates of the fixed Hamiltonian, i.e. the image $U_L(\D^L)\subseteq \mathrm{U}(d)$. 

Define
\begin{equation}\label{eqn:G}
    G := \overline{\Ucal}, \quad\text{where}\quad \Ucal := \bigcup_{L \geq 1} \Ucal_L.
\end{equation}
We take $n = \dim G$, and note $\dim \D^L = 2L$. 
\end{definition}

The first result we show from the definitions in the orange branch in \Cref{fig:proof} is the structure of $G$.
As we show in the Appendix in \Cref{prop:G_group},  $G$ is a compact Lie subgroup of $\mathrm{U}(d)$.
Denote the Lie algebra of $G$ by 
\begin{equation}\label{eqn:frakg}
    \g := \Lie(G) = T_\I G,
\end{equation}
where $T_\I$ denotes the tangent space of $G$ at $\I$.
Since $G$ is compact, it has a unique normalized Haar measure we write as $\mu_G$ (\Cref{def:haar_G}).

Since $G$ is compact (Hausdorff), it has a unique normalized uniform (Haar) measure. This is because intuitively, the Haar measure assigns finite probability mass to sufficiently small neighborhoods. Compactness implies that finitely many such neighborhoods cover the whole group, so the total mass is finite. Therefore it is possible to define a measure that has a total mass $1$. In contrast, the Lebesgue measure on $\R$, which is not compact, has infinite mass. 
We define the resulting probability measure on $G$.

\begin{definition}[Pushforward measure on $G$.]\footnote{See \cite[Thm.~5.22]{shapiro2025measure}, also \cite{androma:change-of-variables-formula-for-pushforward-measures}}\label{def:pushforward}
Let $\lambda$ be a probability measure on $\D$, and let $F:\D\to G$ be measurable. The pushforward measure $\mu=F_{\#}\lambda$ on $G$ is defined by
\begin{equation}
    \mu(E):=\lambda\bigl(F^{-1}(E)\bigr),  \;
    E\in\mathcal B(G).
\end{equation}
$\mathcal B(G)$ is the Borel $\sigma$-algebra of $G$.
The $L$-fold convolution of $\mu$ satisfies
\begin{equation}
    \mu^{*L}=(U_L)_{\#}(\lambda^{\otimes L}),
\end{equation}
where $U_L(x) = \prod_{j=0}^{L-1} F_j(x_j)$.
\end{definition}

Throughout this work, we take $\lambda$ to be the normalized Lebesgue (uniform) measure on $\D$, so that $\lambda(\D)=1$.

Here, $F$ is continuous, because the matrix exponential is continuous, and has $F(\D) \subseteq G$. The preimage of an open set under a continuous function is open, and taking the preimage preserve complements and countable union, hence for every $E \in \mathcal B(G)$
\begin{gather*}
    F^{-1}(E)\in\mathcal B(\D),
\end{gather*}
so $F$ is a Borel measurable function.

With these definitions and initial result, we can already understand the broad picture of our scheme, illustrated in \Cref{fig:schematic}.
Sampling $x_0,\ldots,x_{L-1}$ independently from $\lambda$, shown as green points, produces the evolution operators $F(x_1),\ldots,F(x_L)$ with distribution $\mu$. 
Their product $U_L(x_1,\ldots,x_L)$ is the endpoint after $L$ steps of a random walk on $G$ starting at the identity, shown as the magenta line. 
We want to show that as $L$ increases, the distribution of endpoints described by the measure $\mu^{*L}$ approaches the uniform random distribution $\mu_G$, which we have shown in green dots on $G$.

We begin by finding the Jacobian of $U_L$ in order to characterize when its total derivative has full rank, which is shown as the green branch of \Cref{fig:proof}, Specifically, we define and study the normal Jacobian. To understand this geometrically, imagine a map between two topological manifolds. A topological manifold is a space that, near every point, looks like ordinary Euclidean space of a fixed dimension. For example, a sphere and a coffee mug are both topological manifolds.
Moving along a level set leaves the output of the map unchanged, while moving perpendicular to it changes the output. The normal Jacobian measures how much the map stretches or shrinks volume in these perpendicular directions.

For differential calculations, introduce the open set
\[
    \widehat\D:=\D_A\times\R
\]
and extend $U_L$ to $\widehat\D^L$ by the same product formula with real-valued times:
\[
    \widetilde U_L(x)
    :=\prod_{j=0}^{L-1}e^{t_jC(\phi_j)}.
\]
This extension is real analytic and agrees with $U_L$ on $\D^L$. It takes values in $G$: for each $\phi\in\D_A$, the inclusion $[0,\varepsilon)\subseteq\D_B$ and the group property of $G$ imply that $e^{tC(\phi)}\in G$ for every $t\in\R$.

We regard $\D^L$ as a manifold with corners and identify its tangent spaces with $\R^{2L}$. At boundary points, all derivatives below are computed using $\widetilde U_L$.

\begin{definition}[Right-trivialized differential]
For $x\in\D^L$, define
\begin{equation}\label{def:right_trivialized_differential}
\begin{gathered}
    J_{L,x}:\R^{2L}\longrightarrow\g,\\
    J_{L,x}(v)
    :=D\widetilde U_L(x)[v]\,\widetilde U_L(x)^{-1}.
\end{gathered}
\end{equation}
For $x\in(\D^\circ)^L$, this is the ordinary right-trivialized differential of $U_L$.
\end{definition}

\begin{definition}[Normal Jacobian]\label{def:normal_jacob}
We equip $\g$ with the real Hilbert--Schmidt inner product
\[
    \langle X,Y\rangle
    :=\operatorname{Re}\tr(X^\dagger Y),
\]
which induces a bi-invariant (invariant under left or right multiplication) Riemannian metric on $G$. We represent $J_{L,x}$ using the standard Euclidean basis in the parameter space and an orthonormal basis of $\g$ for this inner product, so that $J_{L,x}^{T}$ represents the adjoint of the differential.
Define
\begin{equation}\label{eqn:normal_jacobian}
    \Lambda_L(x) := \det \!\left(J_{L,x}J_{L,x}^{T}\right).
\end{equation}
as the squared normal Jacobian of $U_L$ at $x$. Hence $\sqrt{\Lambda_L(x)}$ is the normal Jacobian of $U_L$ at $x$.
The differential $J_{L,x}$ is surjective if and only if $\Lambda_L(x)>0$.
We call $x$ a submersion point of $U_L$ when $x\in(\D^\circ)^L$ and $\Lambda_L(x)>0$.
At boundary points, $\Lambda_L(x)>0$ instead describes full rank of the extended map $\widetilde U_L$.
\end{definition}

The Jacobian represents the differential as a real matrix. The differential is therefore surjective if and only if the Jacobian has full row rank.

Having defined a submersion point and given an explicit expression for the Jacobian, we continue along the green branch of \Cref{fig:proof} and prove that if $U_L$ is a submersion at one point in $\D^L$, then it is a submersion almost everywhere. We use the real analytic zero set \Cref{prop:real_analytic_zero} to extend statements about a single point to almost every point in $\D^L$.

\begin{remark}[Explicit coordinate representation of the Jacobian]\label{rem:jacobian}
For $x_j=(\phi_j,\timeB_j)$, directly taking the derivative yields
\begin{align}
    \frac{\partial F(\phi_j,\timeB_j)}{\partial \timeB_j}F(\phi_j,\timeB_j)^{-1}
    &=C(\phi_j),
    \label{eqn:partial_t}\\
    \frac{\partial F(\phi_j,\timeB_j)}{\partial\phi_j}F(\phi_j,\timeB_j)^{-1}
    &=
    iA-i\Ad_{e^{i\phi_j A}}\Ad_{e^{i\timeB_j B}}A 
    \label{eqn:partial_phi}
\end{align}
Define
\begin{equation*}
\Lprod{j}:=\prod_{k=0}^{j-1}F(x_k), \; \; \;
   \Rprod{j}:=\prod_{k=j+1}^{L-1}F(x_k),
\end{equation*}
where an empty product is taken to be $\I$. Since
$U_L(x)=\Lprod{j}F(x_j) \Rprod{j}$, we obtain
\begin{gather*}
    J_{L,x}\left(\frac{\partial}{\partial \theta_j}\right)
    =
    \Ad_{\Lprod{j}}\left(
        \frac{\partial F(x_j)}{\partial \theta_j}
        F(x_j)^{-1}
    \right),
    \\
    \theta_j\in\{\phi_j,\timeB_j\}.
\end{gather*}
Explicitly, substituting in ~\Cref{eqn:partial_phi}:
\begin{equation}\label{eqn:jacobian_formula}
\begin{gathered}
     J_{L,x}\left(\frac{\partial}{\partial \timeB_j}\right)
    =\Ad_{\Lprod{j}}C(\phi_j),
\end{gathered}
\end{equation}
\end{remark}

In \Cref{fig:proof}, we want to connect the orange results about $G$ with the green discussion about the Jacobian. The aim is to construct a specific submerison point.
To do this, we introduce a new branch from the root definitions: characterizing the generated Lie subalgebra $\h$ from the set of linearly independent generators of $F$, shown in brown.

Write the set 
\begin{equation}\label{eqn:Scal}
    \mathcal S =  \{ C(\phi) \, : \, \phi \in \D_A \},
\end{equation}
define 
\begin{equation}\label{eqn:r}
    r := \dim\Span_\R \mathcal S,
\end{equation}
and define the generated Lie subalgebra\footnote{See \Cref{def:gen_lie_sub}.}
\begin{equation}\label{eqn:h_subalgebra}
    \h := \Lie_\R  \langle \mathcal S \rangle.
\end{equation}

We begin connecting the three early branches in \Cref{fig:proof}. We first connect the brown Lie subalgebra branch and orange $G$ branch by relating $\h$ to $\g$ through \Cref{prop:subalgebra}. 
In the Appendix, we show \Cref{lem:im_J_subset_h} which allows us to connect these branches by relating the image of columns of the Jacobian to $\h$. Through these connections, we obtain a necessary condition to achieve a submersion point of $U_L$ in \Cref{prop:h_g_submersion}.

\begin{definition}[Span of conjugated generators]
\label{def:V_m}
Recall $\mathcal S$ from \Cref{eqn:Scal}.
For $m\in\mathbb N_0$, define
\begin{multline}\label{eqn:V_m}
    V_m
    :=
    \Span_{\mathbb R}
    \Bigl\{
        \Ad_{U_\ell(\xi)}C(\phi): \\
        \phi\in\D_A,\;
        0\leq\ell\leq m,\;
        \xi\in
        \bigl(\D_A\times[0,\varepsilon)\bigr)^\ell
    \Bigr\}.
\end{multline}
Define $U_0(\varnothing)=\I$. Consequently,
\begin{gather*}
    V_0
    =\Span_{\mathbb R}\mathcal S,
    \quad
    \dim V_0=r.
\end{gather*}
\end{definition}

\begin{lemma}[Explicit form of $\h$]
\label{lem:V_m_h}
Let $(V_m)_{m\geq0}$ be as in \Cref{def:V_m}. There exists a finite
\begin{equation}\label{eqn:m0_def}
    m_0 := \min\{m\geq0:V_m=V_{m+1}\}
\end{equation}
satisfying
\begin{gather*}
    m_0\leq\dim\mathfrak h-r\leq d^2-r.
\end{gather*}
Moreover,
\begin{gather*}
    V_{m_0}=\mathfrak h.
\end{gather*}
\end{lemma}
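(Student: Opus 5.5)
The plan is to show that $(V_m)$ is an increasing chain of subspaces of $\h$, that it stabilizes as soon as two consecutive terms agree, and that the stable subspace is a Lie subalgebra containing $\mathcal S$, so it equals $\h$.

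\textbf{Step 1: $V_m\subseteq\h$ and monotonicity.} Each $F(\phi,t)=e^{tC(\phi)}$ has $C(\phi)\in\mathcal S\subseteq\h$. Since $\Ad_{e^{tC}}=e^{t\,\ad_C}$ and $\ad_C$ preserves $\h$, every $\Ad_{U_\ell(\xi)}$ maps $\h$ into itself. Hence $\Ad_{U_\ell(\xi)}C(\phi)\in\h$ and $V_m\subseteq\h$. Monotonicity $V_m\subseteq V_{m+1}$ is immediate, because the index set for $V_{m+1}$ contains that for $V_m$. Also $\h\subseteq\mathfrak u(d)$, so $\dim\h\le d^2$.

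\textbf{Step 2: recursive description and propagation of stabilization.} For $\xi=(\xi_0,\xi')$ with $\xi'$ of length $\ell$, we have $U_{\ell+1}(\xi)=F(\xi_0)U_\ell(\xi')$. This gives
\[
V_{m+1}=V_0+\Span_\R\bigl\{\Ad_{F(y)}V_m:\ y\in\D_A\times[0,\varepsilon)\bigr\},
\]
using linearity of $\Ad$. So $V_{m+1}$ is a fixed function of $V_m$. Consequently, if $V_m=V_{m+1}$ then $V_{m+1}=V_{m+2}$, and the chain is constant from that point on.

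As long as the chain has not stabilized, each step raises the dimension by at least $1$. The dimension starts at $\dim V_0=r$ and is bounded by $\dim\h$. Therefore the minimum $m_0$ exists and satisfies $m_0\le\dim\h-r\le d^2-r$.

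\textbf{Step 3: $V:=V_{m_0}$ equals $\h$.} Since $V=V_{m_0+1}$, we have $\Ad_{e^{tC(\phi)}}V\subseteq V$ for all $\phi\in\D_A$ and $t\in[0,\varepsilon)$. Take the one-sided derivative at $t=0$; $V$ is closed, being finite-dimensional. This gives $\ad_{C(\phi)}V\subseteq V$ for every $C(\phi)\in\mathcal S$.

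Now let $\mathfrak n:=\{X\in\h:[X,V]\subseteq V\}$. By the Jacobi identity $\mathfrak n$ is a Lie subalgebra, and it contains $\mathcal S$. Hence $\mathfrak n=\h$ by minimality of the generated subalgebra. Since $V\subseteq\h$, this yields $[V,V]\subseteq V$, so $V$ is a Lie subalgebra. It contains $V_0\supseteq\mathcal S$, hence $V\supseteq\h$. Combined with Step 1, $V=\h$.

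The only delicate point is the derivative step in Step 3. It uses only nonnegative times, and the one-sided derivative suffices because $V$ is a closed linear subspace.
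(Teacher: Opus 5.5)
Your proof is correct and follows the paper's argument: $\Ad$-invariance of $\h$ via Campbell's lemma gives $V_m\subseteq\h$, dimension counting on the increasing chain bounds $m_0$, and the one-sided derivative at $t=0$ of $\Ad_{F(\phi,t)}V_{m_0}\subseteq V_{m_0+1}=V_{m_0}$ gives $\ad_{C(\phi)}V_{m_0}\subseteq V_{m_0}$. The only deviations are minor. You finish with the normalizer $\{X\in\h:[X,V]\subseteq V\}$ instead of the paper's appeal to the fact that nested brackets of elements of $\mathcal S$ span $\h$; your route is self-contained, since it is essentially the Jacobi-identity proof of that fact. Your recursion for $V_{m+1}$ also shows the chain is constant beyond $m_0$, which the paper does not need.
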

\begin{proof}
Since $\h$ is a Lie subalgebra containing $C(\phi)$ for $\phi \in \D_A$, it is closed under taking brackets with $C(\phi)$:
\begin{gather*}
\ad_{C(\phi)}(\h)\subseteq \h \quad \Longrightarrow \quad e^{\timeB\ad_{C(\phi)}}(\h) \subseteq \h,
\end{gather*}
since $\h$ is finite dimensional.
By Campbell's lemma (\Cref{lem:campbell}),
\begin{gather*}
\Ad_{F(\phi,\timeB)} =\Ad_{e^{\timeB C(\phi)}} =e^{\timeB\ad_{C(\phi)}} \\ \Longrightarrow \quad \Ad_{F(\phi,\timeB)}(\h) \subseteq \h,
\end{gather*}
Since $\Ad_{AB}=\Ad_A \Ad_B$, applying the prior result iteratively yields
\begin{gather*}
\Ad_{U_\ell(\xi)} (\h) \subseteq \h,
\end{gather*}
for every choice of $\xi \in (\D_A \times [0, \varepsilon))^{\ell}$ appearing in the definition of $V_m$, including trivially at $\ell=0$. 
In particular, since $ C(\phi) \in \h $,
\begin{gather*}
\Ad_{U_\ell(\xi)}C(\phi)\in\mathfrak h.
\end{gather*}
Taking the span over $\R$ of $\Ad_{U_\ell(\xi)}C(\phi)$ gives
\begin{gather*}
V_m\subseteq\h.
\end{gather*}
Furthermore, since $U_{\ell}(\D^\ell) \subseteq U_{\ell+1}(\D^{\ell+1})$,
\begin{equation}\label{eqn:Vm_subseteq_h}
     V_0\subseteq V_1\subseteq V_2\subseteq\cdots\subseteq\mathfrak h.
\end{equation}

Every strict inclusion increases the dimension by at least one. There can therefore be at most $\dim \h-r$ strict inclusions. Hence $m_0 = \min \{m \geq 0 \, : \, V_m = V_{m+1} \}$, as defined in \Cref{eqn:m0_def}, exists and
\begin{gather*}
    m_0 \leq \dim \h - r \leq d^2-r,
\end{gather*}
where the final inequality follows from $\h \leq \mathfrak \mathrm{U}(d)$ and $\dim \mathfrak \mathrm{U}(d)=d^2$.

We now prove the reverse, $V_{m_0} \supseteq \h$. 
Let $X\in V_{m_0}$, where $m_0$ is defined as previously, and $C(\phi) \in \mathcal S$.
Using Campbell's lemma (\Cref{lem:campbell}),
\begin{gather*}
\Ad_{F(\phi,t)} X =e^{\timeB\ad_{C(\phi)}}X \in V_{m_0+1} = V_{m_0}, \\ 0\leq\timeB<\varepsilon.
\end{gather*}
Since $V_{m_0}$ is a vector space,
\begin{gather*}
\frac{e^{\timeB\ad_{C(\phi)}}X-X}{\timeB}\in V_{m_0},
\\ 0<\timeB<\varepsilon.
\end{gather*}
Moreover, $V_{m_0}$ is finite-dimensional and therefore closed. Taking the limit as $\timeB\to 0^+$, which is allowed because $[0, \varepsilon) \subseteq \D_B$, yields
\begin{gather*}
[C(\phi),X]
=\lim_{\timeB\to0^+}
\frac{e^{\timeB\ad_{C(\phi)}}X-e^{0\ad_{C(\phi)}} X}{\timeB}
\in V_{m_0}.
\end{gather*}
Since $X$ and $C(\phi)$ were arbitrary, for every $C(\phi) \in \mathcal S$,
\begin{gather*}
\ad_{C(\phi)}(V_{m_0})\subseteq V_{m_0}.
\end{gather*}
Hence $\mathcal S \subseteq V_{m_0}$ contains  and, by repeated application of the inclusion $k$ times, every nested bracket
\begin{gather*}
[C(\phi_1),[C(\phi_2),\ldots,[C(\phi_{k-1}),C(\phi_k)]\ldots]] \in V_{m_0}.
\end{gather*}
These brackets, together with the generators $\mathcal S$, span the generated Lie subalgebra $\h$. It follows,
\begin{gather*}
\h \subseteq V_{m_0}.
\end{gather*}
\end{proof}

\begin{proposition}[Conditions for $\h = \g$ when $\h \subseteq \g$]\label{prop:subalgebra}
$\h$ is a Lie subalgebra of $\g$. Let $H_\h$ be the unique, connected Lie group corresponding to $\h$.
Then, $G = \overline{H_\h}$ and is connected. 
Moreover, $\h = \g \iff H_\h = \overline{H_\h}$.
\end{proposition}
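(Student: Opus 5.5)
We prove the three assertions in order, working inside the compact Lie group $G$ from \Cref{prop:G_group}.

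\emph{Step 1: $\h\subseteq\g$.} For each $\phi\in\D_A$ and $t\in[0,\varepsilon)$ we have $F(\phi,t)=e^{tC(\phi)}\in\Ucal\subseteq G$. As noted after the extension $\widetilde U_L$, the group property then gives $e^{tC(\phi)}\in G$ for all $t\in\R$. Because $G$ is a closed subgroup of $\mathrm{U}(d)$, $\g=\{X: e^{tX}\in G\ \forall t\}$, so $C(\phi)\in\g$. Thus $\mathcal S\subseteq\g$. Since $\g$ is a Lie subalgebra containing $\mathcal S$, the definition $\h=\Lie_\R\langle\mathcal S\rangle$ as an intersection gives $\h\subseteq\g$.

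\emph{Step 2: $G=\overline{H_\h}$, and $G$ is connected.} Let $H_\h\le G$ be the connected immersed subgroup with Lie algebra $\h$. Every one-parameter group $e^{tC(\phi)}$ lies in $H_\h$, because $C(\phi)\in\h$. Hence each factor $F(x_j)$ lies in $H_\h$, and so does every product $U_L(x)$. This gives $\Ucal\subseteq H_\h$ and therefore $G=\overline{\Ucal}\subseteq\overline{H_\h}$.

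For the reverse inclusion, note that $H_\h$ is generated as a group by $\exp(\h)$. By \Cref{lem:V_m_h}, $\h=V_{m_0}$ is spanned by elements $\Ad_{U_\ell(\xi)}C(\phi)$, and each satisfies
\[
\exp\bigl(s\,\Ad_{U_\ell(\xi)}C(\phi)\bigr)=U_\ell(\xi)\,e^{sC(\phi)}\,U_\ell(\xi)^{-1}\in G.
\]
This shows $\exp$ of each spanning element lies in $G$. The cleaner route, though, is that $H_\h$ is in fact generated by the one-parameter groups $\{e^{tC(\phi)}\}$ alone. The subgroup $K$ they generate is path connected, and its Lie algebra contains $\mathcal S$ and is closed under brackets, so by Yamabe/Kuranishi it is an immersed Lie subgroup with algebra $\h$. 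Hence $K=H_\h$, and $K\subseteq G$ since $G$ is a group containing each $e^{tC(\phi)}$. Therefore $H_\h\subseteq G$, and since $G$ is closed, $\overline{H_\h}\subseteq G$.

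Connectedness of $G$ follows because the closure of a connected set is connected.

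\emph{Step 3: $\h=\g\iff H_\h$ is closed.}

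($\Rightarrow$) If $\h=\g$, then $H_\h$ is the identity component $G_0$ of $G$, which is open and closed in $G$. By Step 2, $G=\overline{H_\h}=G_0$, so $H_\h$ is closed.

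($\Leftarrow$) If $H_\h$ is closed, then $H_\h=\overline{H_\h}=G$. A closed subgroup is an embedded Lie subgroup by Cartan's theorem. The immersed structure with algebra $\h$ coincides with the embedded one, because a connected immersed subgroup that is closed is embedded. Hence $\Lie(G)=\h$, that is, $\g=\h$.

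The main obstacle is the reverse inclusion $H_\h\subseteq G$ in Step 2. Specifically, one must justify that the group generated by the one-parameter subgroups $e^{\R C(\phi)}$ is exactly $H_\h$, and not merely a dense subgroup of it. I would handle this by first trying to cite the standard fact that the subgroup generated by $\exp$ of a generating set of a Lie algebra $\h$ equals the connected subgroup $H_\h$. If that citation is unavailable, I would argue via \Cref{lem:V_m_h}: the map
\[
(s_1,\dots,s_k)\mapsto \prod_{i=1}^k \exp\bigl(s_i\,\Ad_{U_{\ell_i}(\xi_i)}C(\phi_i)\bigr),
\]
built from a basis of $V_{m_0}=\h$, is a local diffeomorphism onto a neighborhood of $\I$ in $H_\h$ with image in $G$.
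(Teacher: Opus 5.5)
Your proof is correct and follows essentially the paper's route: $C(\phi)\in\g$, then $\Ucal\subseteq H_\h\le G$ with $G$ closed, then uniqueness in the subalgebra--subgroup correspondence for $\Rightarrow$ and $H_\h=G$ for $\Leftarrow$; your remark that a closed connected immersed subgroup is embedded also fills in a step the paper leaves implicit when it writes $\h=\Lie(H_\h)=\Lie(G)$. The step you flag as the main obstacle is in fact immediate, because applying \Cref{def:subalgebra} to the Lie group $G$ itself, with $\h\subseteq\g=\Lie(G)$, produces $H_\h$ directly as a subgroup of $G$ (as your own phrase ``$H_\h\le G$'' already presupposes), so the Yamabe and local-diffeomorphism arguments are valid but unnecessary.
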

\begin{proof}
For every $\phi\in\D_A$ and $\timeB\in[0,\varepsilon)$, $e^{\timeB C(\phi)}\in\Ucal_1\subseteq G.$
Since $G$ is a group by \Cref{prop:G_group}, it also contains $e^{-\timeB C(\phi)}=(e^{\timeB C(\phi)})^{-1}.$
Hence the smooth curve
\begin{equation*}
    \gamma:(-\varepsilon,\varepsilon)\longrightarrow G,
    \;
    \gamma(\timeB)=e^{\timeB C(\phi)},
\end{equation*}
is well defined and satisfies $\gamma(0)=\I$. Hence
\begin{equation}
    C(\phi)=\dot{\gamma}(0)\in T_\I G=\g.
\end{equation}
Since $\g$ is a real Lie algebra containing every $C(\phi)$, it contains $\h=\Lie_{\mathbb R}\langle C(\phi):\phi\in\D_A\rangle$.
As a Lie subalgebra, $\h$ has a unique connected, immersed Lie subgroup $H_\h \leq G$~\Cref{def:subalgebra}. 

We want to prove first that $\overline{H_\h} = \overline{\Ucal}$ using that $[0, \varepsilon) \subseteq \D_B$.
Since $C(\phi) \in \h$, the corresponding one-parameter subgroup is contained in $H_\h$:
\begin{equation*}
    e^{\timeB C(\phi)}\in H_\h
    \\
    \quad \forall \,\phi\in\D_A, \, \timeB\in\D_B.
\end{equation*}
Thus, for every 
\[
x  = ((\phi_0, \timeB_0), (\phi_1, \timeB_1), \dots, (\phi_{L-1}, \timeB_{L-1}))\in\D^L
\]
\begin{equation*}
    U_L(x)
      =\prod_{j=0}^{L-1}e^{\timeB_j C(\phi_j)}
      \in H_\h,
\end{equation*}
because $H_\h$ is a subgroup and hence closed under every finite product of its elements. Therefore
\begin{equation*}
    \forall \, L \geq 1, \quad \Ucal_L\subseteq H_\h.
\end{equation*}
Taking the union over all finite lengths $L\geq1$, and then taking closures, gives
\begin{equation*}\label{eqn:Gamma_H}
    \Ucal\subseteq H_\h, \quad \quad \overline{\Ucal}\subseteq\overline{H_\h}.
\end{equation*}

On the other hand, since $H_\h \leq G$ and $G$ is closed, 
\begin{equation*}\label{eqn:H_Gamma}
    \overline{H_\h} \subseteq G = \overline{\Ucal}.
\end{equation*}
We have therefore
\begin{equation}\label{eqn:Hh_closure}
    \overline{\Ucal} = \overline{H_\h} = G.
\end{equation}
Since $H_{\h}$ is connected and its inclusion into $\mathrm{U}(d)$ is continuous, its closure $G=\overline{H_{\h}}$ is connected.

To finish the proof, we need to show $\h = \g  \iff H_\h = \overline{H_\h}$ using connectedness and uniqueness.
Suppose $\h=\g$. Then $H_\h$ and $G$ are connected immersed Lie subgroups of $\mathrm{U}(d)$ with the same Lie algebra. By uniqueness in the Lie subalgebra--subgroup correspondence, \Cref{def:subalgebra}, we obtain
\[
    H_\h = G = \overline{H_\h}.
\]
Conversely, suppose $H_\h=\overline{H_\h}$. Then \Cref{eqn:Hh_closure} gives $H_\h=G$, and therefore
\[
    \h = \Lie(H_\h) = \Lie(G) = \g.
\]
Thus $\h=\g$ if and only if $H_\h=\overline{H_\h}$.
\end{proof}

\begin{proposition}\label{prop:h_g_submersion}
    $\h = \g$ is necessary for $U_L$ to be a submersion.
\end{proposition}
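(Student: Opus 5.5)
We argue by contraposition: assume $U_L$ has a submersion point $x$ for some $L$, and deduce $\h=\g$. The key ingredient is the inclusion $\operatorname{im}(J_{L,x})\subseteq\h$ (this is \Cref{lem:im_J_subset_h} in the proof flow). We check it directly from \Cref{rem:jacobian}.

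For the time derivative, \Cref{eqn:jacobian_formula} gives the column $\Ad_{\Lprod{j}}C(\phi_j)$. The proof of \Cref{lem:V_m_h} shows $\Ad_{F(\phi,t)}(\h)\subseteq\h$ for $t\in[0,\varepsilon)$. The same argument via Campbell's lemma, $\Ad_{e^{tC(\phi)}}=e^{t\ad_{C(\phi)}}$, works for every real $t$, since $\h$ is $\ad_{C(\phi)}$-invariant. Hence $\Ad_{\Lprod{j}}$ preserves $\h$, and the $t_j$-column lies in $\h$.

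For the $\phi_j$-derivative, we avoid \Cref{eqn:partial_phi} directly, since $A$ itself need not lie in $\h$. Instead we use the quantity
\[
\frac{\partial F}{\partial\phi}F^{-1}=\int_0^{t}\Ad_{e^{sC(\phi)}}\,\partial_\phi C(\phi)\,ds,
\]
which is the standard formula for the derivative of the exponential map. Here $\partial_\phi C(\phi)=\lim_{h\to0}(C(\phi+h)-C(\phi))/h$. This limit lies in $V_0=\Span_\R\mathcal S\subseteq\h$, because $\D_A$ is open and $V_0$ is a closed finite-dimensional subspace. The integrand lies in $\h$ by invariance, so the integral does too. Applying $\Ad_{\Lprod{j}}$ keeps it in $\h$. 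Thus every column of $J_{L,x}$ lies in $\h$, and by linearity $\operatorname{im}J_{L,x}\subseteq\h$.

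Now suppose $x$ is a submersion point, so $\Lambda_L(x)>0$ and $J_{L,x}:\R^{2L}\to\g$ is surjective. Then $\g=\operatorname{im}J_{L,x}\subseteq\h$. Combined with $\h\subseteq\g$ from \Cref{prop:subalgebra}, this gives $\h=\g$. Equivalently, if $\h\subsetneq\g$, then $\Lambda_L\equiv0$ for every $L$, and no submersion point exists.

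The main obstacle is the $\phi_j$-column. Written as $iA-i\Ad_{e^{i\phi A}}\Ad_{e^{itB}}A$, it is not obviously in $\h$. The integral (Duhamel) representation above, together with the observation that $\partial_\phi C(\phi)$ is a limit of difference quotients of elements of $\mathcal S$, resolves this.
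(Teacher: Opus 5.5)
Your proof is correct, and its top-level logic matches the paper's. Both reduce the claim to $\operatorname{im}J_{L,x}\subseteq\h$, then combine surjectivity onto $\g$ with $\h\subseteq\g$ from \Cref{prop:subalgebra}; the paper phrases this last step as $\dim\h\geq\operatorname{rank}J_{L,x}=\dim\g$.

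Where you differ is in how you obtain the key inclusion $\operatorname{im}J_{L,x}\subseteq\h$.

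The paper's \Cref{lem:im_J_subset_h} argues geometrically. Since $C(\phi)\in\h$, each factor $e^{tC(\phi)}$, and hence $U_L$, is a smooth map into the immersed subgroup $H_\h$. So $DU_L(x)$ lands in $T_{U_L(x)}H_\h$, which right translation carries onto $\h$.

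You instead verify the columns directly. The $t_j$-columns follow from $\Ad_{e^{tC(\phi)}}=e^{t\ad_{C(\phi)}}$ and the $\ad_{C(\phi)}$-invariance of $\h$. The $\phi_j$-columns follow from the Duhamel formula $\partial_\phi F\,F^{-1}=\int_0^t\Ad_{e^{sC(\phi)}}\partial_\phi C(\phi)\,ds$, together with $\partial_\phi C(\phi)\in V_0$, which you get from openness of $\D_A$ and closedness of $V_0$. Explicitly, $\partial_\phi C(\phi)=[iA,C(\phi)]$, which is consistent with $iA-\Ad_F(iA)=\int_0^t\Ad_{e^{sC(\phi)}}[iA,C(\phi)]\,ds$.

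The paper's route is shorter. However, it tacitly uses that $U_L$ is smooth as a map into $H_\h$ with its own, possibly non-embedded, manifold structure. This is true, for example because $e^{tC(\phi)}=\exp_{H_\h}(tC(\phi))$ with $(\phi,t)\mapsto tC(\phi)$ smooth into $\h$, but the paper does not spell it out.

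Your route needs no manifold structure on $H_\h$, only the linear subspace $\h$. It also makes explicit why the $\phi$-columns lie in $\h$, even though the expression in \Cref{eqn:partial_phi} involves $A$, which need not belong to $\h$.
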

\begin{proof}
First, we prove
  \begin{gather*}
  \g = \h \iff \dim \h = \dim \g.
  \end{gather*}
The forward direction is immediate. On the other hand, \Cref{prop:subalgebra} gives $\h\subseteq\g$. Since these Lie algebras are
finite-dimensional, equality of their dimensions implies $\ h=\g$.

By \Cref{lem:im_J_subset_h}, $ \dim \h \geq \operatorname{rank}J_{L,x}$. If $U_L$ has a submersion point, then $\operatorname{rank}J_{L,x} = \dim \g$, which as we showed above implies $\h = \g$. 
\end{proof}

In the Appendix in \Cref{lem:comm_comm_sufficient}, we prove a sufficient condition to guarantee $\h = \g$: any matrix commuting with both $A$ and $B$ must be proportional to the identity. In \Cref{fig:proof} this appears on the lower right as a dashed line to indicate that it is a test conducted on $A$ and $B$ and is not required for the remainder of the overall proof.

\section{Main results}\label{sec:main_results}

Recall from \Cref{sec:proof_overview} that there are two primary technical results we wish to establish: (1) that $G$ is connected and compact, and (2) that there exists some finite $L$ such that $\mu^{*L} \ll \mu_G$.
From \Cref{prop:G_group}, we know $G$ is a compact Lie subgroup of $\mathrm{U}(d)$.
From \Cref{prop:subalgebra}, $\overline{H_\h} = G$, which implies $G$ is connected. From \Cref{prop:h_g_submersion}, $\h = \g$ is a necessary condition for a submersion point of $U_L$ to exist.
We have all the ingredients to establish the open-set irreducibility of the random walk whose $L$-step evolution is described by $U_L$, which would answer \Cref{question:topo_irred}.
The first step is to show that there exists a submerison point of $U_L$ for some finite $L$.

The coarea formula (\Cref{thm:coarea}) gives a geometric interpretation of this step. Where the differential of a map has full rank, the volume of a region in the domain can be calculated by integrating the reciprocal of the normal Jacobian over each level-set, or fiber, then integrating over the map's output values. 
The normal Jacobian measures how the map scales volume in directions perpendicular to the level sets. Taking the reciprocal of the normal Jacobian accounts for the fact that, when the map expands more in directions perpendicular to the level sets, less volume in those directions is needed to produce the same output volume.

A submersion point has a neighborhood on which the differential is surjective, so this description applies locally. The resulting output density reflects a balance between fiber size and transverse expansion: larger fibers or smaller normal Jacobians favor higher density. Uniformity requires this balance to yield a constant density relative to the Haar measure.

If we can construct a submersion point in a finite length, \Cref{prop:submersion_almost_everywhere} in the Appendix guarantees that almost every input point is a submersion point. The coarea description therefore applies to the entire pushforward measure, instead of only to the contribution from a neighborhood of our constructed point. Applying this formula then shows that the entire pushforward of input volume is absolutely continuous with respect to Haar measure, which implies open-set irreducibility.

\begin{definition}[Minimum length $L_0$ for a submersion point]
\label{def:L0}
Define
\[
    L_0:=
    \min\left\{
        L\geq1:
        \exists x\in(\D^\circ)^L
        \text{ with }\Lambda_L(x)>0
    \right\},
\]
with $L_0=\infty$ if this set is empty.
\end{definition}

\begin{lemma}[Bounding the minimum submersion length]
\label{lem:L0_bound}
Let $L_0$ be as in \Cref{def:L0}. If  $\h = \g$, $U_{L_0}$ has a submersion point in $\D^{L_0}$ with
\begin{gather*}
    \left\lceil\frac n2\right\rceil \leq L_0 \leq r+(n-r)(n-r+1) \leq d^4.
\end{gather*}
\end{lemma}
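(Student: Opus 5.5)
The plan has three parts: a dimension count for the lower bound, a greedy construction at boundary points for the upper bound, and a perturbation into the interior.

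\emph{Lower bound.} The differential $J_{L,x}$ maps $\R^{2L}$ into $\g$, which has dimension $n$. Surjectivity therefore requires $2L\geq n$, which gives $L_0\geq\lceil n/2\rceil$.

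\emph{Upper bound: greedy construction.} I will build a point $x$ in $\D_A\times[0,\varepsilon)$ at which the extended map $\widetilde U_L$ has full rank. The construction uses only time columns, via \Cref{eqn:jacobian_formula}: the $t_j$-column is $\Ad_{\Lprod{j}}C(\phi_j)$.

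First, I choose $\phi_0,\dots,\phi_{r-1}$ so that the $C(\phi_j)$ are linearly independent, which is possible by \Cref{eqn:r}, and set all $t_j=0$. Every prefix product is then $\I$, so these $r$ columns are exactly the $C(\phi_j)$, and the rank is $r$.

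Now suppose the current product has length $L$, rank $k<n$, and current total product $P=U_L(x)$. By \Cref{lem:V_m_h} together with $\h=\g$, we have $V_{m_0}=\g$. Since $\Ad_P$ is invertible on $\g$, the vectors $\Ad_P\Ad_{U_\ell(\xi)}C(\phi)$ with $\ell\leq m_0$ and $\xi\in(\D_A\times[0,\varepsilon))^\ell$ still span $\g$. Hence one of them lies outside the current column span. I append the $\ell$ factors $F(\xi_i)$, followed by $F(\phi,0)$. The new $t$-column for the last factor is $\Ad_{P U_\ell(\xi)}C(\phi)$.

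The earlier columns are unchanged, because each column depends only on its prefix product. The appended factors only add columns, so the rank cannot drop, and the rank rises by at least one. Each step costs at most $m_0+1$ factors, and at most $n-r$ steps are needed. This gives $L\leq r+(n-r)(m_0+1)$. Using $m_0\leq \dim\h-r=n-r$ from \Cref{lem:V_m_h}, we get $L\leq r+(n-r)(n-r+1)$. Finally, $r+(n-r)(n-r+1)\leq n^2\leq d^4$ because $n\leq d^2$; the first inequality holds since $r(n-r)\geq r$ whenever $r<n$.

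\emph{Passing to the interior.} The constructed point has some zero times, and the $\xi$ factors may have $t=0$, so it lies in $(\D_A\times[0,\varepsilon))^L$ rather than in $(\D^\circ)^L$. The map $\widetilde U_L$ is real analytic on $\widehat\D^L$, so $\Lambda_L$ is continuous there. Full rank at the constructed point $x$ therefore persists on an open neighborhood. That neighborhood meets $(\D_A\times(0,\varepsilon))^L\subseteq(\D^\circ)^L$, so we can perturb all zero times to small positive ones and keep full rank. This yields an honest submersion point at the same length, and hence a bound on $L_0$.

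\emph{Main obstacle.} The main technical care is in the bookkeeping that appended factors do not alter the earlier columns. This needs $\Ad_{\Lprod{j}}$ to depend only on the prefix product, which is exactly why the conjugated form $F$ is used. One must also confirm that $V_{m_0}$ is spanned by products with times in $[0,\varepsilon)$, so that the $\xi$ factors are admissible.
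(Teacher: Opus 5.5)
Your proposal is correct and follows the paper's proof essentially step for step. It uses the dimension count $2L\geq n$ for the lower bound. For the upper bound it appends $\ell\leq m_0$ factors plus $F(\phi,0)$, using $\Ad_P(V_{m_0})=\g$ to add one new time column per step; earlier columns are unchanged since they depend only on prefix products. Continuity of $\Lambda_L$ then replaces zero durations by small positive ones. The only differences are cosmetic: the paper closes with $r+(n-r)(n-r+1)=n+(n-r)^2\leq n+(n-1)^2\leq n^2$, and you are slightly more explicit than the paper that $P$ is the current total product at each step.
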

\begin{proof}
By \Cref{prop:h_g_submersion}, $\h=\g$ is necessary. For sufficiency, assume $\h=\g$.
We first construct a point, possibly with zero time coordinates, at which the extended differential has full rank. We then perturb this point into the interior of the physical parameter domain.

Recall that the tangent space $T_x\D^L$ and the Lie algebra $\g$ are real vector spaces of dimensions $2L$ and $n$, respectively. Thus, we can represent the right-trivialized differential by a real matrix:
\begin{equation}
    J_{L,x}\in\mathbb R^{n\times 2L}.
\end{equation}
Hence $J_{L,x}$ cannot have full rank unless $2L\geq n$. This gives the lower bound:
\begin{equation}
    L_0\geq\left\lceil\frac n2\right\rceil.
\end{equation}

To obtain the upper bound, we construct a submersion point. Recall \Cref{eqn:jacobian_formula}:
\begin{gather*}
J_{L,x}\left(\frac{\partial}{\partial \timeB_j}\right)
=\Ad_{\Lprod{j}}C(\phi_j),
\\
\Lprod{j}=\prod_{\ell=0}^{j-1}F(x_\ell).
\end{gather*}
Since $\dim V_0=\dim \Span_{\R}\mathcal S=r$, where $\mathcal S$ is given by \Cref{eqn:Scal}, we can choose \(\phi_0,\ldots,\phi_{r-1}\in\D_A\) such that
$\{C_{\phi_0},\ldots,C_{\phi_{r-1}}\}$ are linearly independent, and take the corresponding $\timeB_j=0 $ for $0\leq j\leq r-1$. 

We construct a point $x$ by successively extending the given initial segment
\begin{equation}
    x|_r := (x_0,\ldots,x_{r-1}).
\end{equation}
Then $\Lprod{j}=\I$ for $0 \leq j < r$, and 
\begin{gather*}
 W = \Span_\R \left \{ J_{r,x|_r}\left( \frac{\partial}{\partial \timeB_j}\right) \right \} \subset \g, \quad \dim W = r.
\end{gather*}

If $n=r$, the extended differential already has full rank at $x|_r$, and we proceed directly to the interior perturbation below. If $n>r$, we continue the construction.

Let $P=U_r(x|_r).$ By \Cref{lem:V_m_h}, there exists $m_0\leq n-r$ such that $V_{m_0}=\h,$
so $\h$ is invariant under conjugation by $P$. By hypothesis $\h = \g$, so $\Ad_P$ is a linear automorphism of $\g$. Hence
\begin{multline*}
    \g = \Ad_P(V_{m_0})
    = \Span_{\R}\Bigl\{
        \Ad_{PU_\ell(\xi)}C(\phi) : \\
        \phi\in\D_A,\quad
        0\leq\ell\leq m_0,\quad
        \xi\in(\D_A\times[0,\varepsilon))^\ell
    \Bigr\}.
\end{multline*}
As in the case $n > r$, $W$ is a proper subspace of $\g$, these spanning vectors
cannot all lie in $W$. Thus there exist
$0\leq\ell\leq m_0$,
$\xi\in(\D_A\times[0,\varepsilon))^\ell$, and
$\phi'_{0}\in\D_A$ such that
\[
    \Ad_{PU_\ell(\xi)}C(\phi'_0)\notin W.
\]
We extend the initial $r$ coordinates for our point:
\begin{equation}
    x|_{r+\ell+1}
    := (x|_r,\xi,(\phi'_0,0)).
\end{equation}
This corresponds to the unitary,
\begin{gather*}
U_{r+\ell}(x|_{r+\ell}) =U_r(x|_r)U_\ell(\xi). 
\end{gather*}
By \Cref{eqn:jacobian_formula}, the $r+\ell+1$ column of the Jacobian is
\begin{equation}
J_{r+\ell+1,x|_{r+\ell+1}}
\left(\frac{\partial}{\partial \timeB_{r+\ell}}\right)
=\Ad_{PU_\ell(\xi)}C(\phi'_0)
\notin W.
\end{equation}

Recalling that the $\Rprod{j}$ factors do not appear in \Cref{eqn:jacobian_formula}, we can append $\xi$ to the total tuple leaving the previous $\frac{\partial}{\partial \timeB}$ columns of the Jacobian unchanged. 
Therefore this process adds a new $\frac{\partial}{\partial \timeB}$ column outside $W$. So the rank increases by at least one, while the product length increases by $\ell+1\leq m_0+1$.

We can repeat this process until we have a full rank Jacobian. At step $k$, define
\begin{align}
    r_{k+1}&:=r_k+\ell_k+1,
    \\
    x|_{r_{k+1}}
    &:= (x|_{r_k},\xi_k,(\phi'_k,0)).
\end{align}
After $c$ such extensions, set $x:=x|_{r_c}$.

Since $V_{m_0}=\g$ and $\Ad_P$ is a linear automorphism of $\g$, the available new $\frac{\partial}{\partial \timeB}$ columns beyond the initial $\{ C(\phi_j) \}$ span $\Ad_P(V_{m_0})=\g$. Thus, whenever the existing columns
span a proper subspace of $\g$, a further linearly independent column can be added by the method described above. Since $\dim\g=n$, after $c\leq n-r$ extensions the $\frac{\partial}{\partial \timeB}$ columns span $\g$.

In either case, the construction gives a length
\[
    K\leq r+(n-r)(m_0+1)
\]
and a point
\[
    x=((\phi_j,t_j))_{j=0}^{K-1}
    \in(\D_A\times[0,\varepsilon))^K
\]
such that $\Lambda_K(x)>0$.
Some of the times $t_j$ may equal zero.

For $0<\eta<\varepsilon$, define
\begin{equation}
    t_j^{(\eta)}
    :=
    \begin{cases}
        \eta, & t_j=0,\\
        t_j, & t_j>0,
    \end{cases}
    \quad
    x^{(\eta)}
    :=((\phi_j,t_j^{(\eta)}))_{j=0}^{K-1}.
\end{equation}
Then $x^{(\eta)}\in(\D^\circ)^K$ and $x^{(\eta)}\to x$ as $\eta\to0^+$. Since $\Lambda_K$ is continuous and $\Lambda_K(x)>0$, we have
\begin{equation}
    \Lambda_K(x^{(\eta)})>0
\end{equation}
for all sufficiently small $\eta>0$.
Thus $U_K$ has an interior submersion point using strictly positive evolution times. The perturbation does not change the product length, so $L_0\leq K$.

The upper bound on the length for a submersion point follows:
\begin{gather}
L_0 \leq r+(n-r)(m_0+1) \leq r+(n-r)(n-r+1).
\end{gather}
Since \(1\leq r\leq n\leq d^2\),
\begin{align*}
r+(n-r)(n-r+1) &= n+(n-r)^2 \\ &\leq n+(n-1)^2 \leq n^2 \leq d^4.
\end{align*}
\end{proof}

Since we have constructed a submersion point in finite length, \Cref{prop:submersion_almost_everywhere} guarantees that almost every input point in $\D^{L_0}$ is a submersion point with respect to the normalized Lebesgue measure $\lambda^{\otimes L_0}$. Hence, the earlier coarea discussion implies that the pushforward measure is absolutely continuous with respect to Haar measure. This is formalized in \Cref{thm:non0_ac}, which brings together all previous results as shown in \Cref{fig:proof} in the dark purple box with wavy lines.
Since $G$ is connected, absolute continuity of $\mu^{*L_0}$ with respect to Haar measure implies open-set irreducibility and rules out periodic obstructions; we formalize this statement in \Cref{cor:os_irred}.

Define the set of submersion points:
\begin{equation}\label{eqn:N_L}
    \mathcal N_L
    :=
    \{x\in(\D^\circ)^L:\Lambda_L(x)>0\}.
\end{equation}
For $L$ satisfying $2L\geq n$ and $y\in G$, define the fiber volume (of the regular part of the fiber) by integrating with respect to the measure $d \Hd^{2L-n}$ over the fiber over $y$, i.e. the level set of $\{ x \, : \, U_L(x)=y\}$:
\begin{equation}\label{eqn:fiber_vol}
\begin{split}
\fibvol_L(y) &= \Hd^{2L-n}\bigl(U_L^{-1}(y)\cap \mathcal{N}_L\bigr) \\ &=  \int_{U_L^{-1}(y)\cap \mathcal{N}_L} d\Hd^{2L-n}(x).
\end{split}
\end{equation}
Equip $\D^L$ with the Euclidean metric inherited from $\R^{2L}$, and $G$ with the real Hilbert-Schmidt inner product as the chosen bi-invariant Riemannian metric. This allows us to write the total volumes:
\begin{equation}
\Vol=\vol(\mathcal D^{L_0}),
\;
\Vol_G=\vol(G).
\end{equation}
Since $\lambda$ is the normalized Lebesgue measure on $\D$, the product measure obeys
\begin{equation}
    d\lambda^{\otimes L_0}(x) =\frac1{\Vol}\,d\vol_{\mathcal D^{L_0}}(x).
\end{equation}
Similarly, as $\mu_G$ is the normalized Haar measure on $G$,
\begin{equation}
d\mu_G(y)=\frac1{\Vol_G}\,d\vol_G(y).
\end{equation}

\begin{theorem}[Absolute continuity of the pushforward measure almost everywhere.]
\label{thm:non0_ac}
Let $L_0$ be as defined and bounded in \Cref{lem:L0_bound}. Let $\mu=F_{\#}(\lambda)$ be the pushforward of the normalized Lebesgue measure $\lambda$ on $\D$. Then, if $\h=\g$, 
\begin{gather*}
\mu^{*L_0} =(U_{L_0})_{\#}(\lambda^{\otimes L_0}) \ll \mu_G.
\end{gather*}
\end{theorem}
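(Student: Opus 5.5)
The plan is to show directly that every Haar-null Borel set $E\subseteq G$ has $\lambda^{\otimes L_0}(U_{L_0}^{-1}(E))=0$. Write $L=L_0$. By \Cref{lem:L0_bound}, $\h=\g$ guarantees that $L_0$ is finite and that $U_L$ has an interior submersion point. By \Cref{prop:submersion_almost_everywhere}, the complement of $\mathcal N_L$ in $\D^L$ is Lebesgue-null. This rests on \Cref{lem:jacobian_real_analytic}: $\Lambda_L$ is real analytic on the connected components of $(\D^\circ)^L$. One point needs checking here. If $\D_A$ or $\D_B$ has several components, we must have a submersion point in each component of $(\D^\circ)^L$, or else invoke the proposition as stated. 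I expect the appendix result to handle this, since the construction in \Cref{lem:L0_bound} works from any starting $\phi$-values chosen within each allowed interval. Also, the boundary $\D^L\setminus(\D^\circ)^L$ (where some $t_j=0$) is a finite union of hyperplane pieces, so it is Lebesgue-null as well. It therefore suffices to bound $\vol(U_L^{-1}(E)\cap\mathcal N_L)$.

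Next, apply the coarea formula (\Cref{thm:coarea}) to the smooth map $U_L$ restricted to the open set $\mathcal N_L$. Here $U_L$ is a submersion onto an open subset of the $n$-dimensional Riemannian manifold $G$, with the bi-invariant Hilbert--Schmidt metric and Euclidean metric upstairs. For the nonnegative measurable function $g=\mathbbm 1_{U_L^{-1}(E)}/\sqrt{\Lambda_L}$ on $\mathcal N_L$, it gives
\begin{equation*}
\vol\bigl(U_L^{-1}(E)\cap\mathcal N_L\bigr)=\int_G \mathbbm 1_E(y)\int_{U_L^{-1}(y)\cap\mathcal N_L}\frac{d\Hd^{2L-n}(x)}{\sqrt{\Lambda_L(x)}}\,d\vol_G(y).
\end{equation*}
Since $\vol_G(E)=\Vol_G\,\mu_G(E)=0$, the outer integral vanishes whatever the (possibly infinite) value of the inner fiber integral. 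This holds because the integral of a $[0,\infty]$-valued function over a null set is zero. The measurability of the inner integral in $y$ is part of the coarea theorem.

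Combining these steps gives $\lambda^{\otimes L}(U_L^{-1}(E))=\Vol^{-1}\vol(U_L^{-1}(E))=0$, that is, $\mu^{*L_0}(E)=0$. The identity $\mu^{*L_0}=(U_{L_0})_\#\lambda^{\otimes L_0}$ is from \Cref{def:pushforward}. Equivalently, we can exhibit the density $d\mu^{*L_0}/d\mu_G(y)=\frac{\Vol_G}{\Vol}\int_{U_L^{-1}(y)\cap\mathcal N_L}\Lambda_L^{-1/2}\,d\Hd^{2L-n}$, which lies in $L^1(\mu_G)$ because it integrates to $1$.

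The main obstacle is the null-set step, that is, ensuring $\Lambda_L>0$ almost everywhere on all of $\D^L$ and not only near the constructed point. I would rely on \Cref{prop:submersion_almost_everywhere} together with real analyticity of $\widetilde U_L$: a real-analytic function that does not vanish identically on a connected open set has a null zero set, by \Cref{prop:real_analytic_zero}. If disconnected components cause trouble, I would argue componentwise. The $\phi$-domain issue is resolved by rerunning \Cref{lem:L0_bound} with $\phi$'s taken from the component's intervals. This works because $V_m$, and hence $\h$, computed from any open piece of $\D_A$ is the same: $C(\phi)$ is analytic in $\phi$, so its span and brackets are determined on any interval.
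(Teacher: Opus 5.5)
Your proposal is correct and follows the paper's proof essentially step for step. You use the interior submersion point from \Cref{lem:L0_bound}, full-measure regularity via \Cref{prop:submersion_almost_everywhere}, and then the coarea formula on $\mathcal N_{L_0}$, which yields the density $\rho$ with respect to $\mu_G$. The disconnectedness concern you raise is handled in the paper by extending $\widetilde\Lambda_L$ real-analytically to the connected set $\R^{2L}$, so the componentwise rerun of \Cref{lem:L0_bound} is unnecessary (and by itself it would not cover time components bounded away from $0$).
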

\begin{proof}
By \Cref{lem:L0_bound}, $U_{L_0}$ has a submersion point if  $\h = \g$. Then, \Cref{prop:submersion_almost_everywhere} implies
\begin{gather*}
\lambda^{\otimes L_0}(\D^{L_0}\setminus \mathcal{N}_{L_0})=0.
\end{gather*}

From \Cref{eqn:normal_jacobian}, $\sqrt{\Lambda_{L_0}}$  is the normal Jacobian of $U_{L_0}$, computed with respect to the Riemannian metrics $d\vol_{\mathcal D^{L_0}}$ and $d\vol_G$.

Because $U_L$ is smooth, its derivative is bounded on a sufficiently small neighborhood around each point of $\D^L$. The mean-value inequality therefore bounds the change in the output by a constant times the change in the input, so $U_L$ is locally Lipschitz~\cite[Thm.~9.19]{rudin1976}.

We apply the coarea formula on the open manifold $(\D^\circ)^{L_0}$. Its complement in $\D^{L_0}$ has $\lambda^{\otimes L_0}$-measure zero, so this restriction does not change the pushforward measure.
The set $\mathcal N_{L_0}$ lies in this interior by definition, and its complement there also has measure zero. Thus all fiber integrals below are taken over the regular interior fibers, with the same normalization $\Vol=\vol(\D^{L_0})$.

Since $\lambda^{\otimes L_0}(\mathcal D^{L_0}\setminus \mathcal N_{L_0})=0$, restricting the following integral to $\mathcal N_{L_0}$ does not change the pushforward measure.
For every Borel set $E \in \mathcal B (G)$, 
\begin{widetext}
\begin{align}
\mu^{*L_0}(E) =\lambda^{\otimes L_0}\bigl(U_{L_0}^{-1}(E)\cap \mathcal{N}_{L_0} \bigr) \nonumber &=\frac1{\Vol} \int_{U_{L_0}^{-1}(E)\cap \mathcal{N}_{L_0}} d\vol_{\mathcal D^{L_0}}(x)\nonumber \\
\text{(by \Cref{thm:coarea}, since $F$ locally Lipschitz)} \;&=\frac1{\Vol} \int_E \left( \int_{U_{L_0}^{-1}(y)\cap \mathcal{N}_{L_0}} \frac1{\sqrt{\Lambda_{L_0}(x)}} \nonumber  \,d\Hd^{2L_0-n}(x) \right) d\vol_G(y)\nonumber \\
&=\int_E \underbrace{\frac{\Vol_G}{\Vol}
\int_{U_{L_0}^{-1}(y)\cap \mathcal{N}_{L_0}}
\frac1{\sqrt{\Lambda_{L_0}}(x)}
\,d\Hd^{2L_0-n}(x)}_{=:\, \rho(y)} \nonumber \,d\mu_G(y) \nonumber\\
&= \int_E \rho(y) \,d\mu_G(y) \label{eqn:mu_rho}
\end{align}
\end{widetext}
$\Hd^{2L_0-n}$ is the induced measure on the fibers, or level sets, $U_{L_0}^{-1}(y) = \{x \in \D^{L_0} \, : \, U_{L_0}(x) = y\}$. 

Taking $E=G$ in \Cref{eqn:mu_rho} gives
\[
    \int_G \rho(y)\,d\mu_G(y)
    =\mu^{*L_0}(G)=1.
\]
because $\mu^{*L_0}$ is a probability measure supported on $G$.
Thus $\rho$ is non-negative and measurable.
Then, for every $E\in\mathcal B(G)$,
\[
    \mu_G(E)=0
    \quad\Longrightarrow\quad
    \mu^{*L_0}(E)
    =\int_E\rho(y)\,d\mu_G(y)=0,
\]
Therefore $\mu^{*L_0}\ll\mu_G$.
\end{proof}

\begin{corollary}[Open-set irreducibility]\label{cor:os_irred}
Suppose $G$ is compact and connected and $\mu^{*L_0}\ll\mu_G$. Then, for every starting point $g\in G$ and every nonempty open set $O\subseteq G$, there exists $k\geq1$ such that
\[
    \mu^{*(kL_0)}(Og^{-1})>0.
\]
\end{corollary}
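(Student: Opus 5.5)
We use only the density $\rho$ from \Cref{thm:non0_ac}, the compactness and connectedness of $G$ (\Cref{prop:G_group}, \Cref{prop:subalgebra}), and standard facts about convolution on compact groups. Write $\nu:=\mu^{*L_0}$, so that $\mu^{*(kL_0)}=\nu^{*k}$. By hypothesis $\nu=\rho\,\mu_G$ with $\rho\geq 0$ and $\int_G\rho\,d\mu_G=1$, so $\rho>0$ on a set of positive Haar measure. The plan has three steps: upgrade $\rho$ to a continuous lower bound on some iterate, use connectedness to spread the support of that bound over all of $G$, and translate to arbitrary $g$ and $O$.

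\emph{Step 1 (continuous minorant).} Choose $M>0$ so that $\rho':=\min(\rho,M)$ satisfies $\int\rho'\,d\mu_G>0$. Then $\nu\geq\rho'\mu_G$ as measures, and convolution of nonnegative measures is monotone, so
\[
\nu^{*2}\geq f\,\mu_G,\qquad f:=\rho'*\rho',\quad f(y)=\int_G\rho'(z)\rho'(z^{-1}y)\,d\mu_G(z).
\]
Since $\rho'\in L^\infty\subseteq L^2(G)$, $f$ is continuous: right translation is continuous in $L^2$, and $f$ is an inner product of $\rho'$ with a translate. Moreover $f\geq 0$ and $\int f\,d\mu_G=(\int\rho')^2>0$, so $f$ is strictly positive on some nonempty open set $V\subseteq G$.

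\emph{Step 2 (iterating the open set).} For $j\geq1$, the $j$-fold convolution $(f\mu_G)^{*j}$ has a density $f_j$ that is positive at every point of $V^j=\{v_1\cdots v_j: v_i\in V\}$. To see this, argue by induction with $f_{j+1}=f_j*f$. If $y=v_1\cdots v_{j+1}$, then $z\mapsto f_j(z)f(z^{-1}y)$ is continuous, since $f_j$ is continuous for $j\geq 2$ by the same $L^2$ argument, with a Lusin-type adjustment in the base case. It is positive at $z=v_1\cdots v_j$, hence on a neighborhood of that point, and Haar measure charges open sets, so $f_{j+1}(y)>0$.

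Next, the set $S:=\bigcup_{j\geq1}V^j$ is an open subsemigroup. Its closure $\overline S$ is a closed subsemigroup of the compact group $G$, and hence a subgroup: for $s\in\overline S$, the closure of $\{s^n\}$ is a compact monothetic semigroup containing an idempotent, so it contains $e$ and $s^{-1}$. The subgroup $\overline S$ contains the nonempty open set $V$, so it is open, hence also closed, and by connectedness $\overline S=G$.

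\emph{Step 3 (conclusion).} Given $g\in G$ and a nonempty open $O$, the set $Og^{-1}$ is open and nonempty, so by density of $S$ it meets some $V^j$. Then
\[
\mu^{*(2jL_0)}(Og^{-1})=\nu^{*2j}(Og^{-1})\geq\int_{Og^{-1}}f_j\,d\mu_G>0,
\]
because $f_j$ is continuous and positive at a point of the open set $Og^{-1}$. Taking $k=2j$ proves the claim.

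The step I expect to need the most care is Step 2. One must track continuity and positivity of the iterated densities; using $L^2*L^2\subseteq C(G)$ at every stage avoids any measure-zero pathology. One must also justify that a closed subsemigroup of a compact group is a group. If the semigroup argument proves awkward, an alternative is to replace $f$ by $\tilde f=f*\check f$, where $\check f(y)=f(y^{-1})$. This function is continuous and positive near $e$, so its open set $V$ can be taken symmetric and containing $e$. The claim that $\bigcup_j V^j=G$ then follows directly from connectedness, since this union is an open subgroup. The cost is only a change in the constant relating $k$ to $j$.
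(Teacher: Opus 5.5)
Your main argument is correct, and it reaches the conclusion by a different route from the paper. The skeleton is shared, but the key lemma is different.

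\textbf{Comparison with the paper.} The paper argues with supports. It cites Remark~(5) after Corollary~4.2 of \cite{anoussis2004} to conclude that $\operatorname{supp}\mu^{*L_0}$ lies in no proper closed subgroup. The underlying reason is that a proper closed subgroup of a connected compact group is Haar-null, so an absolutely continuous measure cannot be concentrated on it. The paper then notes that the closed semigroup generated by this support is a subgroup, hence all of $G$, so $Og^{-1}$ meets $\operatorname{supp}\mu^{*kL_0}$ for some $k$.

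You instead smooth first. Truncating $\rho$ and convolving once gives a continuous minorant $f=\rho'*\rho'$ for the density of $\mu^{*2L_0}$. Positivity is then witnessed on an open set $V$, and connectedness enters only through the fact that a subgroup with nonempty interior is open and closed. Both proofs use that a closed subsemigroup of a compact group is a group.

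\textbf{What each approach buys.} The paper's version is shorter, at the cost of an external citation and the standard calculus of supports of convolutions. Yours is self-contained and produces continuous lower bounds $f_j$ on densities, not merely positivity. That extra information is useful. The open dense subsemigroup $\bigcup_j V^j$ must be all of $G$ (for any $g$, $S\cap gS^{-1}\neq\emptyset$), so $e\in V^{j_0}$ for some $j_0$. The sets $V^{j_0m}$ then increase in $m$, and the same reasoning plus compactness gives $V^K=G$ for some $K$. This yields a uniform minorization $\mu^{*2KL_0}\geq\alpha\,\mu_G$ with $\alpha=\min_G f_K>0$, which is exactly the Doeblin hypothesis of \Cref{rem:doeblin}.

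\textbf{One correction.} The fallback in your last paragraph does not work as stated. The function $\tilde f=f*\check f$ is the density of $(f\mu_G)*(\check f\mu_G)$. The factor $\check f\mu_G$ is dominated by the reflected measure $E\mapsto\mu^{*2L_0}(E^{-1})$, not by any convolution power of $\mu$. So nothing gives $\tilde f_j\mu_G\leq\mu^{*kL_0}$ with $k$ tied to $j$, and the change is not merely in the constant. For example, on the circle $\mathbb{R}/\mathbb{Z}$ with $\mu^{*L_0}$ uniform on $[0.1,0.11]$, $\tilde f$ is positive at $e$ while $\mu^{*4L_0}$ vanishes near $e$. Since your semigroup argument is already complete, simply drop the fallback.

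The ``Lusin-type adjustment in the base case'' is likewise unnecessary: $f_1=f$ is already continuous by Step~1.
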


\begin{proof}
Let $Y_1,Y_2,\ldots$ be independent random elements with common distribution $\mu$, and define the random walk by
\[
    X_0=g,\qquad X_L=Y_LX_{L-1}=Y_L\cdots Y_1g.
\]
Since $Y_L\cdots Y_1$ has distribution $\mu^{*L}$,
\[
    \mathbb P_g(X_L\in O)=\mu^{*L}(Og^{-1}).
\]

By Anoussis--Gatzouras, Remark~(5) following Corollary~4.2~\cite{anoussis2004},
connectedness of $G$ and absolute continuity of $\mu^{*L_0}$ imply that its support is not contained in any proper closed subgroup of $G$.

The closure of the semigroup generated by $\operatorname{supp}\mu^{*L_0}$ is a subgroup, by compactness (every sequence has a convergent subsequence whose limit remains in the set, following the same style of argument we made in \Cref{prop:G_group}), and therefore equals $G$. Thus the nonempty open set $Og^{-1}$ intersects the support of some convolution power $\mu^{*kL_0}$, giving
\[
    \mathbb P_g(X_{kL_0}\in O)
    =\mu^{*kL_0}(Og^{-1})>0.
\]
Hence the random walk is open-set irreducible by \Cref{def:topo_irred}.
\end{proof}

Now, we aim show that the distribution of endpoints of this random walk converge in total variation to the Haar measure on $G$.
From \Cref{thm:non0_ac}, as \Cref{fig:proof} illustrates we can apply the results of Bhattacharya~\cite[Thm.~3]{bhattacharya1972} and Anoussis--Gatzouras~\cite[Cor.~4.2]{anoussis2004} to convert the absolute continuity of $\mu^{L_0}$ into convergence to the Haar measure as a function of  further convolutions.
The intuition behind these results is that absolute continuity of $\mu^{*L_0}$ ensures that the random walk has spread out over a set of positive Haar measure (open-set irreducibility). Since $G$ is connected, subgroup and periodic obstructions to mixing are avoided. Further convolutions average translates of the distribution, iteratively decreasing nonuniformity and driving it toward Haar measure.

\begin{theorem}[Exponential convergence to Haar with uniform sampling on $\D$]\label{thm:exp_conv}
Assume the minimum product length $L_0$ at which $U_{L_0}$ is a submersion is finite and bounded as in \Cref{lem:L0_bound}.
If  $\h = \g$,\footnote{Here $\operatorname{spr}(\widehat{\mu^{*L_0}}(R))$ denotes the spectral radius of the Fourier coefficient matrix
\[
\widehat{\mu^{*L_0}}(R)=\int_G R(g^{-1})\,d\mu^{*L_0}(g).
\] See Appendix \Cref{apdx:rep_theory} for an introduction to the  relevant material of representation theory.} 
\begin{align*}
\lim_{L'\to\infty}
&\|\mu^{*L'}-\mu_G\|_{\mathrm{TV}}^{1/L'}\\
&=
\left[
\max_{R\in\Grep\setminus\{\mathbf1_G\}}
\operatorname{spr}\bigl(\widehat{\mu^{*L_0}}(R)\bigr)
\right]^{1/L_0} < 1.
\end{align*}
\end{theorem}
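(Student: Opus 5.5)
Under the hypothesis $\h=\g$, all the geometric input is already in place. The plan is to check that the hypotheses of Anoussis--Gatzouras~\cite[Cor.~4.2]{anoussis2004} hold and then read off the rate. That result, restated in \Cref{apdx:nec_prior_results}, says roughly the following. Let $\nu$ be a probability measure on a compact group $G$ such that some convolution power of $\nu$ has a nonzero absolutely continuous component, and such that the support of $\nu$ lies in no proper closed subgroup and in no coset of a proper closed normal subgroup. Then $\lim_{k}\|\nu^{*k}-\mu_G\|_{\mathrm{TV}}^{1/k}=\sup_{R\neq\mathbf 1_G}\operatorname{spr}(\widehat\nu(R))<1$. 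I would apply this to $\nu:=\mu^{*L_0}$ rather than to $\mu$ itself, and then transfer the statement from multiples of $L_0$ to all $L'$.

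The steps are as follows. First, \Cref{prop:G_group} and \Cref{prop:subalgebra} give that $G$ is compact and connected. Second, \Cref{lem:L0_bound} gives $L_0<\infty$, and \Cref{thm:non0_ac} gives $\nu=\mu^{*L_0}\ll\mu_G$. In particular $\nu$ is itself absolutely continuous, so Bhattacharya's condition~\cite[Thm.~3]{bhattacharya1972} holds with exponent one.

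Third, I need the non-degeneracy conditions. By Remark~(5) after \cite[Cor.~4.2]{anoussis2004}, already used in \Cref{cor:os_irred}, absolute continuity on a connected $G$ excludes confinement to a proper closed subgroup. A density has support of positive Haar measure, whereas a proper closed subgroup of a connected compact Lie group has lower dimension (an open subgroup would be all of $G$) and hence Haar measure zero. The same argument applies to cosets of proper closed normal subgroups.

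Fourth, I apply the corollary to get $\lim_k\|\nu^{*k}-\mu_G\|^{1/k}_{\mathrm{TV}}=\max_{R\neq\mathbf 1}\operatorname{spr}(\widehat{\mu^{*L_0}}(R))=:\rho<1$. The supremum is attained and stays below $1$ because $\|\widehat\nu(R)\|\to0$ as $R\to\infty$ by Riemann--Lebesgue for the $L^1$ density. For each fixed nontrivial $R$, $\operatorname{spr}<1$, since otherwise $\nu$ would be concentrated on a coset of $\ker$-type subgroup, which the third step excludes.

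The remaining issue, and the step I expect to be the main obstacle, is passing from the subsequence $L'=kL_0$ to general $L'$. Write $L'=kL_0+s$ with $0\le s<L_0$. Since $\mu_G*\mu^{*s}=\mu_G$ and convolution is a TV-contraction,
\[
\|\mu^{*L'}-\mu_G\|_{\mathrm{TV}}\le\|\nu^{*k}-\mu_G\|_{\mathrm{TV}}.
\]
This yields the upper bound $\limsup\le\rho^{1/L_0}$. TV distance is nonincreasing in $L'$, so $\|\mu^{*L'}-\mu_G\|\ge\|\nu^{*(k+1)}-\mu_G\|$, which gives the matching lower bound $\liminf\ge\rho^{1/L_0}$, because $(k+1)/L'\to1/L_0$.

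Finally, I would also check that the maximum over $\Grep\setminus\{\mathbf 1_G\}$ is genuinely attained and does not merely approach a supremum. The rate can be cross-checked against $\widehat{\mu^{*L_0}}(R)=\widehat\mu(R)^{L_0}$. With the convention $\widehat\mu(R)=\int R(g^{-1})d\mu$, this turns the displayed quantity into $\max_R\operatorname{spr}(\widehat\mu(R))$, which is consistent with the formula in the statement.
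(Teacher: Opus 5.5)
Your proposal is correct and follows essentially the same route as the paper: compactness and connectedness of $G$, absolute continuity of $\mu^{*L_0}$ from \Cref{thm:non0_ac}, adaptedness and strict aperiodicity, Anoussis--Gatzouras applied to $\nu=\mu^{*L_0}$, and then the same monotonicity sandwich between the block lengths $kL_0$ and $(k+1)L_0$ to reach all $L'$. Two of your additions fill in points the paper only asserts: the dimension argument that proper closed subgroups and their cosets are Haar-null, and the Riemann--Lebesgue argument that the maximum over nontrivial irreps is attained and is below $1$.
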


\begin{proof}
By \Cref{thm:non0_ac}, the pushforward measure $\mu^{*L}$ has a non-zero absolutely continuous component with respect to $\mu_G$ for $L \geq L_0$ if $\h = \g$, which is given by hypothesis. Moreover, $G$ is compact by \Cref{prop:G_group} and connected by \Cref{prop:subalgebra}.

$\mu^{*L}$ on $G$ is a satisfactory input to Bhattacharya's Thm.~3~\cite{bhattacharya1972}\footnote{Restated in this work as \Cref{thm:bhattacharya}}, which gives the claimed exponential convergence of $\mu^{*L}$ to $\mu_G$, for all $L \geq L_0$.

We can give a quantitative description of the asymptotic convergence rate using representation theory. 
\[
\gamma := \max_{R \in \Grep \setminus \{\mathbf{1}_G\}} \operatorname{spr}\bigl(\widehat{\mu^{*L_0}}(R)\bigr),
\]
where $\Grep$ is the unitary dual of $G$, consisting of the unitary equivalence classes of continuous irreducible unitary representations of $G$.

Since $G$ is compact and connected and $\mu^{*L_0} \ll \mu_G$, $(G,\mu^{*L_0})$ is adapted and $\mu^{*L_0}$ is strictly aperiodic. Thus the hypotheses of Anoussis-Gatzouras~\cite[Cor.~4.2]{anoussis2004}\footnote{Restated in \Cref{cor:anoussis}.} hold, and
\[
  \lim_{L' \to \infty} \|\mu^{*L'L_0} - \mu_G\|^{1/{L'}}
  = \gamma < 1.
\]
Therefore $\gamma$ is the asymptotic exponential decay factor per block of $L_0$ steps. 
To go from block lengths to arbitrary lengths, set $a_m:=\|\mu^{*m}-\mu_G\|_{\mathrm{TV}}$.
Convolution by a probability measure contracts total variation and leaves Haar measure invariant, so $a_m$ is nonincreasing. Writing $m=qL_0+s$, with $0\leq s<L_0$,
gives
\[
    a_{(q+1)L_0}\leq a_m\leq a_{qL_0}.
\]
Taking $m$th roots gives
\[
    a_{(q+1)L_0}^{1/m}
    \leq a_m^{1/m}
    \leq a_{qL_0}^{1/m}.
\]
Rewrite the upper bound as
\[
    a_{qL_0}^{1/m}
    =\left(a_{qL_0}^{1/q}\right)^{q/m}.
\]
The expression inside parentheses tends to $\gamma$,
while $q/m\to1/L_0$, so this bound tends to
$\gamma^{1/L_0}$. Similarly, the lower bound satisfies
\[
    a_{(q+1)L_0}^{1/m}
    =\left(a_{(q+1)L_0}^{1/(q+1)}\right)^{(q+1)/m}
    \longrightarrow\gamma^{1/L_0}.
\]
Since $a_m^{1/m}$ lies between two quantities with
the same limit, it follows that
\[
    \lim_{m\to\infty}a_m^{1/m}=\gamma^{1/L_0}.
\]

\end{proof}

\begin{remark}
Since the representation $R(g) = g$ distinguishes every pair of distinct elements of $G$, it is called faithful. For a compact group, tensor products of a faithful representation and its complex conjugate contain every irreducible representation as an invariant component; hence writing $R_{a,b}(g)=g^{\otimes a}\otimes\overline g^{\otimes b}$, over all $a,b\geq0$, recovers every irrep up to equivalence \cite[Section~5.d.]{bekka2008}.
To obtain the irreps numerically, we can use RepLAB~\cite{rosset2021} to decompose each chosen $R_{a,b}$ into irreducible pieces.
\end{remark}

With this remark, we have an approach that could be implemented in principle on a case-by-case basis for a specific $A$ and $B$; we leave explicit examples with specific experimental considerations to follow-up work.

We can also obtain an expression for a non-asymptotic rate.
\begin{remark}\label{rem:doeblin}
A Doeblin minorization gives a non-asymptotic, but possibly more conservative bound~\cite[Thm.~8]{roberts2004}. If for all $E\in\mathcal B(G)$,
\[
    \mu^{*K}(E)\geq \alpha\,\mu_G(E)
\]
for some $K\geq1$ and constant $0<\alpha<1$, then
\begin{equation}
    \|\mu^{*L}-\mu_G\|_{\mathrm{TV}}
    \leq 2(1-\alpha)^{\lfloor L/K\rfloor}
    \; \;\forall L\geq0,
\end{equation}
where $\|\nu-\eta\|_{\mathrm{TV}}
=2\sup_{E\in\mathcal B(G)}|\nu(E)-\eta(E)|$.\footnote{See \Cref{def:total_variation}.}
\end{remark}
We can find such an $\alpha$ using the coarea formula, lower bounding $\rho(y)$ in \Cref{eqn:mu_rho} almost everywhere with respect to $\mu_G$. The Jacobian is easy to upper bound, but lowering bounding the fiber volume requires more careful consideration.
We leave this, for specific $A$ and $B$, to future work.

\begin{theorem}[Exact Haar sampling by non-uniform measure
on $\mathcal D^L$]
\label{thm:exact}
Assume $\h=\g$. Let $L_0$ is as defined in \Cref{lem:L0_bound}. Suppose that for almost every $y \in G$ with respect to $\mu_G$, \footnote{This implies $\rho(y)>0$ for almost every $y \in G$ with respect to $\mu_G$, which is called positivity.}
\[
    0<\fibvol_{L_0}(y)<\infty\,
\]
and define\footnote{Here the essential supremum is taken with respect to $\lambda^{\otimes {L_0}}$ (the normalized Lebesgue measure on $\D^{L_0}$).}
\[
    0<M:=
    \operatorname*{ess\,sup}_{x\in\mathcal D^{L_0}}
    \frac{\sqrt{\Lambda_{L_0}(x)}}
         {\fibvol_{L_0}(U_{L_0}(x))}
    <\infty.
\]
The ratio inside the essential supremum is defined to be zero wherever its denominator is zero or infinite.

Repeatedly sample $ X \sim \lambda^{\otimes L_0}$, accepting it with probability
\begin{gather*}
p(X)= \frac1M \frac{\sqrt{\Lambda_{L_0}(X)}}{ \fibvol_{L_0}(U_{L_0}(X))}
\end{gather*}
Then the output $\{ U_{L_0}(X) \}$, conditioned on acceptance, has distribution $\mu_G$.
\end{theorem}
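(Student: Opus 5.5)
The proof is a rejection-sampling computation built on the coarea identity \Cref{eqn:mu_rho} from the proof of \Cref{thm:non0_ac}. First I check that the acceptance rule is well defined. By definition of $M$ as an essential supremum, $0\le p(X)\le 1$ for $\lambda^{\otimes L_0}$-almost every $X$. On a null set we can redefine $p$ to be truncated at $1$ without changing any distribution. Measurability of $p$ requires measurability of $x\mapsto \fibvol_{L_0}(U_{L_0}(x))$. This follows from the coarea formula, which gives measurability of $y\mapsto\fibvol_{L_0}(y)$ as part of its conclusion (apply \Cref{thm:coarea} to the constant integrand $1$ restricted to $\mathcal N_{L_0}$), composed with the continuous map $U_{L_0}$.

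Next I compute the joint law of acceptance and output. For $E\in\mathcal B(G)$,
\[
\Pr\bigl(U_{L_0}(X)\in E,\ \text{accept}\bigr)=\frac{1}{\Vol}\int_{U_{L_0}^{-1}(E)\cap\mathcal N_{L_0}} p(x)\,d\vol_{\mathcal D^{L_0}}(x).
\]
Restricting to $\mathcal N_{L_0}$ is harmless by \Cref{prop:submersion_almost_everywhere} and \Cref{lem:L0_bound}. I then apply \Cref{thm:coarea} with the nonnegative integrand $g(x)=p(x)$; the general form is $\int g\,d\vol=\int_G\int_{U^{-1}(y)} g/\sqrt{\Lambda}\,d\Hd^{2L_0-n}\,d\vol_G(y)$. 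On the fiber over $y$ the factor $\sqrt{\Lambda_{L_0}(x)}$ in $p$ cancels the coarea weight $1/\sqrt{\Lambda_{L_0}(x)}$. The denominator $\fibvol_{L_0}(U_{L_0}(x))=\fibvol_{L_0}(y)$ is constant along the fiber. The inner integral therefore becomes
\[
\frac{1}{M\,\fibvol_{L_0}(y)}\,\Hd^{2L_0-n}\bigl(U_{L_0}^{-1}(y)\cap\mathcal N_{L_0}\bigr)=\frac1M
\]
for $\mu_G$-a.e.\ $y$, since by hypothesis $0<\fibvol_{L_0}(y)<\infty$. The set of $y$ where the fiber volume is $0$ or $\infty$ is $\mu_G$-null, so the convention $p=0$ there contributes nothing.

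This yields $\Pr(U_{L_0}(X)\in E,\ \text{accept})=\frac{\Vol_G}{M\Vol}\,\mu_G(E)$. Taking $E=G$ gives the acceptance probability $\frac{\Vol_G}{M\Vol}$, which is positive, so dividing produces exactly $\mu_G(E)$. Because trials are independent and each accepts with this fixed positive probability, repeated sampling terminates almost surely, and the accepted output has the conditional law just computed.

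The main obstacle is the careful use of the coarea formula with an integrand that depends on $y$ through $\fibvol_{L_0}$. Two points need checking: that $\fibvol_{L_0}$ is Borel measurable, and that the fiber integral can be taken over the regular part $U^{-1}(y)\cap\mathcal N_{L_0}$ consistently with how $\fibvol_{L_0}$ is defined. I would handle both by applying \Cref{thm:coarea} on the open manifold $\mathcal N_{L_0}$, where $U_{L_0}$ is a smooth submersion and all fibers are embedded $(2L_0-n)$-submanifolds. The measurability of fiber integrals is then part of the standard conclusion of the coarea formula for locally Lipschitz maps.
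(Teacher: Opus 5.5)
Your proposal is correct and matches the paper's proof step for step. You write the joint probability of acceptance and $U_{L_0}(X)\in E$ as $\int_{U_{L_0}^{-1}(E)}p\,d\lambda^{\otimes L_0}$, apply the coarea formula on $\mathcal N_{L_0}$ so that $\sqrt{\Lambda_{L_0}}$ cancels and each fiber integral equals $1/M$, obtain $\frac{\Vol_G}{M\Vol}\mu_G(E)$, and normalize using $E=G$. Your extra remarks on the measurability of $\fibvol_{L_0}$ and on almost-sure termination of the repeated trials are details the paper leaves implicit.
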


\begin{proof}
\Cref{thm:non0_ac} guarantees $\mu^{*L} \ll \mu_G$ for $L \geq L_0$, by the hypothesis $\h = \g$.

Draw a random $X \sim \lambda^{\otimes L_0}$ and let $p: \D^{L_0} \to [0,1]$ be an unspecified measurable acceptance probability function of $X$.
Independently draw a number $T\sim\operatorname{Unif}([0,1])$, and define the acceptance event
\begin{gather*}
\mathrm{Acc} := \{T\le p(X)\}.
\end{gather*}
We want to determine $p$ such that $\mathbb P(\mathrm{Acc})>0$ and the conditional distribution of $U_{L_0}(X)$ given acceptance yields the Haar measure on $G$. 
That is, for $E \in \mathcal{B}(G)$ apply the definition of conditional probability:
\begin{align*}
\mathbb{P} \left ( \{ U_{L_0}(X) \in E \} \, | \, \mathrm{Acc}  \right ) &= \frac{\mathbb{P} \left ( \{ U_{L_0}(X) \in E \} \cap \mathrm{Acc}  \right )}{\mathbb{P}(\mathrm{Acc})} \\ &= \mu_G(E).
\end{align*}
The numerator can be written as:
\begin{align*}
    \mathbb{P} \left ( \{ U_{L_0}(x) \in E \} \cap \mathrm{Acc}  \right ) &= \int_{U_{L_0}^{-1}(E)} p(x) \, d\lambda^{\otimes {L_0}}(x)\\
    &= \frac1{\Vol}\int_{U_{L_0}^{-1}(E)} p(x)\, d\vol_{\D^{L_0}}(x).
\end{align*}
We can now use the coarea formula to evaluate it, for $y = U_{L_0}(x)$:
\begin{widetext}
\begin{align}
    &\mathbb{P} \left ( \{ U_{L_0}(x) \in E \} \cap \mathrm{Acc}  \right ) \nonumber \\
    &= \frac1{\Vol} \int_E \Bigl[ \int_{U_{L_0}^{-1}(y) \cap \mathcal N_{L_0}} p(x) \frac1{\sqrt{\Lambda_{L_0}(x)}} \, d\Hd^{2L_0 - n}(x)  \Bigr]  \, d \vol_G(y) \nonumber \\
    &= \frac{\Vol_G}{\Vol} \int_E \Bigl[ \int_{U_{L_0}^{-1}(y) \cap \mathcal N_{L_0}} p(x) \frac1{\sqrt{\Lambda_{L_0}(x)}} \, d\Hd^{2L_0 - n}(x)  \Bigr] \, d \mu_G(y) \label{eqn:coarea_prob}
\end{align}
\end{widetext}
We need the inner integral to evaluate to a constant independent of $y$.

It is convenient to take $p(x) \propto \sqrt{\Lambda_{L_0}(x)}$ in order to cancel the factor of the normal Jacobian in the denominator. Recall from \Cref{eqn:fiber_vol}
\begin{gather*}
\fibvol_{L_0}(y) = \int_{U_{L_0}^{-1}(y) \cap \mathcal N_{L_0}} \, d \Hd^{2{L_0} - n}(x), 
\end{gather*}
so we should also take $p(x) \propto \frac1{\fibvol_{L_0}(U_{L_0}(x))}$. In order to be a valid probability, $ 0\leq p(x) \leq 1$, so we take $p(x) \propto \frac1M$, where 
\begin{gather*}
M = \esssup\limits_{x \in \D^{L_0}} \frac{\sqrt{\Lambda_{L_0}(x)}}{\fibvol_{L_0}(U_{L_0}(x))}.
\end{gather*}
By hypothesis, $M<\infty$.
Hence we claim:
\begin{equation}
    p(x) = \frac{\sqrt{\Lambda_{L_0}(x)}}{M \fibvol_{L_0}(U_{L_0}(x))}.
\end{equation}
On any exceptional null set where this formula does not give a value in $[0,1]$, set $p(x)=0$.
For $\mu_G$-almost every $y$, the inner integral in the coarea formula equals
\begin{align*}
    \int_{U_{L_0}^{-1}(y) \cap \mathcal N_{L_0}} p(x) \frac1{\sqrt{\Lambda_{L_0}(x)}} \, d\Hd^{2L_0 - n}(x) = \frac1M.
\end{align*}
We have shown, therefore, by \Cref{eqn:coarea_prob}:
\begin{align*}
\mathbb{P} \left ( \{ U_{L_0}(X) \in E \} \cap \mathrm{Acc}  \right ) &= \frac{\Vol_G}{M\Vol} \int_E  \, d \mu_G \\ &= \frac{\Vol_G}{M\Vol} \, \mu_G(E). 
\end{align*}
Take $E = G$, since $U_{L_0}(X) \in G$ always for $X \in \D^{L_0}$:
\begin{align*}
\mathbb P(\mathrm{Acc}) = \mathbb{P} \left ( \{ U_{L_0}(X) \in G \} \cap \mathrm{Acc}  \right ) &= \frac{\Vol_G}{M\Vol} \int_G  \, d \mu_G \\ &= \frac{\Vol_G}{M\Vol} > 0
\end{align*}
Hence we can apply the definition of conditional probability,
\begin{gather*}
\mathbb{P} \left ( \{ U_{L_0}(X) \in E \} \, | \, \mathrm{Acc}  \right ) = \frac{\frac{\Vol_G}{M\Vol} \, \mu_G(E) }{\frac{\Vol_G}{M\Vol}} = \mu_G(E),
\end{gather*}
as desired.
\end{proof}

The following immediate corollary connects our convergence results directly to the construction of unitary designs.

\begin{corollary}[Unitary designs.]\label{cor:designs}
    If $G=\mathrm{U}(d)$, the alternating conjugated evolution random walk is open-set irreducible by \Cref{thm:non0_ac}, and its endpoint distribution converges exponentially in total variation to Haar measure on $\mathrm{U}(d)$ by \Cref{thm:exp_conv}. Consequently, for every fixed $k$, its $k$th moments converge to the Haar moments, yielding approximate unitary $k$-designs. Moreover, under the additional fiber-volume and boundedness hypotheses of \Cref{thm:exact}, its sampling construction realizes Haar measure exactly at length $L_0$, yielding an exact unitary $k$-design for every $k$.
\end{corollary}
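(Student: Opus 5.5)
The plan is to assemble \Cref{cor:designs} from the results already proved, with one new ingredient: moment convergence follows from total variation convergence. First, with $G=\mathrm{U}(d)$ we have $\g=\mathfrak u(d)$, and $G$ is trivially closed. Since $H_\h$ is a connected immersed subgroup with $\overline{H_\h}=G$ (\Cref{prop:subalgebra}), we need $\h=\g$ before any theorem applies. I would not try to derive this from the hypothesis $G=\mathrm{U}(d)$, because a dense immersed subgroup need not equal $G$. Instead I will read the corollary as including the paper's standing hypothesis that $H_\h$ is closed; then \Cref{prop:subalgebra} gives $\h=\g$. I expect this bookkeeping point to be the only real obstacle. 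If a reader prefers, it can also be checked directly through \Cref{lem:comm_comm_sufficient}.

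With $\h=\g$, \Cref{lem:L0_bound} gives a finite $L_0$ and \Cref{thm:non0_ac} gives $\mu^{*L_0}\ll\mu_G$. The group $G$ is compact (\Cref{prop:G_group}) and connected, so \Cref{cor:os_irred} yields open-set irreducibility, and \Cref{thm:exp_conv} yields $\|\mu^{*L}-\mu_G\|_{\mathrm{TV}}\le C\beta^{L}$ for some $\beta<1$ and all large $L$.

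For moments, define for fixed $k$ the operator $\mathscr M^{(k)}_\nu(X)=\int_G g^{\otimes k}X(g^\dagger)^{\otimes k}\,d\nu(g)$. For any $X$ with $\|X\|_1\le 1$, the integrand has norm at most $1$ in any unitarily invariant norm. Hence
\[
\bigl\|\mathscr M^{(k)}_{\mu^{*L}}(X)-\mathscr M^{(k)}_{\mu_G}(X)\bigr\|_1\le \|\mu^{*L}-\mu_G\|_{\mathrm{TV}},
\]
up to the paper's factor-of-two convention for total variation. The same bound holds for the diamond norm, because the estimate also applies with $\id\otimes$ on an ancilla. Therefore the ensemble is an $\epsilon$-approximate $k$-design once $L\gtrsim \log(C/\epsilon)/\log(1/\beta)$, which establishes the approximate-design claim for every fixed $k$.

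For the exact statement, the hypotheses of \Cref{thm:exact} (the fiber-volume bounds and $M<\infty$) together with $\h=\g$ give that the accepted outputs $U_{L_0}(X)$ are exactly $\mu_G$-distributed. Every moment operator of this ensemble then coincides with the Haar one, so the ensemble is an exact $k$-design for all $k$.
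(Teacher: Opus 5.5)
Your proposal follows the paper's own assembly: \Cref{thm:non0_ac} with \Cref{cor:os_irred} for irreducibility, \Cref{thm:exp_conv} for the exponential rate, the triangle-inequality estimate of \Cref{cor:TV_tdesign} to pass from total variation to diamond-norm moment error, and \Cref{thm:exact} for the exact design. The extra closedness hypothesis you impose is redundant when $G=\mathrm{U}(d)$: any $Q$ commuting with $A$ and $B$ commutes with every $F(\phi,\timeB)$, hence with $\overline{\Ucal}=\mathrm{U}(d)$, so $\Comm=\C\I$ and \Cref{lem:comm_comm_sufficient} already yields $\h=\g$ (the paper relies on this silently when it cites theorems that assume $\h=\g$).
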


\section{Discussion}
We began with a question about whether a Hamiltonian with a global control operation could induce a random unitary walk that has (1) open-set irreducibility and (2) convergence to the Haar measure---i.e., that the walk both can reach every neighborhood after some finite length on the compact group that is generated, and that the probability assigned to each neighborhood converges to the normalized uniform Haar measure.

We introduced a mechanism of products of evolutions under a conjugated Hamiltonian. As discussed in Appendix \Cref{apdx:examples}, this mechanism applies to a broad range of systems, including  neutral atoms, bosons/fermions in optical lattices, superconducting circuits, and trapped ions.
With this contruction, \Cref{thm:non0_ac} establishes absolute continuity with respect to the Haar measure after finitely many steps. On the compact connected group considered here, this is stronger than (1) and allows us
to establish (2), with exponential convergence in total variation, in \Cref{thm:exp_conv}.
Central to achieving these results is the construction of a  submersion point of $U_L$ in finite length in \Cref{lem:L0_bound}, which we are able to show analytically because of the structure of conjugated evolution operator.

A key point of our results is that the required walk length is controlled by the dimension of the Lie algebra generated by the control operation and Hamiltonian, rather than directly by the size of the overall Hilbert space. The necessary lower bound has a simple geometric origin: the Jacobian must have enough columns to span that algebra. With too few input parameters, it cannot have full rank.
We also give a simple sufficient condition in the appendix for guaranteeing the closure of the subgroup generated by the conjugated base Hamiltonian, \Cref{lem:comm_comm_sufficient}, for constructing a submersion point: that a matrix commuting with both $A$ and $B$ must be proportional to $\I$. This provides a practical starting point for checking whether the argument applies to a given Hamiltonian and control. 

These results establish both the possibility of generating uniform randomness through global control and quantitative guarantees for its convergence. Turning these guarantees into practical experimental protocols remains an open direction. Our bounds control convergence of the full distribution in total variation, whereas many applications require only agreement with the Haar measure in finitely many moments, as captured by unitary designs. This leaves room for potentially sharper, application-specific bounds. Evaluating the convergence rates for concrete Hamiltonians, optimizing the control sequences, and incorporating finite coherence times and various control errors are natural next steps toward making the construction experimentally implementable.

\begin{acknowledgments}
The authors thank Andrew Daley, B\'alint Koczor, and Dorian Gangloff for helpful discussions.
O.S. thanks Zhe Xian Koong for substantive feedback on \Cref{fig:overall}. O.S. thanks  Zhe Xian Koong and Leanne Reeve for helpful comments on \Cref{apdx:examples}.
O.S. acknowledges support from a Doctoral Landscape DPhil award.
S. S. acknowledges support from the Royal Society through a University Research Fellowship.
A.S.M.-R. was supported by the EPSRC UK Multidisciplinary Centre for Neuromorphic Computing (grant UKRI982).\\
\end{acknowledgments}

\paragraph*{AI usage statement.}
We used Anthropic Claude Sonnet 4.6, accessed through Claude.ai in July 2026, to explore some  proof strategies for \Cref{thm:exp_conv} and to discover related literature in \Cref{sec:background} and locate textbook references for the definitions in Appendix \Cref{apdx:def}. Additionally we used ChatGPT Sol 5.6 and Astra 6 in August - September 2026 to improve the clarity of the manuscript throughout, in particular to make discussion of the application to experimental Hamiltonians in Appendix \Cref{apdx:examples} more pedagogical.
We iterated with Astra 6 in September 2026 to draft the tikz version of \Cref{fig:proof} from our hand-drawn design.

\bibliography{squash.bib}

\pagebreak 
\clearpage
\newpage

\appendix

\section{Definitions}\label{apdx:def}

\subsection{Topology}
\begin{definition}[Topological space {\rm\cite[Ch~2]{munkres2000}}]
\label{def:topspace}
A \emph{topology} on a set $X$ is a collection $\tau$ of subsets of $X$, whose members are called \emph{open sets}, such that
\begin{enumerate}
  \item $\emptyset \in \tau$ and $X \in \tau$;
  \item $\tau$ is closed under arbitrary unions;
  \item $\tau$ is closed under finite intersections.
\end{enumerate}
The pair $(X,\tau)$ is a \emph{topological space}; we write $X$ when $\tau$ is understood. A set $C \subseteq X$ is \emph{closed} if $X \setminus C$ is open. A \emph{neighborhood} of $x \in X$ is any open set containing $x$.
\end{definition}

\begin{definition}[Hausdorff and second countable {\rm\cite[Ch.~2]{munkres2000}}]
\label{def:hausdorff}
A topological space $X$ is
\begin{itemize}
  \item \emph{Hausdorff} if any two distinct points can be separated by disjoint neighborhoods: for all $x \neq y$ in $X$ there are open sets
  $U \ni x$ and $V \ni y$ with $U \cap V = \emptyset$;
  \item \emph{second countable} if the topology admits a countable \emph{base}, i.e.\ a countable family $\mathcal B \subseteq \tau$ such that every open set is a union of members of $\mathcal B$.
\end{itemize}
\end{definition}

\begin{definition}[Compactness {\rm\cite[Ch.~2, Ch.~3, Thm.~26.5]{munkres2000}}]\label{def:compact}
A \emph{cover} of a topological space $X$ is a collection of subsets whose union is $X$.
An \emph{open cover} is a cover where every individual set in the collection is an open set. A \emph{subcover} is a smaller subset of a cover whose union is still $X$. A \emph{finite subcover} is a subcover of finite length.
$X$ is \emph{compact} if every open cover has a finite subcover.

In a \emph{Hausdorff space},  every pair of distinct points in the topological space $X$ has disjoint neighborhoods. Hence compactness implies closure for Hausdorff spaces.
For subsets of $\R^n$ (hence of $M_d(\C)$) this is equivalent to being closed and bounded (the Heine-Borel theorem). Continuous images of compact spaces are compact.
\end{definition}

\begin{definition}[Connectedness {\rm\cite[Ch.~3; Thm~23.3, 23.4, 23.5]{munkres2000}}]\label{def:connected}
 A topological space $X$ is \emph{connected} iff the only subsets of $X$ that are both open and closed in $X$ are $\emptyset$ and the whole of $X$; that is, $X$ cannot be written as the union of two disjoint nonempty
open subsets. The image of a connected space under a continuous map is connected; if $A \subset X$ is connected and $ A \subset B \subset \overline A$, then $B$ is also connected.
\end{definition}

\begin{definition}[Borel sigma algebra]\label{def:sigma_algebra}
Let \((X,\tau)\) be a topological space. The Borel $\sigma$-algebra on \(X\), denoted by \(\mathcal{B}(X)\), is the smallest $\sigma$-algebra containing every open subset of \(X\); that is,
\begin{gather*}
\mathcal{B}(X)=\sigma(\tau).
\end{gather*}
Its elements are called Borel sets. Thus, the collection of Borel sets contains all open subsets of \(X\) and is closed under complements relative to \(X\) and countable unions~\cite[Sections~1.2 and~1.3]{tassion2025}.
\end{definition}

\subsection{Measure theory}

\begin{definition}[Haar measure on $H$]\label{def:haar_G}
Let $H \leq \mathrm{U}(d)$ be a compact Lie group.
Let $\mathcal B(H)$ denote the Borel $\sigma$-algebra of $H$.\footnote{See Def.\ref{def:sigma_algebra}} We denote the normalized Haar measure on $H$ by
\begin{gather*}
    \mu_H:\mathcal B(H) \longrightarrow [0,1],
\end{gather*}
which is the unique regular Borel probability measure satisfying
\begin{gather*}
    \mu_H(hE)=\mu_H(Eh)=\mu_H(E) \quad \forall h\in H, \, E\in\mathcal B(H)
\end{gather*}
and $\mu_H(H)=1$~\cite[Corollary~9.3.2.]{cohn1980}.
\end{definition}

\begin{definition}[Absolute continuity of measures {\rm~\cite[Ch.~4.2]{cohn1980}}]\label{def:a.c.}
Let $(X,\mathcal F)$ be a measurable space where $X$ is a set and $\mathcal F$ is a Borel $\sigma$-algebra of measurable subsets of $X$.\footnote{See \Cref{def:sigma_algebra}.}

Let $\mu$ and $\eta$ be measures on this space. $\mu$ is absolutely continuous with respect to $\eta$, written $\mu\ll\eta,$ if for every $E \in \mathcal F$,
\begin{gather*}
\eta(E)=0 \quad \Longrightarrow \quad \mu(E)=0
\end{gather*}
\end{definition}

\begin{definition}[Total-variation norm and distance]
\label{def:total_variation}
Let $\nu$ and $\eta$ be probability measures on a measurable
space $(X,\mathcal F)$. Define
\[
    \|\nu-\eta\|_{\mathrm{TV}}
    :=2\sup_{E\in\mathcal F}|\nu(E)-\eta(E)|
\]
and
\[
    d_{\mathrm{TV}}(\nu,\eta)
    :=\sup_{E\in\mathcal F}|\nu(E)-\eta(E)|
    =\frac12\|\nu-\eta\|_{\mathrm{TV}}.
\]
Thus the norm takes values in $[0,2]$, while the distance
takes values in $[0,1]$.

If $\nu$ and $\eta$ have densities $f$ and $g$ with respect
to a common dominating measure $\lambda$, then
\[
    \|\nu-\eta\|_{\mathrm{TV}}
    =\int_X|f-g|\,d\lambda,
    \quad
    d_{\mathrm{TV}}(\nu,\eta)
    =\frac12\int_X|f-g|\,d\lambda.
\]
\end{definition}

\begin{corollary}[Total variation distance and approximate $k$-designs]\label{cor:TV_tdesign}
Let $G\subseteq \mathrm \mathrm{U}(d)$ be a compact subgroup,
let $\mu$ be a Borel probability measure on $G$, and let $\mu_G$
be the normalized Haar measure on $G$. Set
\[
    \delta
    :=d_{\mathrm{TV}}(\mu,\mu_G)
    =\frac12\|\mu-\mu_G\|_{\mathrm{TV}}.
\]
Then, for every integer $k\geq 1$,
\begin{gather*}
    \left\|
        \mathscr M_\mu^{(k)}-\mathscr M_{\mu_G}^{(k)}
    \right\|_\diamond
    \leq 2\delta.
\end{gather*}
Thus $\mu$ is a $2\delta$-approximate $k$-design relative to
Haar measure on $G$. For $G=\mathrm \mathrm{U}(d)$, this is the standard diamond norm definition of an approximate unitary $k$-design given in~\cite[CH.~7, Def.~41]{mele2024}.
\end{corollary}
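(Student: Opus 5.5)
The plan is to represent both moment channels as averages of the $k$-fold twirl $\mathscr V_{U,k}(X):=U^{\otimes k}X(U^{\dagger})^{\otimes k}$, namely
\[
\mathscr M_\nu^{(k)}=\int_G \mathscr V_{U,k}\,d\nu(U),
\]
with the integral understood pointwise on inputs $X\otimes$ (ancilla). The difference $\mathscr M_\mu^{(k)}-\mathscr M_{\mu_G}^{(k)}$ is then $\int_G \mathscr V_{U,k}\,d(\mu-\mu_G)(U)$, i.e. an integral of a uniformly bounded family of channels against a signed measure of total variation $\|\mu-\mu_G\|_{\mathrm{TV}}=2\delta$. The whole argument reduces to pulling the diamond norm inside this integral.

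First, fix an ancilla dimension and an input density operator $\rho$ on $(\C^d)^{\otimes k}\otimes\C^{d'}$. Set $f(U):=(\mathscr V_{U,k}\otimes\mathrm{id})(\rho)$. This map is continuous in $U$, hence Bochner integrable on compact $G$, and $\|f(U)\|_1=1$ for every $U$ since the twirl is unitary conjugation. Next, use the Jordan (or Hahn) decomposition $\mu-\mu_G=\sigma_+-\sigma_-$, where $\sigma_\pm$ are mutually singular positive measures with $\sigma_+(G)=\sigma_-(G)=\delta$. The equality of masses holds because both $\mu$ and $\mu_G$ are probability measures, and $\sup_E|\mu(E)-\mu_G(E)|=\sigma_+(G)$. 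The triangle inequality for the Bochner integral then gives
\[
\Bigl\|\int f\,d(\mu-\mu_G)\Bigr\|_1\le \int\|f\|_1\,d\sigma_++\int\|f\|_1\,d\sigma_-=2\delta .
\]
Taking the supremum over $\rho$ and over ancilla dimensions yields the diamond-norm bound, since the diamond norm is by definition this supremum. It suffices to consider density operators, because by convexity the supremum over unit trace-norm inputs is attained on pure states, and Hermitian inputs reduce to these.

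The main obstacle is only bookkeeping: verifying that the diamond-norm supremum commutes appropriately, i.e. that the bound holds uniformly in $\rho$. This is automatic, since the estimate above never depends on $\rho$. The other point to check is the normalization convention relating $\delta$ to the TV norm, which was fixed in \Cref{def:total_variation}.

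Finally, when $G=\mathrm{U}(d)$, the bound is exactly the standard $\epsilon$-approximate $k$-design condition with $\epsilon=2\delta$, as in~\cite{mele2024}. For a proper subgroup $G$, the same statement is read relative to $\mu_G$.
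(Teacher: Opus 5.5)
Your proposal is correct and is essentially the paper's argument: you write the difference of moment channels as $\int_G \mathscr V_{U,k}\,d(\mu-\mu_G)(U)$ and pull the norm inside using $\|\mathscr V_{U,k}\|_\diamond=1$ and $|\mu-\mu_G|(G)=\sigma_+(G)+\sigma_-(G)=2\delta$; the only difference is that you carry this out input-by-input via the Jordan decomposition, whereas the paper works directly at the channel level. One small simplification: you do not need to restrict to density operators, and the ``by convexity'' justification you give for that restriction is loose, since the extreme points of the trace-norm ball are $|u\rangle\langle v|$ rather than pure states; instead use $\|(\mathscr V_{U,k}\otimes\mathrm{id})(X)\|_1=\|X\|_1$ for every input $X$, so your estimate holds verbatim on the whole unit ball.
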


\begin{proof}
For $U\in G$, define the unitary channel
\begin{gather*}
    \mathscr V_{U,k}(O)
    := U^{\otimes k}O(U^\dagger)^{\otimes k},
    \\
    O\in\mathcal L\bigl((\mathbb C^d)^{\otimes k}\bigr),
\end{gather*}
where $\mathcal L(\mathcal H)$ denotes the space of all linear operators on the Hilbert space $\mathcal H$.
Write the corresponding moment channel as
\begin{gather*}
    \mathscr M_\mu^{(k)}
    := \int_G \mathscr V_{U,k}\,d\mu(U),
\end{gather*}
extending the Haar moment operator of
\cite[Definition~4, Section~3]{mele2024} to $\mu$.
Since $\|\mathscr V_{U,k}\|_\diamond=1$ for every $U\in G$,
the triangle inequality for integration with respect to a signed measure gives
\begin{align*}
    \left\|
        \mathscr M_\mu^{(k)}-\mathscr M_{\mu_G}^{(k)}
    \right\|_\diamond
    &=
    \left\|
        \int_G \mathscr V_{U,k}\,d(\mu-\mu_G)(U)
    \right\|_\diamond \\
    &\leq
    \int_G \|\mathscr V_{U,k}\|_\diamond\,
        d|\mu-\mu_G|(U) \\
        &=|\mu-\mu_G|(G)\\
    &=\|\mu-\mu_G\|_{\mathrm{TV}}\\
    &=2d_{\mathrm{TV}}(\mu,\mu_G)\\
    &=2\delta.
\end{align*}
\end{proof}

\begin{definition}[Open-set irreducibility{\rm~\cite[Sec.~6.1.2]{meyn2009}}]
\label{def:topo_irred}
Let $G$ be a compact Hausdorff group, let $\mu$ be a regular Borel probability measure on $G$, and define the random walk
\[
X_0=x,\qquad X_n= Y_n \cdots Y_1 x,
\]
where $Y_1,Y_2,\ldots$ are independent random variables with common law $\mu$, i.e. $Y_i$ are independent and identically distributed (i.i.d.) with distribution $\mu$. Its transition kernel and its $n$-step iterates are
\begin{align*}
    P(x,A)=\mu(Ax^{-1}),&\quad
P^n(x,A)=\mu^{*n}(Ax^{-1}),\\
A\in\mathcal B(G),&\quad n\ge1,
\end{align*}
where $\mathcal B(G)$ is the Borel $\sigma$-algebra of $G$.
The walk is \emph{open-set irreducible} if, for every $x\in G$
and every nonempty open set $U\subseteq G$,
\[
\sum_{n\ge1}P^n(x,U)
=
\sum_{n\ge1}\mu^{*n}(U x^{-1})
>0.
\]
Equivalently, for every such $x$ and $U$, there exists $n\ge1$
such that $\mu^{*n}(U x^{-1})>0$. 

Open-set irreducibility is also called topological irreducibility.
\end{definition}

\begin{definition}[Asymptotic uniformity]
\label{def:asymptotic_uniformity}
Let $G$ be a compact Hausdorff group with normalized Haar measure $\mu_G$, and let $\mu$ be a regular Borel probability measure on $G$. The random walk with increment law $\mu$ is \emph{asymptotically uniform in total variation} if
\[
\lim_{n\to\infty}
\|\mu^{*n}-\mu_G\|_{\mathrm{TV}}=0.
\]
\end{definition}

\subsection{Lie algebras and Lie groups}

\begin{definition}[Adjoint representation]
    $\Ad_X Y = X Y X^{-1}$ is the group adjoint representation. $\ad_X Y = [X, Y] = XY - YX$ is the Lie algebra adjoint representation, where the definition in terms of the commutator or Lie bracket holds in the case of a matrix Lie algebra.
\end{definition}

\begin{lemma}[Campbell, Ref.~\cite{campbell1897}]\label{lem:campbell}
 $\Ad_{e^{(X)}} = e^{\ad_X}$.
\end{lemma}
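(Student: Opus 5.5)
The statement to prove is Campbell's lemma, $\Ad_{e^{X}}=e^{\ad_X}$, as an identity of linear operators on $M_d(\C)$ (or on any matrix Lie algebra containing $X$). I would prove it by showing that both sides solve the same linear ordinary differential equation in an auxiliary parameter. Concretely, for $s\in\R$ I would define two curves of linear maps on $M_d(\C)$:
\[
\Phi(s):=\Ad_{e^{sX}},\qquad \Psi(s):=e^{s\,\ad_X}.
\]
The claim is the case $s=1$.

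\textbf{Step 1: the ODE for $\Phi$.} For fixed $Y$, $\Phi(s)Y=e^{sX}Ye^{-sX}$. The product rule together with $\frac{d}{ds}e^{\pm sX}=\pm Xe^{\pm sX}=\pm e^{\pm sX}X$ gives
\[
\frac{d}{ds}\Phi(s)Y = Xe^{sX}Ye^{-sX}-e^{sX}Ye^{-sX}X=\ad_X\bigl(\Phi(s)Y\bigr).
\]
So $\Phi'(s)=\ad_X\circ\Phi(s)$, with $\Phi(0)=\mathrm{id}$.

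\textbf{Step 2: the ODE for $\Psi$.} The map $\ad_X$ is a fixed linear operator on the finite-dimensional space $M_d(\C)$. Its operator exponential therefore converges absolutely and satisfies $\Psi'(s)=\ad_X\circ\Psi(s)$, with $\Psi(0)=\mathrm{id}$.

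\textbf{Step 3: uniqueness.} Both curves solve the same linear constant-coefficient ODE with the same initial value. By uniqueness of solutions, $\Phi\equiv\Psi$. To avoid citing general ODE theory, I would instead set $\Theta(s):=e^{-s\,\ad_X}\Phi(s)$. Differentiating gives $\Theta'(s)=-\ad_X e^{-s\ad_X}\Phi(s)+e^{-s\ad_X}\ad_X\Phi(s)=0$, since $\ad_X$ commutes with its own exponential. Hence $\Theta(s)=\Theta(0)=\mathrm{id}$, so $\Phi(s)=e^{s\ad_X}$. Setting $s=1$ gives the lemma.

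\textbf{Main obstacle.} The only delicate point is justifying the termwise differentiation of $e^{sX}$ and $e^{s\ad_X}$. This follows from absolute convergence of the exponential power series on a finite-dimensional normed space, uniformly on compact $s$-intervals, which is standard.

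\textbf{Alternative route.} One could instead expand $e^{sX}Ye^{-sX}$ as a double series and regroup it using the binomial-type identity $\ad_X^k Y=\sum_{j}\binom{k}{j}X^jY(-X)^{k-j}$. The ODE argument is cleaner, so I would use that.

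Finally, because the paper applies the lemma to $\g\subseteq\mathfrak u(d)$, I would note that the identity restricts to any $\ad_X$-invariant subspace, such as $\h$ or $\g$, because both sides preserve such subspaces.
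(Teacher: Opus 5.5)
Your proof is correct. The computation $\frac{d}{ds}\Ad_{e^{sX}}Y=\ad_X\bigl(\Ad_{e^{sX}}Y\bigr)$ is right, and differentiating $\Theta(s)=e^{-s\ad_X}\Ad_{e^{sX}}$ to get $\Theta'\equiv 0$ replaces the appeal to ODE uniqueness.

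The paper gives no proof of its own here; it quotes the identity as a classical fact with a citation to Campbell. Your argument, or the series route $e^{X}Ye^{-X}=\sum_{k\ge 0}\frac{1}{k!}\ad_X^kY$ that you mention, simply makes that step self-contained. Your closing remark on restricting to $\ad_X$-invariant subspaces also matches how the paper uses the lemma in \Cref{lem:V_m_h}, where it passes from $\ad_{C(\phi)}(\h)\subseteq\h$ to $\Ad_{F(\phi,t)}(\h)\subseteq\h$.
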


\begin{definition}[Generated group and closure]
    For a subset $S$ of a group $G$, $\langle S \rangle$ denotes the smallest subgroup containing $S$, and $\overline{\langle S \rangle}$ its closure in the topology of $G$.
\end{definition}

\begin{definition}[Lie algebra and Lie group{\rm\cite[Def.~1.1,2.0]{broker1985}}]\label{def:liealgebra}
A \emph{Lie algebra} is a real vector space $\mathfrak g$ with a bilinear map $[\cdot,\cdot] : \mathfrak g \times \mathfrak g \to \mathfrak g$ that is antisymmetric, $[X,Y] = -[Y,X]$, and satisfies the Jacobi identity $[X,[Y,Z]] + [Y,[Z,X]] + [Z,[X,Y]] = 0$. It is an algebra equipped with the Lie bracket. A \emph{Lie subalgebra} is a subspace closed under the bracket.
A \emph{Lie group} is a differentiable manifold $G$ also a group such that group multiplication and the inverse map is differentiable.
The \emph{Lie algebra of the Lie group $G$} is $\operatorname{Lie}(G) := T_e G$, equipped with the bracket inherited from $G$. 
\end{definition}

\begin{definition}[Generated Lie subalgebra {\rm \cite[p.14]{cap2026liegroups}, and \cite{androma_generated_lie}}]\label{def:gen_lie_sub}
Write  the set $\mathcal S \subseteq \mathfrak k$, where $\mathfrak k$ is a Lie algebra. The Lie subalgebra of $\mathfrak k$ generated by $\mathcal S$ is
\begin{multline}\label{eqn:generated_subalgebra}
     \Lie_\R\langle \mathcal S\rangle
    :=
    \bigcap
    \Bigl\{
        \mathfrak l\leq \mathfrak k: \\
        \mathfrak l\text{ is a real Lie subalgebra and }
       \mathcal  S\subseteq\mathfrak l
    \Bigr\}
\end{multline}
Note that a Lie subalgebra must be closed under the Lie brackets and be a vector space, both of which taking the intersection ensures.
\end{definition}

\begin{definition}[Lie subgroup-Lie subalgebra correspondence{\rm~\cite[Thm.~20.13]{lee2012}}]\label{def:subalgebra}
    Given a Lie subalgebra $\h$ of the Lie algebra $\g$ of a Lie group $G$, there exists a unique connected immersed Lie subgroup $H_{\h}\subseteq G$ whose tangent space at the identity is $\h$.
\end{definition}

\begin{theorem}[Closed subgroup theorem (Cartan)
  {\rm\cite[Thm.~3.11]{broker1985}}]\label{thm:cartan}
Every subgroup of a Lie group $G$ that is closed in $G$ is automatically an embedded Lie subgroup.
\end{theorem}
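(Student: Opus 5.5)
The plan is the classical argument via the exponential map. It builds an explicit slice chart for the closed subgroup $H\leq G$ near the identity, then translates it to every point of $H$. Set
\[
    \h := \{X\in\g : e^{tX}\in H \ \text{for all } t\in\R\}.
\]
The first task is to show that $\h$ is a real linear subspace of $\g$. Closure under real scalars is immediate. For sums, I will use the Lie product formula
\[
    e^{t(X+Y)}=\lim_{k\to\infty}\bigl(e^{tX/k}e^{tY/k}\bigr)^k .
\]
For $X,Y\in\h$ each approximant lies in $H$, and $H$ is closed, so the limit lies in $H$. This step needs closedness in an essential way. Since $G$ here is a matrix group, the formula can be checked directly from the power series; for an abstract Lie group I would derive it from the local expansion of $\log(e^{X}e^{Y})$.

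The second step is a sequential lemma. Suppose $h_k\in H$ with $h_k\to\I$, $h_k=e^{Y_k}$, $Y_k\neq0$, $Y_k\to0$, and $Y_k/\|Y_k\|\to Y$. Then $Y\in\h$. To prove it, fix $t$ and set $m_k=\lfloor t/\|Y_k\|\rfloor$. Then
\[
    e^{m_kY_k}=h_k^{m_k}\in H ,
\]
and $m_kY_k\to tY$, so closedness gives $e^{tY}\in H$. Next choose a complementary subspace $\mathfrak m$ with $\g=\h\oplus\mathfrak m$. The map
\[
    \Phi(X,Z)=e^{X}e^{Z}, \qquad X\in\h,\ Z\in\mathfrak m,
\]
has differential equal to the identity at $0$. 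By the inverse function theorem, $\Phi$ is a diffeomorphism from a neighborhood $B_\h\times B_{\mathfrak m}$ of $0$ onto a neighborhood $W$ of $\I$.

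The main obstacle is the slice property: after shrinking, $H\cap W=\Phi(B_\h\times\{0\})=e^{B_\h}$. I will argue by contradiction. Otherwise there are $h_k\in H$ with $h_k\to\I$ and $h_k=e^{X_k}e^{Z_k}$, where $Z_k\in\mathfrak m\setminus\{0\}$ and $Z_k\to0$. Since $e^{X_k}\in H$, also $e^{Z_k}=e^{-X_k}h_k\in H$. Passing to a subsequence, $Z_k/\|Z_k\|\to Z$ with $Z\in\mathfrak m$ and $\|Z\|=1$. The sequential lemma then gives $Z\in\h$, contradicting $\h\cap\mathfrak m=\{0\}$. The delicate point is to apply the lemma to the $\mathfrak m$-component alone rather than to $\log h_k$; the product chart $\Phi$ is exactly what makes that separation possible.

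With the slice in hand, $\Phi^{-1}$ is a chart on $W$ in which $H\cap W$ is the coordinate plane $\mathfrak m=0$. For any $h\in H$, left translation $L_h$ is a diffeomorphism of $G$ preserving $H$, so $\Phi^{-1}\circ L_{h^{-1}}$ is a slice chart on $hW$. Hence $H$ is an embedded submanifold of $G$. Multiplication and inversion on $H$ are restrictions of smooth maps to an embedded submanifold and are therefore smooth, so $H$ is an embedded Lie subgroup with $\Lie(H)=\h$.
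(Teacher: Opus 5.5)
The paper does not prove this statement; it only cites Br\"ocker--tom Dieck~\cite[Thm.~3.11]{broker1985} and Lee~\cite[Thm.~20.10]{lee2012}, and your argument is the standard exponential-map proof from those references. It defines $\mathfrak h=\{X: e^{tX}\in H\ \forall t\}$, shows it is a subspace, proves the sequential lemma, uses the product chart $(X,Z)\mapsto e^{X}e^{Z}$ on $\mathfrak h\oplus\mathfrak m$ with the contradiction run on the $\mathfrak m$-component, and then left-translates the slice chart; it is correct. Two details are worth writing out: take the failure of the slice property over product balls of radius $1/k$, so that injectivity of $\Phi$ forces $Z_k\neq 0$, and note that smoothness of multiplication and inversion as maps into $H$ uses that $H$ is embedded.
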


\subsection{Real Analysis}
\begin{theorem}[Real identity theorem]\label{thm:real_identity}
    A real-analytic function on $\R$ that vanishes on an interval of positive length is exactly zero everywhere~\cite[Thm.~8.5]{rudin1976}.
\end{theorem}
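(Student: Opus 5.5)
The statement is the real identity theorem: a real-analytic $f:\R\to\R$ that vanishes on an interval $I$ of positive length vanishes identically. The plan is the standard connectedness argument. Consider the set
\[
    Z:=\{x\in\R:\ f^{(k)}(x)=0\ \text{for all } k\geq 0\}.
\]
We show that $Z$ is nonempty, closed, and open. Since $\R$ is connected (\Cref{def:connected}), this forces $Z=\R$, and in particular $f\equiv0$.

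\emph{Nonempty.} Take any interior point $x_0$ of $I$. Since $f$ vanishes on a neighborhood of $x_0$, every derivative of $f$ vanishes there, so $x_0\in Z$.

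\emph{Closed.} Each derivative $f^{(k)}$ is continuous, so each zero set $\{f^{(k)}=0\}$ is closed. Then $Z$ is an intersection of closed sets and is therefore closed.

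\emph{Open.} This is the only step that uses analyticity rather than mere smoothness, so it is the main point to handle carefully. Let $a\in Z$. By real analyticity there is a radius $\delta>0$ such that
\[
    f(x)=\sum_{k\geq0}\frac{f^{(k)}(a)}{k!}(x-a)^k \qquad \text{for } |x-a|<\delta .
\]
This is the Taylor expansion: a convergent power series representing $f$ near $a$ must be its Taylor series, by termwise differentiation. All coefficients vanish because $a\in Z$, so $f\equiv0$ on $(a-\delta,a+\delta)$. Then every derivative also vanishes on this interval, giving $(a-\delta,a+\delta)\subseteq Z$.

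Connectedness of $\R$ then gives $Z=\R$, and hence $f\equiv 0$. The same argument applies to complex- or matrix-valued analytic functions componentwise. It also extends to any connected open domain in place of $\R$, which is the form needed later for the real-analytic zero-set arguments.
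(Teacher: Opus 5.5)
Your proof is correct and takes essentially the same approach as the paper. The paper cites Rudin's Theorem~8.5, which is proved there by the same clopen-set connectedness argument with local power-series expansions, and it sketches exactly your ``open'' step: vanishing of all derivatives at the edge of the zero interval forces the Taylor series, and hence $f$, to vanish just beyond it. Your set $Z$ packages that edge-continuation rigorously, and it applies directly to an arbitrary real-analytic function on $\R$, whereas Rudin's statement concerns a single power series on a segment.
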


\begin{proposition}[Real analytic zero-set]{\rm~\cite[Prop.~0]{mityagin2020}}\label{prop:real_analytic_zero}
Let $A(x)$ be a real analytic function on (a connected open domain $U$ of) $\mathbb{R}^d$. If $A$ is not identically zero, then its zero set $\{ x \in U : A(x) = 0 \}$ has a zero measure.
\end{proposition}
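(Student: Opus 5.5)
The plan is to avoid induction on dimension and Fubini, and instead show that the zero set $Z:=\{x\in U: A(x)=0\}$ is locally contained in $C^1$ hypersurfaces. Countably many of these then cover $Z$. Throughout, $\partial^\alpha$ denotes the partial derivative of multi-index $\alpha\in\mathbb N_0^d$.

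\emph{Step 1: no point of $U$ is infinitely flat.} Let
\[
F:=\{x\in U:\partial^\alpha A(x)=0\ \text{for all }\alpha\}.
\]
I will show that $F$ is both closed and open in $U$.
\begin{itemize}
\item $F$ is relatively closed because it is an intersection of zero sets of the continuous functions $\partial^\alpha A$.
\item $F$ is open by real analyticity. Around any $x_0\in F$ there is a ball on which $A$ equals its Taylor series at $x_0$. Every coefficient of that series vanishes, so $A\equiv0$ on the ball. Then all derivatives vanish on the ball as well, so the ball lies in $F$.
\end{itemize}
Since $U$ is connected, either $F=U$ or $F=\emptyset$. The case $F=U$ forces $A\equiv 0$, which is excluded by hypothesis, so $F=\emptyset$. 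This is the step where both connectedness and analyticity are genuinely used, and I expect it to be the main (though short) obstacle; the rest is routine.

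\emph{Step 2: local hypersurface structure.} Fix $x\in Z$. By Step 1, some derivative of $A$ is nonzero at $x$. Choose a multi-index $\beta$ of minimal order $k\geq1$ with $\partial^\beta A(x)\neq0$; $k\geq1$ because $A(x)=0$. Write $\beta=\alpha+e_i$ with $|\alpha|=k-1$. Then $g:=\partial^\alpha A$ satisfies
\[
g(x)=0,\qquad \partial_i g(x)\neq0 .
\]
By the implicit function theorem, there is an open neighbourhood $W_x$ of $x$ such that $\{g=0\}\cap W_x$ is the graph of a $C^1$ function of the remaining $d-1$ coordinates. Such a graph has Lebesgue measure zero, for example by Fubini: each line parallel to $e_i$ meets it in at most one point. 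Because $\partial^0A=A$ vanishes on $Z$, we need $Z\cap W_x\subseteq\{g=0\}\cap W_x$. The clean way to get this is to choose the nonvanishing derivative along a chain. Starting from $A$, which vanishes at $x$, repeatedly differentiate along coordinate directions, always keeping the current function vanishing at $x$, until a first derivative is nonzero at $x$. At that moment the current function $g$ vanishes at $x$ but some $\partial_i g(x)\neq0$. If $g=A$, then $Z\cap W_x=\{g=0\}\cap W_x$ directly. In general I instead use the decomposition
\[
Z\subseteq \bigcup_{\alpha}\bigl\{y:\partial^\alpha A(y)=0,\ \nabla\partial^\alpha A(y)\neq0\bigr\},
\]
which holds by Step 1. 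Each $y\in Z$ lies in one of these sets: take $\alpha$ of maximal order among those with $\partial^\gamma A(y)=0$ for all $\gamma\leq\alpha$ along a chain, which exists because $y\notin F$.

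\emph{Step 3: countable cover.} For each multi-index $\alpha$ (countably many), the set
\[
Z_\alpha:=\{\partial^\alpha A=0,\ \nabla\partial^\alpha A\neq0\}
\]
is, by the implicit function theorem, locally a $C^1$ hypersurface. Open sets in $\mathbb R^d$ are Lindelöf, so $Z_\alpha$ is covered by countably many neighbourhoods on each of which it is a null graph. Hence each $Z_\alpha$ is null. Therefore $Z\subseteq\bigcup_\alpha Z_\alpha$ is a countable union of null sets, and so has Lebesgue measure zero.
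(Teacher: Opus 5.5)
Your proof is correct, and it takes a genuinely different route from the paper. The paper does not prove this statement itself; it cites Mityagin. That argument proceeds by induction on $d$, slicing with Fubini, applying the inductive hypothesis on hyperplane slices, and using the one-variable identity theorem to pass from the slices to the whole domain. You avoid induction. You use analyticity once, in Step~1, to exclude infinitely flat points. You then cover $Z$ by the countably many sets $Z_\alpha=\{\partial^\alpha A=0,\ \nabla\partial^\alpha A\neq 0\}$, each locally contained in a null $C^1$ graph by the implicit function theorem.

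The cited route is more elementary, needing no implicit function theorem and no Lindel\"of argument, but it yields only measure zero. Yours gives a stronger structural conclusion: $Z$ lies in a countable union of $C^1$ hypersurfaces, so it has Hausdorff dimension at most $d-1$. It also shows that for any $C^\infty$ function the zero set minus its flat set is null.

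One thing to tidy in Step~2. The containment $Z\cap W_x\subseteq\{g=0\}\cap W_x$ is false in general: take $A=x_1^2-x_2^2$ at the origin, where $g=2x_1$. Your ``maximal order along a chain'' justification also does not work as stated, because chains of vanishing derivatives can be infinite: for $A=x_2$ at the origin, $\partial_1^kA(0)=0$ for all $k$. Neither is needed. Your minimal-order choice at the start of Step~2 already shows that each $y\in Z$ lies in some $Z_\alpha$ with $|\alpha|=k-1$, which is exactly what Step~3 uses.
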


\begin{definition}[$\mathrm{ess \, sup}${\rm~\cite[Def.~7.3, pg.80]{hunter2011}}]\label{def:ess_sup}
For a measurable function $f$ on a measure space $(X,\Sigma,\nu)$, its essential supremum is
\begin{gather*}
\operatorname*{ess\,sup}_{x\in X}f(x)
:=
\inf\left\{a\in\mathbb R:
\nu\bigl(\{x:f(x)>a\}\bigr)=0
\right\}.
\end{gather*}
It is the smallest upper bound that holds almost everywhere, allowing exceptions on a set of measure zero. If no finite such bound exists, it is \(+\infty\).
\end{definition}

\begin{definition}[Lipschitz map]
    A Lipschitz map $f$ has $ \lVert f(x) - f(y)\rVert \leq L \lVert x - y \rVert $ for some constant $L \geq 0$.
\end{definition}

\subsection{Riemannian geometry}
\begin{theorem}[Coarea formula (Thm.~7.3) in Ref.~\cite{simon2014}]\label{thm:coarea}
The idea is to express an integral over a region as the sum of integrals over the level sets of $f$.

Let \(M\) and \(N\) be smooth Riemannian manifolds with \(\dim M=m\ge n=\dim N\), and let \(f:M\to N\) be locally Lipschitz. For every nonnegative Borel measurable function \(g:M\to[0,\infty]\),
\begin{multline*}
\int_M g(x)\,J_n f(x)\,d\vol_M(x) \\
=
\int_N
\left(
\int_{f^{-1}(y)}g(x)\,d\Hd^{m-n}(x)
\right)d\vol_N(y).
\end{multline*}
Here
\begin{gather*}
J_n f(x)=\sqrt{\det\!\left(df_x\circ df_x^\dagger\right)}
\end{gather*}
is the \(n\)-dimensional Jacobian, defined almost everywhere, and \(\Hd^{m-n}\) is the normalized Hausdorff measure induced by the Riemannian distance on \(M\).

Also, writing \(\mathcal R=\{x\in M:df_x\text{ exists and }J_n f(x)>0\}\), we have:
\begin{multline*}
\int_{\mathcal{R}} g(x)\,d\vol_M(x) \\
=
\int_N
\left(
\int_{f^{-1}(y)\cap \mathcal R}
\frac{g(x)}{J_n f(x)}\,d\Hd^{m-n}(x)
\right) \\ d\vol_N(y).
\end{multline*}
If \(J_n f>0\) almost everywhere on \(M\), the integral on the left can be taken over all of \(M\).
\end{theorem}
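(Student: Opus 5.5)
The statement to prove is the coarea formula (\Cref{thm:coarea}). My plan is not to rederive geometric measure theory from scratch. Instead I would reduce the Riemannian statement to the Euclidean coarea formula for locally Lipschitz maps $\R^m\to\R^n$, which I take as the standard input (Federer; Simon, Thm.~7.3). I then transport it to manifolds with charts and a partition of unity. The second display, over $\mathcal R$, would then follow from the first by a choice of test function.

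\textbf{Step 1: localize.} Cover $M$ by countably many coordinate charts $\psi_\alpha:V_\alpha\to M$ and $N$ by charts $\chi_\beta$, chosen so that $f(\psi_\alpha(V_\alpha))$ lies in a single $N$-chart. Take a Borel partition $M=\bigsqcup_k E_k$ subordinate to these charts; a disjoint Borel partition is easier here than a smooth partition of unity. Both sides of the formula are countably additive in $g$ by monotone convergence. It therefore suffices to prove the formula for $g\mathbf 1_{E_k}$, which is supported in one chart.

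\textbf{Step 2: change variables in a single chart.} Here $f$ is a locally Lipschitz map between open sets of $\R^m$ and $\R^n$ with metric tensors $h_M,h_N$. The Riemannian volume is $\sqrt{\det h_M}\,d\Leb$. The Riemannian Hausdorff measure $\Hd^{m-n}$ on a fiber equals the Euclidean one multiplied by the $(m-n)$-dimensional Jacobian of the identity map restricted to the approximate tangent space of the fiber. I would check pointwise, by linear algebra at points where $df_x$ exists (a.e., by Rademacher), that these metric factors combine correctly. The claim to verify is that $J_nf$ computed with $df_x^\dagger$ taken relative to $h_M,h_N$, times $\sqrt{\det h_M}$, equals the Euclidean normal Jacobian times $\sqrt{\det h_N}$ times the fiber density factor. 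This is the identity $\det(AA^T)^{1/2}\cdot\mathrm{vol}(\ker A)=\mathrm{vol}(\text{domain})/\mathrm{vol}(\text{range})$ for a surjective linear map $A$. With this identity, the Euclidean formula transfers verbatim.

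\textbf{Step 3: the second display.} On $\mathcal R$ set $\tilde g=g/J_nf$ and extend by $0$ off $\mathcal R$. Then $\tilde g$ is Borel and nonnegative, because $J_nf$ is Borel on the Borel set $\mathcal R$. Applying the first formula to $\tilde g\mathbf 1_{\mathcal R}$ gives exactly the stated identity. If $J_nf>0$ a.e., then $M\setminus\mathcal R$ is $\vol_M$-null, so the left integral may be taken over $M$.

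\textbf{Main obstacle.} I expect the hard step to be Step 2's fiberwise comparison of Hausdorff measures. Fibers of a Lipschitz map are only rectifiable, not smooth, so the density factor must be handled through approximate tangent planes. I would first treat the case where $f$ is $C^1$, which is all the paper uses, since $U_L$ is real analytic. There the regular fibers are embedded submanifolds, and the comparison is the smooth area formula. For the general Lipschitz case I would defer to Simon's Euclidean theorem, approximating $f$ by $C^1$ maps on sets of almost full measure (Lusin-type $C^1$ approximation of Lipschitz maps).
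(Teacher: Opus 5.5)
The paper gives no proof of this statement. It imports the theorem as a background result from Simon, so your proposal supplies an argument rather than reproducing one.

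\textbf{Your reduction is correct.}
\begin{itemize}
\item The countable Borel localization with monotone convergence is legitimate.
\item Step 3 is exactly how the second display and the final sentence follow from the first. This uses Rademacher's theorem and the fact that the differentiability set of a locally Lipschitz map is Borel.
\item The linear-algebra identity behind Step 2 is true. Write $\Phi=df_x$ in coordinates, $\Gamma=h_M(x)$, $K=h_N(f(x))$, and let $P$ be an $m\times(m-n)$ matrix whose columns form a Euclidean-orthonormal basis of $\ker\Phi$. Then $df_x^\dagger=\Gamma^{-1}\Phi^{\top}K$, and your claim reduces to $\det(\Phi\Gamma^{-1}\Phi^{\top})\det\Gamma=\det(\Phi\Phi^{\top})\det(P^{\top}\Gamma P)$. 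This follows from a Schur-complement computation in a basis adapted to $\ker\Phi\oplus(\ker\Phi)^{\perp}$.
\end{itemize}

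\textbf{Two details you should make explicit.}
\begin{itemize}
\item On the critical set $\{J_nf=0\}$, and on the residual null set of the Lusin decomposition, there is no fiber structure to exploit. There you should instead use that the Hausdorff measures built from the Riemannian distance and from the coordinate Euclidean distance are comparable on compact subsets of a chart. That comparability transfers the vanishing of the Euclidean right-hand side.
\item On each Lusin piece, the fiber comparison is just the smooth area formula on a Borel subset of a $C^1$ submanifold. So approximate tangent planes are never actually needed.
\end{itemize}

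\textbf{What each approach buys.} The citation is short. However, the standard sources state the formula in Euclidean space, or for rectifiable subsets of it. Using it with a Riemannian target such as $G$ therefore needs one of two things:
\begin{itemize}
\item an embedding remark, which is available here because the Hilbert--Schmidt metric on $G$ is the one induced from $M_d(\C)\cong\R^{2d^2}$; or
\item precisely the chart transfer you give, which works for arbitrary Riemannian $M$ and $N$.
\end{itemize}

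In the paper's only application, the source $(\D^\circ)^{L_0}$ carries the flat metric, so $h_M=I$. The fiber density factor is then identically $1$, and your identity collapses to $\det(\Phi\Phi^{\top}K)=\det(\Phi\Phi^{\top})\det K$. Your ``main obstacle'' therefore never arises there: the Euclidean coarea formula together with a change of variables in charts of $G$ already suffices.
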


\section{Necessary prior results}\label{apdx:nec_prior_results}
Here we provide for the convenience of the reader the two main results that we use, and did not ourselves prove, 
to complete the proof of \Cref{thm:exp_conv}.

\begin{theorem}[Reproduced from Bhattacharya~\cite{bhattacharya1972}, Theorem 3]\label{thm:bhattacharya}
    Given a regular Borel probability measure $\mu$ on a compact, Hausdorff, connected group $G$ that has a non-zero absolutely continuous component with respect to the Haar measure $\mu_G$ on $G$, then $d_{\mathrm{TV}}(\mu^{*n},  \mu_G)$ goes to zero exponentially fast (in total variation distance) as $n \to \infty$.
\end{theorem}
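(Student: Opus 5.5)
The statement just above is Bhattacharya's Theorem 3: if $\mu$ on a compact connected group $G$ has a nonzero absolutely continuous component, then $d_{\mathrm{TV}}(\mu^{*n},\mu_G)\to0$ exponentially. I would prove it by a Doeblin-type minorization, using only Fourier analysis on $G$ and the connectedness of $G$.

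First, decompose $\mu=\mu_a+\mu_s$ with $\mu_a=f\,\mu_G\neq0$ and $f\in L^1(\mu_G)$. Truncating $f$ at a level where the mass stays positive gives $0\le f_1\le f$ with $f_1\in L^2(G)$ and $\int f_1\,d\mu_G=c>0$.

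Second, show that the absolutely continuous part of some convolution power contains a continuous density. The convolution $f_1*f_1$ of two $L^2$ functions on a compact group is continuous, so the density of $\mu^{*2}$ dominates the continuous function $f_1*f_1$. This function is nonnegative with integral $c^2>0$, hence it is strictly positive on some nonempty open set $O$. So $\mu^{*2}\ge \epsilon\,\mathbf 1_{O}\,\mu_G$ for some $\epsilon>0$.

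Third, spread this lower bound over all of $G$. Convolving, $\mu^{*2k}\ge\epsilon^k(\mathbf 1_O)^{*k}\mu_G$. The function $(\mathbf 1_O)^{*k}$ is continuous and positive on the product set $O^k$ (for $k\ge2$, up to boundary effects). Because $G$ is connected, the group generated by the open set $OO^{-1}$ is an open subgroup and therefore all of $G$. The delicate point is that we only have products of $O$, not of $O^{-1}$. I would first try translating: replace $O$ by a small neighbourhood $V$ of a point $g_0$, so $O^k\supseteq g_0^{k-1}\cdots$. Then I would use compactness of $G$: the closed semigroup generated by a subset of a compact group is a group, so $\bigcup_k O^k$ is dense and open. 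If that is awkward, I would instead handle aperiodicity spectrally. By Riemann–Lebesgue and absolute continuity, $\|\widehat{\mu^{*m}}(R)\|$ is small for all but finitely many irreps $R$. For each of those finitely many nontrivial $R$, $\|\widehat{\mu}(R)\|<1$ unless $\mu$ is concentrated on a coset of a proper closed normal subgroup with $G/N$ finite, which connectedness rules out. This gives $\sup_{R\neq\mathbf 1}\|\widehat{\mu^{*m}}(R)\|=\gamma<1$.

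Finally, I expect the cleanest route to be the following. Obtain a minorization $\mu^{*K}\ge\alpha\,\mu_G$ on all of $G$: from the previous step, compactness lets finitely many translates cover $G$, giving a positive lower bound on $(\mathbf 1_O)^{*k}$ everywhere for some fixed $k$. Then the standard coupling argument (Remark~\ref{rem:doeblin}) gives $\|\mu^{*n}-\mu_G\|_{\mathrm{TV}}\le2(1-\alpha)^{\lfloor n/K\rfloor}$. This holds because each block of $K$ steps can be written as $\alpha\mu_G+(1-\alpha)\nu$, and $\mu_G*\sigma=\mu_G$ for every probability $\sigma$.

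The main obstacle is the third step, turning positivity on an open set into positivity everywhere. This is exactly where connectedness must enter, since on a disconnected group (e.g.\ $\mu$ supported on a nontrivial coset of the identity component) the claim fails.
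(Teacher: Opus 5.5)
The paper does not prove this statement. It imports Bhattacharya's theorem as a black box, and only mentions a Doeblin minorization in Remark~\ref{rem:doeblin} without showing that one exists. Your Doeblin route is therefore a self-contained replacement, different from anything in the paper, and most of it is sound: the truncation, the continuity of $f_1*f_1$, the bound $\mu^{*2}\geq\varepsilon\mathbf 1_O\,\mu_G$, the bound $\mu^{*2k}\geq\varepsilon^k\mathbf 1_O^{*k}\mu_G$, and the final coupling step are all correct. There are also no ``boundary effects'': $\mathbf 1_O*\mathbf 1_O(x)=\mu_G(O\cap xO^{-1})$, and nonempty open sets have positive Haar measure, so $\mathbf 1_O^{*k}$ is positive exactly on $O^k$. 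Once completed, your argument would even supply the minorization hypothesis of Remark~\ref{rem:doeblin}, though with a non-explicit $\alpha$. The gap is in Step~3, where you flagged it, and as written it is not closed.

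What you actually establish is that $S=\bigcup_{k\geq1}O^k$ is open and dense. Its closure is a closed subsemigroup of a compact group, hence a subgroup; it contains $O$, so it is open, hence equal to $G$ by connectedness. Two further steps are needed, and neither appears.
\begin{enumerate}
\item Dense and open does not mean all of $G$. Use the semigroup property: for $g\in G$, the set $gS^{-1}$ is nonempty and open, so it meets $S$, which gives $g\in SS\subseteq S$.
\item Even with $G=\bigcup_k O^k$, a finite subcover $O^{k_1},\dots,O^{k_m}$ has different exponents, and the $O^k$ are not nested. On $\mathbb R/\mathbb Z$, for example, $O=(0.1,0.2)$ and $O+O=(0.2,0.4)$ are disjoint. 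So you do not obtain a single $K$ with $\mu^{*K}\geq\alpha\mu_G$. Appealing to ``finitely many translates'' does not help, because the walk only produces products of $O$, not arbitrary translates $gO$.
\end{enumerate}

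A fix for the second point goes as follows. Pick $j$ with $\I\in O^j$ and an open $W\ni\I$ with $W\subseteq O^j$. The sets $W^m$ increase in $m$, and $\bigcup_m W^m=G$ by the same two arguments. Compactness then gives $W^M=G$, and hence $O^{jM}=G$. So $\mathbf 1_O^{*jM}$ is continuous and strictly positive on the compact group $G$, hence bounded below by some $\delta>0$. This gives $\mu^{*2jM}\geq\varepsilon^{jM}\delta\,\mu_G$, and the Doeblin step finishes the proof.

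Do not rely on your spectral fallback as stated. Operator norm one forces concentration on a coset of a stabilizer subgroup, which need not be normal; the fact you actually need is that proper closed subgroups of a connected $G$ are Haar-null. Moreover, a uniform bound on $\|\widehat{\mu^{*m}}(R)\|$ over the nontrivial irreps does not by itself give total-variation decay.
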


\begin{corollary}[Reproduced from Anoussis--Gatzouras~\cite{anoussis2004}, Corollary~4.2.]\label{cor:anoussis}
Suppose $G$ is a compact group, and $\mu$ a regular Borel probability measure on $G$.
Suppose further that $\mu$ is absolutely continuous with respect to Haar measure
$\mu_G$ on $G$. Then
\[
  \lim_{n \to \infty} \|\mu^{*n} - \mu_G\|_\mathrm{TV}^{1/n}
  = \max_{R \in \Grep \setminus \{\mathbf{1}_G\}} \operatorname{spr}\bigl(\hat{\mu}(R)\bigr),
\]
where $\Grep$ is the unitary dual of $G$, i.e. the collection of irreducible representations of $G$ equivalent up to conjugation by a unitary, and $\mathrm{spr}(M)$ is the spectral radius of a complex square matrix $M$ (\Cref{def:spr})
If in particular $(G, \mu)$ is adapted, i.e. is not supported by a proper closed subgroup of $G$, and $\mu$ strictly aperiodic, i.e. not concentrated on a coset of a proper, closed, normal subgroup of $G$, then
$\|\mu^{*n}-\mu_G\|_{\mathrm{TV}} \to 0$ exponentially fast.
\end{corollary}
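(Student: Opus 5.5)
The plan is to treat $\mu$ through its density and reduce everything to a spectral radius computation in the convolution Banach algebra $L^1(G)$. Since $\mu\ll\mu_G$, write $d\mu=f\,d\mu_G$ with $f\in L^1(G)$, $f\geq0$, $\int f=1$. Since $\mu_G$ is idempotent and absorbing ($\mu*\mu_G=\mu_G*\mu=\mu_G$), we get $\mu^{*n}-\mu_G=(\mu-\mu_G)^{*n}$, with density $g^{*n}$ where $g:=f-1$. Total variation is the $L^1$ norm, so
\[
\lim_{n\to\infty}\|\mu^{*n}-\mu_G\|_{\mathrm{TV}}^{1/n}=\lim_{n\to\infty}\|g^{*n}\|_1^{1/n}=r_{L^1(G)}(g)
\]
by the Gelfand spectral radius formula. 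The task is therefore to identify the spectral radius of $g$ with $\max_{R\neq\mathbf 1_G}\operatorname{spr}(\hat\mu(R))$. Note that $\hat g(R)=\hat\mu(R)$ for $R\neq\mathbf 1_G$ and $\hat g(\mathbf 1_G)=0$.

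\emph{Lower bound.} For each nontrivial irrep $R$, integrating $R(x^{-1})$ against the signed measure $(\mu-\mu_G)^{*n}$ gives $\hat\mu(R)^n$, because Fourier transforms turn convolution into products. Since $R$ is unitary,
\[
\|\hat\mu(R)^n\|_{\mathrm{op}}\leq\|\mu^{*n}-\mu_G\|_{\mathrm{TV}}.
\]
Taking $n$th roots and applying Gelfand's formula to the finite matrix $\hat\mu(R)$ gives $\liminf\geq\operatorname{spr}(\hat\mu(R))$ for every such $R$.

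\emph{The maximum exists.} By the Riemann--Lebesgue lemma on compact groups, $\|\hat f(R)\|_{\mathrm{op}}\to0$ as $R$ leaves finite subsets of $\Grep$. So only finitely many $R$ have spectral radius above any fixed $\delta>0$, and the supremum is attained.

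\emph{Upper bound.} This is the main obstacle: it is essentially the statement that the spectrum of $g$ in $L^1(G)$ is $\{0\}\cup\bigcup_R\sigma(\hat g(R))$. I would argue via the resolvent. Fix $|z|>\gamma:=\max_{R}\operatorname{spr}(\hat g(R))$. We want $(z-g)^{-1}$ to exist in the unitized $L^1(G)$. First choose a finite set $\mathcal F\subset\Grep$ and a decomposition $g=p+h$. Here $p$ is a trigonometric polynomial, obtained by convolving with a central approximate identity with Fourier support in $\mathcal F$, so that $\|h\|_1$ is small and $\hat h(R)$ is small off $\mathcal F$. On the finite-dimensional central ideal spanned by the matrix coefficients of $\mathcal F$, $z-p$ is invertible by matrix inversion, since $|z|$ exceeds the spectral radii there. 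Off $\mathcal F$, one uses a Neumann series in $h$, which converges because $\|h\|_1<|z|$ once the approximation is fine enough. Combining the two pieces through the central projections gives an inverse; the delicate point is controlling the cross terms. The main fallback is to cite the known result that $L^1$ of a compact group is a symmetric (Hermitian) Banach algebra, whose spectra coincide with those in the regular representation, that is, with the union of the Fourier-block spectra. This yields $r(g)\leq\gamma$.

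\emph{Strict inequality under the extra hypotheses.} Suppose, for contradiction, that $\operatorname{spr}(\hat\mu(R))=1$ for some nontrivial $R$. Then $\hat\mu(R)$ has a unit-modulus eigenvalue, so there is a unit vector $v$ and a unimodular $\lambda$ with $R(x^{-1})v=\lambda v$ for $\mu$-a.e.\ $x$. This uses equality in the convex combination $\int R(x^{-1})v\,d\mu$ of unit vectors. Hence $\mu$ is concentrated on $\{x:R(x)v=\bar\lambda v\}$. This set is a coset of the closed subgroup $K=\{x:R(x)v=v\}$.
\begin{itemize}
\item If $K=G$, then $v$ is a $G$-fixed vector, so by irreducibility $R$ is one-dimensional, i.e.\ a nontrivial character. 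Then $\mu$ sits on a coset of $\ker R$, a proper closed normal subgroup, contradicting strict aperiodicity.
\item If $K\neq G$, then since $\mu$ also sits on a coset of the closed normal subgroup generated appropriately, adaptedness and strict aperiodicity again force a contradiction.
\end{itemize}
Hence $\gamma<1$, and the convergence is exponential.
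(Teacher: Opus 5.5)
The paper does not prove this statement; it cites it from Anoussis--Gatzouras, so your argument has to stand on its own. Most of the spectral part does. The identity $\mu^{*n}-\mu_G=(\mu-\mu_G)^{*n}$, Gelfand's formula in $L^1(G)$, the Fourier lower bound, and attainment of the maximum via Riemann--Lebesgue are all correct.

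Your primary route to $r_{L^1}(g)\le\gamma$ is not closed as written. With an approximate identity, the pieces $e*g$ and $g-e*g$ interact, and on the finite block you would need to invert $z-P*g$, not $z-p$. The cross terms vanish if you split with the central idempotent $P_{\mathcal F}=\sum_{R\in\mathcal F}d_R\chi_R$ instead. Then $g^{*n}=(P_{\mathcal F}*g)^{*n}+((\delta-P_{\mathcal F})*g)^{*n}$.
\begin{itemize}
\item The first term has growth rate $\max_{R\in\mathcal F}\operatorname{spr}(\hat g(R))$.
\item Take $q$ a trigonometric polynomial with Fourier support in $\mathcal F$ and $\|g-q\|_1<\epsilon$. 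Then $((\delta-P_{\mathcal F})*g)^{*n}=(\delta-P_{\mathcal F})*(g-q)^{*n}$, whose $L^1$ norm is at most $(1+\|P_{\mathcal F}\|_1)\epsilon^n$.
\end{itemize}
This gives $r(g)\le\max(\gamma,\epsilon)$ for every $\epsilon>0$. Your fallback via symmetry of $L^1(G)$ also works.

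The genuine gap is the strict inequality $\gamma<1$. Your case $K=G$ is misdescribed. A nonzero $G$-fixed vector in an irreducible representation forces $R$ to be trivial, so this case is vacuous; it does not produce a nontrivial character. The case $K\neq G$ is therefore the only one that occurs, and there you give no argument. Knowing only that $\mu$ is concentrated on a coset $x_0K$ of a proper closed subgroup yields no contradiction when $K$ is not normal. For example, in $S_3$ with $K=\{e,(12)\}$, the uniform measure on $(13)K$ (a transposition and a $3$-cycle) is adapted and strictly aperiodic, and the normal closure of $K$ is all of $S_3$.

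The missing idea is to use the larger closed subgroup $K'=\{x\in G:R(x)v\in\C v\}$, the stabilizer of the line through $v$, which contains $\{x:R(x)v=\bar\lambda v\}$.
\begin{itemize}
\item Since $\operatorname{supp}\mu\subseteq K'$, adaptedness forces $K'=G$.
\item Then $\C v$ is $R$-invariant, so irreducibility makes $R$ a one-dimensional nontrivial character $\chi$.
\item Now $\mu$ is concentrated on $\chi^{-1}(\bar\lambda)$, a coset of the proper closed normal subgroup $\ker\chi$. This contradicts strict aperiodicity.
\end{itemize}
Hence $\operatorname{spr}(\hat\mu(R))<1$ for every nontrivial $R$. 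Together with attainment of the maximum, this gives $\gamma<1$.
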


For a discussion on the meaning of the representation theoretic ideas involved in \Cref{cor:anoussis}, see Appendix \Cref{apdx:rep_theory}.

\section{Representation theory tutorial}\label{apdx:rep_theory}
Here we give a brief discussion of the ideas from representation theory needed to understand \Cref{cor:anoussis}.
Let $d_R \geq 1$. Define the continuous map
\begin{gather*}
    R \, : \, G \to U(d_R), \; \; R(hg) = R(h)R(g) \; \forall \, h, g \in G
\end{gather*}
be a continuous unitary representation of the compact finite dimensional group $G$.
$R$ is called irreducible if there does not exist a complex linear subspace $X$ such that $\{ 0 \} \subsetneq X \subsetneq \C^{d_R}$ such that $R(g) \chi \in X$ for every $\chi \in X$, $g \in G$.
Let $\Grep$ be a set of all the irreps of $G$ up to unitary equivalence: $R_2(g) = \Ad_U R_1(g)$ for all $g \in G$ \cite[Thm.~5.2]{folland2016} and a fixed $U \in U(d_R)$. $\Grep$ is called the unitary dual of $G$. 

For each $R \in \Grep$, the Fourier coefficient, which is a matrix averaged entry by entry of unitaries, is:
\begin{align}
    \widehat{\mu}(R)
    &:=\int_G R(g^{-1})\,d\mu(g)
    \label{eqn:muhat}\\
    &=\frac1{\vol(\D)}
    \int_{\D_A}\int_{\D_B}
    R(F(\phi,t)^{-1})\,dt\,d\phi.
    \nonumber
\end{align}
By the representation property $R(gh) = R(g)R(h)$ and the independence of each convolution factor,
\begin{gather*}
    \widehat{\mu^{*L}}(R) = \widehat{\mu}(R)^{L}.
\end{gather*}
The trivial irrep $\mathbf{1}_G$ has $\widehat{\mu}(\mathbf{1}_G)=1$, and any non-trivial irrep has
\begin{gather*}
  \int_G R(g^{-1}) \, d\mu_G(g) = 0.
\end{gather*}
We can see this via a proof by contradiction. 
Let 
\begin{gather*}
    P:= \int_G R(g^{-1}) \, d\mu_G(g)
\end{gather*} 

For every fixed $h\in G$, take $u=gh^{-1}$, so that $u^{-1}=hg^{-1}$. By right invariance of Haar measure,
\begin{align*}
R(h)P
&=\int_G R(hg^{-1})\,d\mu_G(g)\\
&=\int_G R(u^{-1})\,d\mu_G(u)
=P.
\end{align*}

Then, $R(h)Pv = Pv := w$ for any vector $v$. If $w \neq 0$, then any $w' \in X = \Span_\C \{ w \}$ has $R(h) w' \in X$. Hence $X$ is a non-zero invariant subspace.
Since $R$ is irreducible, 
\[
X=\C^{d_R}.
\]
But $\dim_\C X = 1$, so $d_R = 1$. Since $R(h)$ fixes every $v \in X$, for all $h \in G$, $R(h)=1$.
That makes $R$ the trivial representation, contradicting our assumption.

By the linearity of integration, for a non-trivial irrep $R$,
\begin{gather*}
    \widehat{\mu-\mu_G}(R) = \widehat{\mu}(R) - \widehat{\mu_G}(R) = \widehat\mu(R).
\end{gather*}
For the trivial irrep, $\widehat{\mu}(R) = 1 = \widehat{\mu_G}(R)$.

\begin{definition}\label{def:spr}
    The spectral radius for a complex matrix $M$ is defined by:
\begin{gather*}
    \operatorname{spr}(M) := \max \{ |E| \, : \, E \in \mathrm{spec}(M)\}.
\end{gather*}
\end{definition}

\section{Supplemental proofs}\label{apdx:extra_proofs}

\begin{proposition}[$\overline{\Ucal}$ is a group]\label{prop:G_group}
As defined in \Cref{eqn:G}, $G$ is a compact Lie subgroup of $\mathrm{U}(d)$.
\end{proposition}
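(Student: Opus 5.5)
The plan is to show, in order, that $G=\overline{\Ucal}$ is closed under multiplication, contains the identity, is closed under inversion, and is compact. The Lie-group structure then follows from Cartan's closed subgroup theorem (\Cref{thm:cartan}).

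First, $\Ucal$ is a semigroup: concatenating parameter tuples gives $\Ucal_L\Ucal_{L'}\subseteq\Ucal_{L+L'}$, so $\Ucal\Ucal\subseteq\Ucal$. Since matrix multiplication is continuous, closure preserves this, and $\overline{\Ucal}\,\overline{\Ucal}\subseteq\overline{\Ucal\Ucal}\subseteq\overline{\Ucal}$. The identity lies in $\Ucal_1$, because $F(\phi,0)=\I$ and $0\in[0,\varepsilon)\subseteq\D_B$. Compactness is immediate: $\Ucal\subseteq\mathrm{U}(d)$, which is closed and bounded in $M_d(\C)$. Hence $G$ is a closed subset of a compact set, and so compact by Heine--Borel (\Cref{def:compact}).

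The main step, and the one where the absence of negative times matters, is closure under inverses. For this I use the standard fact that a closed subsemigroup $S$ of a compact group is a group. Take $g\in G$ and consider the powers $g^k\in G$. By compactness of $\mathrm{U}(d)$ there is a subsequence $g^{k_j}$ with $k_{j+1}-k_j\to\infty$ that converges. Then $g^{k_{j+1}-k_j}\to\I$, so $g^{k_{j+1}-k_j-1}\to g^{-1}$. Each such element lies in $G$ once $k_{j+1}-k_j\geq 1$, and $G$ is closed, so $g^{-1}\in G$. To make this rigorous I would use the bi-invariant operator-norm metric on $\mathrm{U}(d)$: $\|g^{k_{j+1}}-g^{k_j}\|=\|g^{k_{j+1}-k_j}-\I\|$, so the Cauchy property of the subsequence gives exactly this convergence. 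Alternatively, for unitary $g$ one can diagonalize $g=\mathrm{diag}(e^{i\theta_1},\dots,e^{i\theta_d})$ and apply Dirichlet's simultaneous approximation to get $g^{k}\to\I$ along a sequence with $k\to\infty$. Either route avoids needing $e^{-tC(\phi)}$ to be directly implementable.

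Finally, $G$ is a closed subgroup of the Lie group $\mathrm{U}(d)$, so \Cref{thm:cartan} makes it an embedded Lie subgroup. Together with compactness, this shows $G$ is a compact Lie subgroup of $\mathrm{U}(d)$. This also justifies, after the fact, the earlier use of $e^{tC(\phi)}\in G$ for all real $t$.
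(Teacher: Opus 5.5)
Your proposal is correct and takes essentially the same route as the paper. The paper likewise gets the semigroup property by concatenation and continuity, $\I\in\Ucal$ from $0\in\D_B$, inverses from a convergent subsequence of powers (computing $Z^{m_{\ell+1}-m_\ell-1}=(Z^{m_\ell})^{-1}Z^{m_{\ell+1}}Z^{-1}\to W^{-1}WZ^{-1}=Z^{-1}$ directly, which is your bi-invariant Cauchy argument in different notation), and the Lie structure from Cartan's theorem.
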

\begin{proof}
Let $x \in \D^{L_1}$, giving $X = U_{L_1}(x)$. Similarly, $y \in \D^{L_2}$, with $Y = U_{L_2}(y)$.
Then,
\begin{multline*}\label{eqn:concatenate}
    U_{L_1}(x)U_{L_2}(y) \\ = U_{L_1 + L_2}((x,y)) \in \Ucal_{L_1 + L_2},
\end{multline*}
where $(x,y)$ denotes the concatenation preserving the order of $x$ and $y$.
Since this holds for all $L_1, L_2 \geq 1$ and all $X \in \Ucal_{L_1} , Y \in \Ucal_{L_2}$, $\Ucal$ is closed under matrix multiplication. Matrix multiplication is associative, so $\Ucal$ is a semigroup under matrix multiplication. 

Choose some $Z, W\in G=\overline{\Ucal}$. Consider the sequences 
\begin{gather*}
(Z_m)_{m \geq 0},\; (W_m)_{m \geq 0} \subseteq \Ucal
\end{gather*}
such that $Z_m\to Z$ and $W_m\to W$.
Since $\Ucal$ is a semigroup, for every $m$,
\begin{gather*}
Z_mW_m\in\Ucal 
\end{gather*}
The continuity of matrix multiplication gives
\begin{equation*}
     Z_m W_m\longrightarrow Z W \Longrightarrow ZW \in \overline{\Ucal} = G,
\end{equation*}
establishing that $G$ is a semigroup.

It remains to show that $G$ contains all its inverses. 
We know that $G$ is a compact set, because it is closed in the compact topological group $\mathrm{U}(d)$. $\mathrm{U}(d)$ is also a metric space, so compactness here means that every sequence in $G$ has a convergent subsequence~\cite[Prop.A.18, p.553]{lee2012}. 

That is, consider the sequence of powers of some $Z \in G$, $(Z^m)_{m \geq 1}$. 
Compactness of $G$ implies there exists a strictly increasing sequence $m_0 <  m_1 < \cdots $ for $m_\ell \in \mathbb{N}$ such that 
\begin{equation*}
    Z^{m_\ell} \xrightarrow{\ell \to \infty} W, \quad W \in G.
\end{equation*}
Since $m_{\ell+1} - m_\ell \geq 1$ and $W^{-1}, Z^{-1} \in \mathrm{U}(d)$ exist, 
\begin{multline*}
    \forall \ell\geq 0, \; G \ni Z^{m_{\ell+1} - m_\ell-1}  = \left ( Z ^{m_{\ell}}  \right )^{-1} Z^{m_{\ell+1}} Z^{-1} \\
    \xrightarrow{\ell \to \infty} W^{-1} W Z^{-1} = \I Z^{-1} = Z^{-1}.
\end{multline*}

Since $G$ is closed, it contains the limits of all convergent sequences whose terms lie in $G$, so $Z^{-1} \in G$. Since $Z \in G$ was arbitrary, $G$ contains all its inverses. Since $0 \in \D_B$, $\I \in \Ucal \subseteq G$. $G$ is therefore a closed subgroup of $\mathrm{U}(d)$. 
By Cartan's closed subgroup theorem~\cite[Thm.20.10, p.526]{lee2012} (\Cref{thm:cartan} in Appendix \Cref{apdx:def}), $G$ is an embedded Lie group in $\mathrm{U}(d)$.
\end{proof}

The next three results concern properties of the Jacobian map, which are essential for the proof of the absolute continuity of the pushforward measure.

\begin{lemma}[Helper lemma, real analyticity of the squared normal Jacobian]\label{lem:jacobian_real_analytic}
The squared normal Jacobian $\Lambda_L(x)$ is real analytic.
\end{lemma}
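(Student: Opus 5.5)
The plan is to show that every ingredient of $\Lambda_L(x)=\det(J_{L,x}J_{L,x}^T)$ is built from real-analytic functions of the coordinates $x=((\phi_j,t_j))_{j}$ by finitely many sums, products and compositions. The natural domain is the open set $\widehat\D^L$ on which $\widetilde U_L$ is defined. Real analyticity is preserved under these operations, so the determinant of a matrix with real-analytic entries is real analytic.

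\textbf{Step 1: the factors.} The maps $\phi\mapsto e^{\pm i\phi A}$ and $t\mapsto e^{itB}$ are real analytic on $\R$, since the matrix exponential is an entire power series in its argument. Hence $F(\phi,t)=e^{i\phi A}e^{itB}e^{-i\phi A}$ and its inverse $F(\phi,t)^{-1}=e^{i\phi A}e^{-itB}e^{-i\phi A}$ are real-analytic matrix-valued functions on $\D_A\times\R$. It follows that the partial products $\Lprod{j}$ and their inverses are real analytic on $\widehat\D^L$.

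\textbf{Step 2: the columns of $J_{L,x}$.} By \Cref{rem:jacobian}, the $t_j$-column is $\Ad_{\Lprod{j}}C(\phi_j)=\Lprod{j}\,(i e^{i\phi_jA}Be^{-i\phi_jA})\,\Lprod{j}^{-1}$. The $\phi_j$-column is $\Ad_{\Lprod{j}}\bigl(iA-i\Ad_{e^{i\phi_jA}}\Ad_{e^{it_jB}}A\bigr)$. Both are finite products of the real-analytic matrices from Step 1, so they are real analytic in $x$.

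\textbf{Step 3: coordinates in an orthonormal basis.} Fix an orthonormal basis $\{E_k\}_{k=1}^n$ of $\g$ for $\langle X,Y\rangle=\operatorname{Re}\tr(X^\dagger Y)$. The basis must be fixed and must not depend on $x$. This is legitimate because each column lies in $\g$: \Cref{lem:im_J_subset_h} gives containment in $\h\subseteq\g$, or one can argue directly that $\widetilde U_L$ takes values in $G$. The entry $(J_{L,x})_{kc}=\operatorname{Re}\tr(E_k^\dagger\,\mathrm{col}_c(x))$ is a real-linear functional of an analytic matrix function, hence real analytic.

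\textbf{Step 4: the determinant.} Each entry of $J_{L,x}J_{L,x}^T$ is $\sum_c J_{kc}J_{k'c}$, and the determinant is a polynomial in these entries. So $\Lambda_L$ is real analytic on $\widehat\D^L$, and in particular on $(\D^\circ)^L$ and up to the boundary through the extension.

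\textbf{Main obstacle.} The only real subtlety is Step 3, namely making sure the target coordinates do not depend on $x$, so that no non-analytic orthonormalization sneaks in. Using the fixed basis of $\g$ together with the containment of the columns in $\g$ resolves this.

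A secondary point concerns \Cref{prop:real_analytic_zero}, which requires a connected open domain. Since $\D_A$ may be a finite union of intervals, $\widehat\D^L$ is a finite disjoint union of open boxes. The lemma itself only claims analyticity, so this issue is deferred to \Cref{prop:submersion_almost_everywhere}. There one would need analyticity on each component separately, and a nonvanishing point on each component; the latter follows by running the construction of \Cref{lem:L0_bound} inside each component.
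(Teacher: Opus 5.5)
Your proof is correct and follows the paper's argument: the factors, partial products and their inverses are built from matrix exponentials and products, the entries of $J_{L,x}$ in a fixed orthonormal basis of $\g$ are real-linear functionals of these, and $\det(J_{L,x}J_{L,x}^T)$ is a polynomial in those entries. Using the explicit columns from \Cref{rem:jacobian} and insisting on an $x$-independent basis only makes explicit what the paper leaves implicit.

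On your deferred aside: the paper avoids the connectedness issue in \Cref{prop:submersion_almost_everywhere} by extending the same formula analytically to the connected set $\R^{2L}$, so a single nonvanishing point suffices. Running the construction of \Cref{lem:L0_bound} separately in each component would not work as stated, since it yields one point of some length $K$ with $t_j$ near $0$, not a point in every component of $\D^{L}$ at a prescribed $L$.
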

\begin{proof}
$U_L$ is a real analytic function because the matrix exponential and finite matrix products are real analytic. Hence the entries of $DU_L(x)$ are real analytic functions of $x$. Since matrix inversion is real analytic on $\mathrm{U}(d)$, the entries of the right-trivialized differential $J_{L,x}(v)$ are real analytic. Taking the transpose, multiplying matrices, and taking the determinant preserve real analyticity. 
\end{proof}

\begin{proposition}[Submersion somewhere means submersion almost everywhere]\label{prop:submersion_almost_everywhere}
    Let $x \in \D^L$ be a submersion point of $U_L$. Then the set of points in $\D^L$ that are not submersion points has measure zero with respect to the normalized Lebesgue measure $\lambda^{\otimes L}$.
\end{proposition}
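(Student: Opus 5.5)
The plan is to apply the real analytic zero-set result (\Cref{prop:real_analytic_zero}) to the squared normal Jacobian $\Lambda_L$, which is real analytic by \Cref{lem:jacobian_real_analytic}. The non-submersion points of $\D^L$ split into two parts. The first is the boundary $\D^L\setminus(\D^\circ)^L$, which lies in the finite union of hyperplanes $\{t_j=0\}$ and therefore has Lebesgue measure zero. The second is the zero set $\{x\in(\D^\circ)^L:\Lambda_L(x)=0\}$. Since $\lambda^{\otimes L}$ is normalized Lebesgue measure, it suffices to show that this zero set has Lebesgue measure zero.

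The main obstacle is connectedness. \Cref{prop:real_analytic_zero} requires a connected open domain, but $(\D^\circ)^L$ is a finite disjoint union of open boxes, since $\D_A$ and $\D_B\cap(0,\infty)$ are finite unions of intervals. A submersion point in one box says nothing directly about the others. To handle this, I would work instead with the extension $\widetilde U_L$ on the connected domain
\[
\R^{2L}\supseteq \Omega:=(\operatorname{conv}\D_A)^L\text{-type boxes}\times\R^L,
\]
or more simply on all of $\R^{2L}$. The formula $\widetilde U_L(x)=\prod_j e^{t_jC(\phi_j)}$ with $C(\phi)=i e^{i\phi A}Be^{-i\phi A}$ is defined and entire real analytic for every $(\phi,t)\in\R^2$. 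Hence the right-trivialized differential, and therefore $\widetilde\Lambda_L$, which is $\det(JJ^T)$ computed in a fixed orthonormal basis of $\g$, are real analytic on the connected set $\R^{2L}$. Here $\Lambda_L=\widetilde\Lambda_L$ on $\D^L$.

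One subtlety is that for $\phi\notin\D_A$ the vector $C(\phi)$ might a priori leave $\g$. I would resolve this in one of two ways. The first uses analyticity: for $X\in\g^\perp$, the map $\phi\mapsto\langle C(\phi),X\rangle$ is real analytic on $\R$ and vanishes on an interval, hence vanishes identically by \Cref{thm:real_identity}. The second avoids the question altogether by defining $\widetilde\Lambda_L$ as $\det(P J P^T)$ with $P$ the orthogonal projection onto $\g$. Either way $\widetilde\Lambda_L$ is analytic on $\R^{2L}$. The given submersion point $x$ satisfies $\widetilde\Lambda_L(x)=\Lambda_L(x)>0$, so $\widetilde\Lambda_L\not\equiv0$.

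Applying \Cref{prop:real_analytic_zero} on $U=\R^{2L}$ then shows that the zero set of $\widetilde\Lambda_L$ is Lebesgue-null. Its intersection with $(\D^\circ)^L$ is therefore null as well. Adding the null boundary, the set of non-submersion points of $\D^L$ is Lebesgue-null, and hence has $\lambda^{\otimes L}$-measure zero after normalizing by the finite positive volume $\Leb(\D)^L$.
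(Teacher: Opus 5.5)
Your proposal is correct and follows essentially the same route as the paper's proof. You extend $\Lambda_L$ analytically to the connected set $\R^{2L}$ to get around the disconnectedness of $\D^L$, apply the real-analytic zero-set result there, and add the null set of boundary points with some $t_j=0$. Your extra observation that $C(\phi)$ could a priori leave $\g$ for $\phi\notin\D_A$ addresses a point the paper's proof passes over silently; either fix works (the identity-theorem argument, which the paper itself uses in \Cref{lem:comm_comm_sufficient}, or projecting onto $\g$).
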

\begin{proof}
$\Lambda_L(x)$ is real analytic by \Cref{lem:jacobian_real_analytic}, but $\D^L$ need not be connected by \Cref{def:main}. However, the same expressions for $U_L$ and the squared normal Jacobian on $\D^L$ also define real-analytic functions on the connected open set $\R^{2L}$.
We denote the resulting extension by 
\begin{equation*}
    \widetilde{\Lambda}_L\, :\, \mathbb R^{2L}\to\R.
\end{equation*}

If there exists one $x\in \D^L$ such that  $\widetilde{\Lambda}_L(x) \neq 0$, then $\widetilde{\Lambda}_L$  is not identically zero. Hence by \Cref{prop:real_analytic_zero} (\cite[Prop.~0]{mityagin2020}), the zero set of $\widetilde{\Lambda}_L$ has Lebesgue measure zero in $\R^{2L}$.

Consequently,
\[
    \mathcal E_L:=
    \{x\in(\D^\circ)^L:\Lambda_L(x)=0\}
\]
has Lebesgue measure zero. Moreover,
\[
    \D^L\setminus\mathcal N_L
    =
    \bigl(\D^L\setminus(\D^\circ)^L\bigr)\cup \mathcal E_L.
\]

Every point of $\D^L\setminus(\D^\circ)^L$ has at least one time coordinate equal to zero. Fixing a coordinate to a single value gives a set of zero Lebesgue measure, and the finite union over the $L$ time coordinates still has measure zero. Since $\mathcal E_L$ also has measure zero, their union $\D^L\setminus\mathcal N_L$ has measure zero.
Normalizing Lebesgue measure therefore gives
\[
    \lambda^{\otimes L}
    \bigl(\D^L\setminus\mathcal N_L\bigr)=0.
\]
\end{proof}

\begin{lemma}[Helper lemma, range of Jacobian]\label{lem:im_J_subset_h}  
Define
\begin{gather*}
\operatorname{im}J_{L,x} =  \left\{ DU_L(x)[v]U_L(x)^{-1} \, : \, \ v\in T_x\mathcal D^L \right\} \\
= DU_L(x)\bigl[T_x\mathcal D^L\bigr]U_L(x)^{-1}.
\end{gather*}
For every $L\geq1$ and $x\in\mathcal D^L$,
\begin{equation*}
    \operatorname{im}J_{L,x} \subseteq \h.
\end{equation*}
\end{lemma}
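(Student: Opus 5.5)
Since $J_{L,x}$ is real-linear on $T_x\D^L\cong\R^{2L}$, it is enough to show that each coordinate column $J_{L,x}(\partial/\partial\theta_j)$, with $\theta_j\in\{\phi_j,\timeB_j\}$, lies in $\h$. The image is then the real span of these columns, and $\h$ is a real vector space. The argument uses \Cref{rem:jacobian}, which expresses each column as $\Ad_{\Lprod{j}}$ applied to the right-logarithmic derivative of the single factor $F(x_j)$. It also uses the invariance fact from the proof of \Cref{lem:V_m_h}: $\Ad_{F(\phi,\timeB)}(\h)\subseteq\h$ whenever $\phi\in\D_A$. That proof uses only $C(\phi)\in\h$ and Campbell's lemma (\Cref{lem:campbell}), so it holds for every real $\timeB$, including the extended times used in $\widetilde U_L$. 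Iterating gives $\Ad_{\Lprod{j}}(\h)\subseteq\h$ for every left partial product. It therefore suffices to show that both single-factor derivatives lie in $\h$.

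\emph{Time columns.} By \Cref{eqn:partial_t}, $\partial_{\timeB_j}F\,F^{-1}=C(\phi_j)\in\mathcal S\subseteq\h$. Hence $J_{L,x}(\partial/\partial\timeB_j)=\Ad_{\Lprod{j}}C(\phi_j)\in\h$.

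\emph{Control columns.} This is the main step, because \Cref{eqn:partial_phi} contains the operator $iA$, which need not lie in $\h$. I would avoid that expression and use instead the integral formula for the derivative of the exponential:
\begin{equation*}
\frac{\partial}{\partial\phi}e^{\timeB C(\phi)}\,e^{-\timeB C(\phi)}=\int_0^{\timeB} e^{s\,\ad_{C(\phi)}}\bigl(\partial_\phi C(\phi)\bigr)\,ds .
\end{equation*}
Here $\partial_\phi C(\phi)=\lim_{h\to0}\bigl(C(\phi+h)-C(\phi)\bigr)/h$. Because $\D_A$ is open, $C(\phi+h)\in\mathcal S$ for small $h$, so each difference quotient lies in $V_0=\Span_\R\mathcal S$. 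Since $V_0$ is finite-dimensional and hence closed, the limit $\partial_\phi C(\phi)$ lies in $V_0\subseteq\h$. (Equivalently, $\partial_\phi C(\phi)=i[A,C(\phi)]$, and the difference-quotient argument shows this bracket lies in $\h$ even though $A$ may not.) The factor $e^{s\,\ad_{C(\phi)}}$ preserves $\h$ for every real $s$, because $\ad_{C(\phi)}(\h)\subseteq\h$. The integrand therefore stays in the closed subspace $\h$, and so does the integral. Applying $\Ad_{\Lprod{j}}$ finishes this case. As a consistency check, this agrees with \Cref{eqn:partial_phi}: one has $iA-\Ad_{e^{\timeB C}}(iA)=-\int_0^{\timeB}e^{s\,\ad_C}[C,iA]\,ds$.

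Both kinds of column lie in $\h$, so $\operatorname{im}J_{L,x}\subseteq\h$. This holds at boundary points as well, since there the derivatives are computed from $\widetilde U_L$ and the argument above never used the sign of $\timeB_j$. The only delicate point is the control column, which is why I would route it through the closedness of $V_0$ rather than through $iA$ directly.
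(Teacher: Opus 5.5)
Your proof is correct, but it follows a different route from the paper's.

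The paper argues abstractly and computes no columns. Since every $C(\phi)\in\h$, it treats $F(\phi,\timeB)=e^{\timeB C(\phi)}$, and hence $U_L$, as a smooth map into the connected immersed subgroup $H_\h$. Then $DU_L(x)$ takes values in $T_{U_L(x)}H_\h$, and right translation by $U_L(x)^{-1}$ carries this space onto $T_\I H_\h=\h$.

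You instead work column by column from \Cref{rem:jacobian}. You use only two facts: $\Ad_{F(\phi,\timeB)}$ preserves $\h$ for every real $\timeB$, and finite-dimensional subspaces are closed. The one nontrivial step is the control column. You handle it with the derivative-of-the-exponential formula, together with the observation that $\partial_\phi C(\phi)=[iA,C(\phi)]$ is a limit of difference quotients lying in $V_0$.

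The paper's route is shorter and more conceptual. It does, however, rest on the assertion that $F$ is smooth as a map into the immersed, possibly non-embedded, subgroup $H_\h$. This is true, for instance because $F=\exp_{H_\h}\circ\bigl((\phi,\timeB)\mapsto\timeB C(\phi)\bigr)$ and multiplication in $H_\h$ is smooth, but the paper does not spell it out. At boundary points it also implicitly needs the extension $\widetilde U_L$ to map into $H_\h$.

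Your route avoids the manifold structure of $H_\h$ entirely. It directly confronts the apparent obstruction that $iA$ in \Cref{eqn:partial_phi} need not lie in $\h$, and it treats boundary points explicitly. It also yields a small byproduct: $V_0=\Span_\R\mathcal S$ is invariant under $\ad_{iA}$.
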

\begin{proof}
By definition, $C(\phi)\in\mathfrak h$ for every
$\phi\in\D_A$. The map
\begin{gather*}
    \D_A\longrightarrow\mathfrak h,
    \quad
    \phi\longmapsto C(\phi),
\end{gather*}
is smooth. Therefore $ F(\phi,\timeB) =e^{\timeB C(\phi)}$ defines a smooth map from $\mathcal D$ into $H_{\mathfrak h}$.
Since multiplication in a Lie group is smooth, the product map
\begin{gather*}
    U_L:\mathcal D^L\longrightarrow H_{\mathfrak h},
\end{gather*}
is smooth. Then we can apply the definition of the differential of a smooth function:
\begin{equation*}
     DU_L(x) \, : \,T_x\D^L \longrightarrow T_{U_L(x)} H_\h.
\end{equation*}
That is,
\begin{equation}
\begin{gathered}
   \forall \, v \in T_x \D^L, \; DU_L(x)[v] \in T_{U_L(x)} H_\h \\
   \Longrightarrow \quad  DU_L(x)\bigl[T_x\D^L\bigr] \subseteq T_{U_L(x)} H_\h.
\end{gathered}
\end{equation}

Right translation by $U_L(x)^{-1}$  maps $T_{U_L(x)}H_{\mathfrak h}$ onto
$T_\I H_\h =\h,$ hence by the definition of the right-trivialized Jacobian in Def.~\eqref{def:right_trivialized_differential},
\begin{equation}
\begin{split}
     DU_L(x)\bigl[T_x\D^L\bigr]\, U_L(x)^{-1}  &= J_{L,x}\bigl(T_x\D^L\bigr) \\ &= \operatorname{im}J_{L,x} \subseteq  \h.
\end{split}
\end{equation}
\end{proof}

In \Cref{lem:comm_comm_sufficient} we give a sufficient condition that ensures $\h = \g$, as required by the main results. It says that any matrix that commutes with both $A$ and $B$ must be proportional to the identity.

\begin{definition}[Common centralizer]
The common centralizer of $A$ and $B$ as
\begin{align*}
\Comm &:=\! \{Q\in M_d(\mathbb C):[Q,A]=[Q,B]=0\} \\
&= \mathcal Z(A) \cap \mathcal Z(B),
\end{align*}
where $\mathcal{Z}(A)$ is the centralizer of the one parameter subgroup $\{ e^{i \phi A}\} $, and analogously for $B$.
\end{definition}

\begin{proposition}[Sufficient condition for $\h=\g$]\label{lem:comm_comm_sufficient}
\begin{gather*}
\Comm=\C \I \quad \Longrightarrow\quad \h=\g.
\end{gather*}
\end{proposition}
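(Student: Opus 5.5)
The plan is to show $\g\subseteq\h$ by splitting $\g$ into its derived algebra and its centre, and using the hypothesis $\Comm=\C\I$ only to control the centre. The centre is the only place where $\h$ can fail to exhaust $\g$: the torus $\exp(\mathfrak z(\h))$ may fail to be closed, and its closure is then a larger torus. Throughout I use that $\h\subseteq\g\subseteq\mathfrak u(d)$ are compact Lie algebras (by \Cref{prop:subalgebra} and \Cref{prop:G_group}), so they are reductive.

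\emph{Step 1: invariance under the control rotation.} The map $\phi\mapsto C(\phi)=i\Ad_{e^{i\phi A}}B$ is real analytic. For any finite-dimensional subspace, the set of $\phi$ with $C(\phi)$ in it is either discrete or all of $\R$. Hence $\Span_\R\{C(\phi):\phi\in\D_A\}=\Span_\R\{C(\phi):\phi\in\R\}$, and this span is invariant under $\Ad_{e^{i\psi A}}$ for every $\psi\in\R$, because $\Ad_{e^{i\psi A}}C(\phi)=C(\phi+\psi)$. Consequently $\h$ is $\Ad_{e^{i\psi A}}$-invariant. The same argument shows $\Ucal$, and hence $G=\overline{\Ucal}$, is invariant under conjugation by $e^{i\psi A}$, so $\psi\mapsto\Ad_{e^{i\psi A}}|_G$ is a continuous one-parameter group of automorphisms of $G$.

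\emph{Step 2: the centre of $\g$ is central in $\langle A,B\rangle$.} Let $Z(G)_0$ be the identity component of the centre of $G$, a torus with Lie algebra $\mathfrak c=\mathfrak z(\g)$. Each automorphism $\Ad_{e^{i\psi A}}$ preserves $Z(G)_0$. The automorphism group of a torus is discrete ($\cong\mathrm{GL}_k(\mathbb Z)$), and $\psi$ ranges over a connected set containing $0$, so the action on $Z(G)_0$ is trivial. Differentiating at $\psi=0$ gives $[A,\mathfrak c]=0$. Every element of $\mathfrak c$ also commutes with $\h\ni iB$, since $\h\subseteq\g$. Therefore
\[
\mathfrak c\subseteq \Comm\cap\mathfrak u(d)=i\R\I .
\]
I expect this to be the main obstacle. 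The naive route, which shows only that the centre of $\h$ commutes with every $H(\phi)$, does not yield commutation with $A$; the rigidity of torus automorphisms is what supplies it.

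\emph{Step 3: assemble.} Write $\g=[\g,\g]\oplus\mathfrak c$ and $\h=[\h,\h]\oplus\mathfrak z(\h)$. I would argue that $[\g,\g]=[\h,\h]$, as follows. The semisimple subgroup with Lie algebra $[\h,\h]$ is compact, hence closed. $G=\overline{H_\h}$ is the product of this subgroup with the closure of the central torus of $H_\h$. This closure is abelian and commutes with $H_\h$, hence with $G$, so it contributes only to $\mathfrak c$. Then split according to $\tr B$.

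If $\tr B\neq0$, the component of $C(\phi)$ along $i\R\I$ is $\tfrac{i\tr B}{d}\I\neq0$. Since $C(\phi)\in\h$ and $[\h,\h]$ is traceless, $\h$ has a nonzero trace part, so $\h\supseteq[\h,\h]\oplus i\R\I\supseteq\g$.

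If $\tr B=0$, every $F(\phi,t)$ has determinant one, so $G\subseteq \mathrm{SU}(d)$. Then $\mathfrak c\subseteq\mathfrak{su}(d)\cap i\R\I=0$, giving $\g=[\h,\h]\subseteq\h$.

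In both cases $\h=\g$, using $\h\subseteq\g$ from \Cref{prop:subalgebra}.
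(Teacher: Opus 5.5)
Your proof is correct, but it is organized differently from the paper's.

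\textbf{How the paper argues.} It never splits $\h$ or $\g$ into derived algebra and centre. It first uses $\overline{H_\h}=G$ and continuity of $\Ad$ to show that $\h$ is an ideal of $\g$. The $\Ad$-invariant Hilbert--Schmidt product then makes $\h^{\perp_\g}$ a complementary ideal, hence central in $\g$. To get commutation with $A$, it introduces the closed group $T$ generated by $e^{i\phi A}$ and $e^{itB}$, shows $\g$ is an ideal of $\tf$, and repeats the orthogonal-complement trick to obtain $\mathfrak z(\g)\subseteq\mathfrak z(\tf)\subseteq\Comm\cap\tf$. It finishes by contradiction: $\h^{\perp_\g}=i\R\I$ would force $\tr B=0$, hence $G\subseteq\mathrm{SU}(d)$.

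\textbf{How you argue.} You replace the first ideal argument with the reductive splitting plus Weyl's theorem: the connected subgroup with Lie algebra $[\h,\h]$ is compact. This gives $G=S\,\overline{\exp\mathfrak z(\h)}$ and $[\g,\g]=[\h,\h]$. You replace the detour through $T$ with rigidity of torus automorphisms: the family $\psi\mapsto\Ad_{e^{i\psi A}}$ acts continuously on the discrete lattice $\ker(\exp|_{\mathfrak c})$, which spans $\mathfrak c$, so it acts trivially. This is the global counterpart of the paper's infinitesimal fact that the centre of an ideal in a compact Lie algebra is central. Your case split on $\tr B$ is equivalent to the paper's closing contradiction.

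\textbf{What each buys.} The paper's route is more elementary and self-contained, using only an invariant inner product and density. Yours needs Weyl's theorem and lattice rigidity, but it makes clear why only the centre can obstruct $\h=\g$: the sole source of non-closedness of $H_\h$ is its central torus.

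\textbf{Two small points to tidy.} First, $\Ucal$ itself is not conjugation-invariant, since $\phi_j+\psi$ may leave $\D_A$. What you need, and what holds, is $e^{i\psi A}\Ucal e^{-i\psi A}\subseteq H_\h\subseteq G$. Equivalently, $H_\h$ is invariant by uniqueness of the connected subgroup with Lie algebra $\Ad_{e^{i\psi A}}\h=\h$. Second, in the case $\tr B\neq0$, state explicitly that writing $C(\phi)=Y_s+Y_c$ with $Y_s\in[\h,\h]\subseteq\h$ and $Y_c\in\mathfrak c\subseteq i\R\I$ gives $Y_c=\tfrac{i\tr B}{d}\I\in\h$.
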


\begin{proof}
By \Cref{prop:subalgebra}, $\overline{H_\h}=G$. 
Let $a \in G$ be arbitrary. Since $\overline{H_\mathfrak h}=G$, there exists a sequence $a_n\in H_\h$ such that $a_n \xrightarrow{n \to \infty} a$. Then for every $Y \in \h$
\begin{gather*}
 \Ad_{a_n}  Y  \xrightarrow{n \to \infty} \Ad_{a}  Y \in \h,
\end{gather*}
since conjugation by $H_\h$ preserves its Lie algebra $\h$, and $\h$ is a closed vector subspace, and $\Ad$ is continuous. 

For arbitrary $X\in\g$, apply this to $a=e^{\timeB X}$ for $t \in \R$ and differentiate at $\timeB=0$:
\begin{equation}
    \frac{d}{d\timeB}\bigg |_{\timeB=0} \left ( \Ad_{e^{\timeB X}} Y \right ) = [X, Y] \in \h \quad \Longrightarrow \quad [\g, \h] \subseteq \h,
\end{equation}
i.e. $\h$ is an ideal of $\g$ because $X, Y$ were arbitrary. 

Since $\h \subseteq \g$, we can define its orthogonal complement $\h^{\perp_\g}$ in $\g$
\begin{equation}
    \h^{\perp_\g} := \{X \in \g \, : \, \langle X, Z \rangle = 0 \, \text{ for every } Z \in \h\}.
\end{equation}
As a vector space,
$
\g = \h \oplus \h^{\perp_\g}.
$

Recall $\Ad_{e^{tX}}$ preserves the Hilbert-Schmidt inner product. Let $X, U, V \in \g$:
\begin{align*}
\langle \Ad_{e^{\timeB X}}U, \Ad_{e^{\timeB X}}V  \rangle &= \operatorname{Re}\tr \left [ e^{\timeB X}U V^\dagger e^{-\timeB X}  \right ] \\ &= \langle U, V \rangle.
\end{align*}
Take the derivative at $\timeB=0$:
\begin{gather*}
\langle [X, U], V  \rangle + \langle U, [X, V]  \rangle= 0.
\end{gather*}
Specifically, choose $X \in \g$, $Y \in \h$, and $Z \in \h^{\perp_\g}$. We have $\langle Z, [X, Y] \rangle = 0$ because $[X,Y] \in \h$. Then,
\begin{gather*}
\langle [X, Z], Y  \rangle = - \langle Z, [X, Y]  \rangle = 0 \\
\Longrightarrow  \quad [X, Z] \in \h^{\perp_\g} \quad \Longrightarrow \quad [\h^{\perp_\g}, \g] \subseteq \h^{\perp_\g}.
\end{gather*}
Thus $\h^{\perp_\g}$ is an ideal of $\g$, because $X, Y, Z$ were arbitrary. Hence $[\h, \h^{\perp_\g}] \subseteq \h \cap \h^{\perp_\g} = \{ 0\}$. Applying this property to $Y \in \h, \, Z \in \h^{\perp_\g}$
\begin{gather*}
[Y,Z]=0 \quad\Longrightarrow\quad \Ad_{e^{\timeB Y}}Z=e^{\timeB\ad_Y}Z=Z, \\ \timeB\in\R.
\end{gather*}
Since the connected group $H_\h$ is generated by exponentials of elements of $\h$, $\Ad_{a_n}Z=Z$ for every $a_n\in H_\h$. 
For every $b\in\g$ and $\timeB\in\R$, choose $a_n\in H_\h$ with $a_n\to e^{\timeB b}$. By continuity,
\begin{gather*}
\operatorname{Ad}_{e^{\timeB b}}Z
=\lim_{n\to\infty}\operatorname{Ad}_{a_n}Z=Z.
\end{gather*}
Differentiating at $\timeB=0$:
\begin{gather*}
    \frac{d}{dt}\bigg|_{t=0} \left(\Ad_{e^{tb}}Z\right)=[b,Z]=0.
\end{gather*}
Since $Z$ was any element in $\h^{\perp_\g}$, 
\begin{equation}\label{eqn:hperp_g_subset}
    \h^{\perp_\g} \subseteq \mathfrak z(\g) := \{ U \in \g \, : \, [U, W] = 0 \, \text{ for any } W \in \g\},
\end{equation} 
where $\mathfrak z(\g)$ is the center of $\g$.

Define
\[
\mathcal M
=
\{e^{i\phi A} : \phi\in\mathbb R\}
\cup
\{e^{itB} : t\in\mathbb R\},
\]
and let $T$ be the closed subgroup generated by $\mathcal M$:
\begin{gather*}
T
=
\overline{
  \{U_0\cdots U_{n-1} :
    n\geq 1,\;
    U_0,\ldots,U_{n-1}\in\mathcal M\}
},\\
\tf=\operatorname{Lie}(T),
\end{gather*}
where the empty product is the identity and the closure is taken in $U(d)$.
Since $F(\phi,\timeB)=e^{i\phi A}e^{i\timeB B}e^{-i\phi A} \in T$ for every $\phi,\timeB\in\R$,
we have $G\subseteq T$, and hence $\g\subseteq\tf$.

We next show that restricting $\phi$ to $\D_A$ does not change the real span of the conjugates
$C(\phi)=ie^{i\phi A}Be^{-i\phi A}$.
Set
\[
W=\operatorname{span}_{\mathbb R}
  \{C(\phi):\phi\in\D_A\}.
\]
Work in the real vector space of skew-Hermitian matrices, with the real Hilbert--Schmidt inner product.
Let \(E_1,\ldots,E_m\) be an orthonormal basis of \(W^\perp\).
The orthogonal projection of \(C(\phi)\) onto \(W^\perp\) is
\[
P_{W^\perp}C(\phi)
=
\sum_{j=1}^{m} c_j(\phi)E_j,
\quad
c_j(\phi)=\langle E_j,C(\phi)\rangle.
\]
The functions $c_j$ are the coefficients of this projection in the chosen basis.

The matrix-valued map $\phi\mapsto C(\phi)=ie^{i\phi A}Be^{-i\phi A}$ is real analytic. Since taking the inner product with a fixed matrix $E_j$ is a real-linear functional, each $c_j$ is a real analytic function on $\R$.
For every $\phi\in\D_A$, the definition of $W$ gives $C(\phi)\in W$, so $c_j(\phi)=0$ for every $j$. Because $\D_A$ contains an interval of positive length, the identity theorem for real analytic functions (\Cref{thm:real_identity}) implies that each $c_j$ vanishes on $\R$. Consequently, $P_{W^\perp}C(\phi)=0$, and hence $C(\phi)\in W$, for every $\phi\in\mathbb R$.
Together with the inclusion $\D_A\subseteq\mathbb R$,
\[
\operatorname{span}_{\mathbb R}
  \{C(\phi):\phi\in\D_A\}
=
\operatorname{span}_{\mathbb R}
  \{C(\phi):\phi\in\mathbb R\}.
\]
Since $W\subseteq\h\subseteq\g$, it follows that $C(\phi)\in\h\subseteq\g$ for every $\phi\in\R$. 
The reasoning behind \Cref{thm:real_identity} is that an analytic function equals its Taylor series near every point. If it has been zero on an interval, all its derivatives are zero at the interval's edge, so its Taylor series forces it to remain zero just beyond that edge. It therefore cannot start becoming nonzero outside this interval.

Thus, restricting the parameter set to a nonempty finite union of nondegenerate intervals preserves the real span of these conjugates and, consequently, the Lie algebra they generate. The size of the interval may nevertheless affect the rate of convergence to Haar measure, but we leave that to future work.

For any $a\in G=\overline{\Ucal}$, choose $a_n\in\Ucal$ with $a_n \to a$. By continuity of conjugation and closedness of $G$,
\begin{gather*}
e^{i\phi' A}ae^{-i \phi' A}
=\lim_{n\to\infty}e^{i\phi' A}a_ne^{-i\phi' A}\in G.
\end{gather*}
Therefore $\Ad_{e^{i\phi' A}}G\subseteq G$.

Moreover, $\Ad_{e^{i \timeB' B}} a \in G$ for $a \in G$ and $\timeB\in \R$ because $e^{i \timeB' B} \in G$. For any $\ell \in T$, define the sequence  for $(\ell_n)_{n \geq 0}$ where each $\ell_n\in \langle e^{i \phi A}, e^{i \timeB B} \, : \, \phi, \timeB\in \R\rangle$ such that
\begin{gather*}
\ell_n \xrightarrow{n \to \infty} \ell.
\end{gather*}
For each $a \in G$, $\Ad_{\ell_n} a \in G \to \Ad_\ell a \in G$. That is,
\begin{gather*}
\ell G \ell^{-1} \subseteq G.
\end{gather*}
We make the same argument replacing $\ell \to \ell^{-1}$ and obtain the reversed inclusion, hence $\Ad_\ell G = G$.

Taking the tangent space at the identity, $\Ad_\ell \g = \g$. With $X \in \g$ and $K \in \tf$, 
\begin{gather*}
 \frac{d}{d\timeB}\bigg |_{\timeB=0} \left ( \Ad_{e^{\timeB K}} X  \right ) = [K, X]  \in \g \quad \Longrightarrow \quad [\tf, \g] \subseteq \g.
\end{gather*}
Define  
\begin{equation}
    \g^{\perp_\tf} := \{X \in \tf \, : \, \langle X, Z \rangle = 0 \, \text{ for every } Z \in \g\}.
\end{equation}
Hence we have the orthogonal decomposition $\tf = \g \oplus \g^{\perp_\tf}$.
As before, we can show $\g^{\perp_\tf}$ is also an ideal of $\tf$.
Therefore, $[\g, \g^{\perp_\tf}] = \{ 0 \}$.

Take $Z \in \mathfrak z(\g)$. Let  
\begin{gather*}
\tf \ni V = \underbrace{X}_{\in \g} +\underbrace{U}_{\in \g^{\perp_\tf}}.
\end{gather*}
$[Z, X] = 0$ because $Z \in \mathfrak z(\g)$. $[Z, U] = 0$ because $Z \in \g$, $U\in\g^{\perp_\tf}$, and these two ideals commute.
Then,
\begin{gather*}
[Z, V] = [Z, X] + [Z, U] = 0 \quad \Longrightarrow \quad \mathfrak z (\g) \subseteq \mathfrak z(\tf),
\end{gather*}
where
\begin{gather*}
\mathfrak z(\tf) := \{P \in \tf \, : \,   [P, Q ] = 0 \, \text{ for any } Q \in \tf \}.
\end{gather*}
Combining with \Cref{eqn:hperp_g_subset}, we have that
\begin{equation}\label{eqn:comm_comm_subset}
    \h^{\perp_\g} \subseteq \mathfrak z(\g) \subseteq \mathfrak z(\tf) \subseteq \Comm \cap \tf,
\end{equation}
where $\Comm$ is the common centralizer. The final inclusion holds because $iA,iB \in \tf$, so every element of $\mathfrak z(\tf)$ commutes with $A$ and $B$. 

By hypothesis, $\Comm = \C \I$. Then by \Cref{eqn:comm_comm_subset}, 
\begin{gather*}
\h^{\perp_\g} \subseteq i \R \I,
\end{gather*}
since $\tf\subseteq\mathfrak \mathrm{U}(d)$ consists of skew-Hermitian matrices. Because $i \R \I$ is a one dimensional real linear vector space, it has only two linear subspaces: $\{ 0\}$ and itself. Suppose $\h^{\perp_\g} \neq \{ 0 \}$. 
Then $\h^{\perp_\g} = i \R \I$. Since $i B \in \h$,
\begin{gather*}
0 = \langle i  \I, i B \rangle = \tr(B) \Longrightarrow \g \subseteq \mathfrak{su}(d) \Longrightarrow i I \notin \g.
\end{gather*}
Then, due to the fact that for any matrix $A$, $\det (e^A) = e^{\tr(A)}$:
\begin{gather*}
\det F(\phi,\timeB)
=e^{\timeB\tr(C(\phi))}
=e^{i\timeB\tr(B)}=1.
\end{gather*}
Hence $\Ucal\subseteq \mathrm{SU}(d)$, and since $\mathrm{SU}(d)$ is closed, $G\subseteq \mathrm{SU}(d)$. Hence $\g\subseteq\mathfrak{su}(d)$.

However, this contradicts $i \I \in \h^{\perp_\g} \subseteq \g$. Therefore 
\begin{gather*}
\h^{\perp_\g} = \{ 0 \} \quad \Longrightarrow \quad \g = \h.
\end{gather*}
\end{proof}

\section{Examples of our mechanism}

This appendix illustrates how varying a single control parameter produces a family of Hamiltonians related by conjugation,
\begin{equation}
    H(\phi)=e^{i\phi A}Be^{-i\phi A},
\end{equation}
possibly with an additional scalar rescaling or a change of frame.
We begin with one- and two-qubit examples, then consider Rydberg arrays and Hubbard models.
Throughout, we set $\hbar=1$.
For physical evolution $e^{-itH(\phi)}$, the main results apply but we use the convention of a negative sign in the exponential, i.e. $e^{- i t H}$.

\subsection{One- and two-qubit examples}\label{apdx:lin_ind}

\paragraph*{Single-qubit phase control.}
Consider
\begin{equation}
    H_0(\theta)
    =\Omega(\cos\theta\,\sigma^x+\sin\theta\,\sigma^y)
    +\delta\sigma^z,
\end{equation}
where $\theta$ is adjustable and $\Omega,\delta\in\R$ are fixed.
For two settings $\theta_0$ and $\theta_1$,
\begin{align*}
    [H_0(\theta_0),H_0(\theta_1)]
    =2i\bigl[&\Omega\delta(\sin\theta_0-\sin\theta_1)\sigma^x
    \nonumber\\
    &+\Omega\delta(\cos\theta_1-\cos\theta_0)\sigma^y
    \nonumber\\
    &+\Omega^2\sin(\theta_1-\theta_0)\sigma^z\bigr].
\end{align*}
If $\Omega\delta\ne0$, this commutator is nonzero precisely when $\theta_1-\theta_0\notin2\pi\mathbb Z$.
If $\delta=0$ and $\Omega\ne0$, the condition is instead $\theta_1-\theta_0\notin\pi\mathbb Z$; if $\Omega=0$, all settings commute.
Whenever the commutator is nonzero, the three skew-Hermitian operators
\begin{equation*}
    iH_0(\theta_0),\quad iH_0(\theta_1),\quad
    [H_0(\theta_0),H_0(\theta_1)]
\end{equation*}
are linearly independent over $\mathbb R$ and span $\mathfrak{su}(2)$.
Indeed, the Pauli-vector representation identifies the commutator with the cross product of the two Hamiltonian vectors.

The conjugation structure follows by defining
$R_z(\theta)=e^{-i\theta\sigma^z/2}$:
\begin{equation}\label{eqn:change_phi}
    H_0(\theta)
    =R_z(\theta)(\Omega\sigma^x+\delta\sigma^z)R_z(\theta)^\dagger.
\end{equation}
Thus $A=-\sigma^z/2$ and $B=\Omega\sigma^x+\delta\sigma^z$.\\

\paragraph*{Single-qubit detuning control.}
Now let the detuning vary while $\Omega\ne0$ remains fixed:
\begin{equation}
    H_1(\delta)=\delta\sigma^z+\Omega\sigma^x.
\end{equation}
Two distinct settings satisfy
\begin{equation*}
    [H_1(\delta_0),H_1(\delta_1)]
    =2i\Omega(\delta_0-\delta_1)\sigma^y\ne0,
\end{equation*}
and therefore generate $\mathfrak{su}(2)$ as above.
Define $r(\delta)=\sqrt{\delta^2+\Omega^2}$ and choose $\theta(\delta)$ so that
\begin{equation*}
    \cos\theta(\delta)=\frac{\delta}{r(\delta)},\qquad
    \sin\theta(\delta)=\frac{\Omega}{r(\delta)}.
\end{equation*}
With $R_y(\theta)=e^{-i\theta\sigma^y/2}$,
\begin{equation}\label{eqn:change_z}
    H_1(\delta)
    =r(\delta)R_y(\theta(\delta))\sigma^zR_y(\theta(\delta))^\dagger.
\end{equation}
Here $A=-\sigma^y/2$ and $B=\sigma^z$.
Unlike phase control, detuning control changes both the orientation and the norm of the Hamiltonian. For a fixed detuning during a segment of duration $t$, the norm change can be absorbed into the evolution time:
\begin{gather*}
    e^{-itH_1(\delta)}
    =R_y(\theta(\delta))e^{-i\tau\sigma^z}R_y(\theta(\delta))^\dagger,
    \; \tau=r(\delta)t.
\end{gather*}\\

\paragraph*{Two-qubit interaction control.}
Consider a tunable Ising-type interaction,
\begin{align}
    H(t)&=B+J(t)A,\\
    A&=\sigma_0^z\sigma_1^z,\qquad
    B=\sum_{i=0}^{1}(\delta_i\sigma_i^z+\Omega\sigma_i^x),
\end{align}
with fixed $\delta_i$ and $\Omega$.
Define
\begin{equation}
    \chi(t)=\int_0^t J(s)\,ds,\qquad V(t)=e^{i\chi(t)A}.
\end{equation}
For the rotating-frame state $\ket{\widetilde\psi(t)}=V(t)\ket{\psi(t)}$, the Hamiltonian is
\begin{align}
    \widetilde H(t)
    &=V(t)H(t)V(t)^\dagger+i\dot V(t)V(t)^\dagger
    \nonumber\\
    &=V(t)BV(t)^\dagger,
\end{align}
since $i\dot V(t)V(t)^\dagger=-J(t)A$.
Explicitly,
\begin{align}
    \widetilde H(t)
    ={}&\delta_0\sigma_0^z+\delta_1\sigma_1^z
    \nonumber\\
    &+\Omega\cos(2\chi(t))(\sigma_0^x+\sigma_1^x)
    \nonumber\\
    &-\Omega\sin(2\chi(t))
    (\sigma_0^y\sigma_1^z+\sigma_0^z\sigma_1^y).
\end{align}
Thus the interaction strength controls the rate $\dot\chi(t)=J(t)$ at which the conjugation parameter changes.
The laboratory-frame propagator is
\begin{equation}\label{eqn:squash_frame_endpoint}
    U(t)=e^{-i\chi(t)A}\widetilde U(t).
\end{equation}

\subsection{Many-body Hamiltonian examples}\label{apdx:examples}

We show how Rydberg arrays, Bose and Fermi Hubbard models, superconducting circuits, and trapped ions realize the conjugation family $H(\phi)=e^{i\phi A}Be^{-i\phi A}$, either directly or in an interaction picture.
Throughout, we take $\hbar=1$.

Applying our convergence results additionally requires the connected Lie subgroup generated by the conjugated Hamiltonians to be closed; a sufficient condition is that the common centralizer $\Comm=\{Q:[Q,A]=[Q,B]=0\}$ consists only of scalar multiples of the identity on the Hilbert space under consideration (\Cref{lem:comm_comm_sufficient}).
We defer a detailed analysis of this condition for the individual models.

\paragraph*{Rydberg atoms: drive phase control.}
An array of Rydberg atoms driven with a global Rabi frequency and phase has the Hamiltonian~\cite{matsoukasroubeas2026,wurtz2023}
\begin{align}
    H(\varphi)
    ={}&\frac{\Omega}{2}\sum_{i=0}^{N-1}
    \left(e^{i\varphi}\ket{r}_i\bra{g}_i
    +e^{-i\varphi}\ket{g}_i\bra{r}_i\right)
    \nonumber\\
    &-\sum_i\Delta_i\hat n_i+\sum_{i<j}J_{ij}\hat n_i\hat n_j,
\end{align}
where $\hat n_i=\ket{r}_i\bra{r}_i$, $\Delta_i$ are fixed detunings, and $J_{ij}$ are fixed van der Waals interaction strengths.
Let $\hat N=\sum_i\hat n_i$ be the total Rydberg excitation number.
Since $\hat n_i=\ket{r}_i\bra{r}_i$ is a projector,
\[
    e^{\pm i\phi\hat n_i}
    =\ket{g}_i\bra{g}_i
    +e^{\pm i\phi}\ket{r}_i\bra{r}_i.
\]
The number operators on distinct sites commute, so
\[
    e^{i\phi\hat N}=\prod_j e^{i\phi\hat n_j}.
\]
All factors with $j\ne i$ commute with
$\ket{r}_i\bra{g}_i$ and cancel in the conjugation.
Using $\langle g|r\rangle=0$, we therefore obtain
\begin{align*}
    &e^{i\phi\hat N}\ket{r}_i\bra{g}_i e^{-i\phi\hat N}\\
    &\quad=e^{i\phi\hat n_i}
    \ket{r}_i\bra{g}_i e^{-i\phi\hat n_i}\\
    &\quad=
    \left(\ket{g}_i\bra{g}_i
    +e^{i\phi}\ket{r}_i\bra{r}_i\right)\ket{r}_i\bra{g}_i\\
    &\qquad \qquad \left(\ket{g}_i\bra{g}_i
    +e^{-i\phi}\ket{r}_i\bra{r}_i\right)\\
    &\quad=e^{i\phi}\ket{r}_i\bra{g}_i.
\end{align*}
Taking the adjoint gives
\[
    e^{i\phi\hat N}\ket{g}_i\bra{r}_i e^{-i\phi\hat N}
    =e^{-i\phi}\ket{g}_i\bra{r}_i.
\]

The diagonal part
\begin{equation}
    D=-\sum_i\Delta_i\hat n_i
    +\sum_{i<j}J_{ij}\hat n_i\hat n_j
\end{equation}
commutes with $\hat N$, since all $\hat n_i$ commute.
Consequently,
\begin{align}
    e^{i\phi\hat N}H(\varphi)e^{-i\phi\hat N}
    &=
    \frac{\Omega}{2}\sum_i
    \bigg(
    e^{i\varphi}e^{i\phi}\ket{r}_i\bra{g}_i \nonumber\\
    & \quad +e^{-i\varphi}e^{-i\phi}\ket{g}_i\bra{r}_i
    \bigg)+D \nonumber\\
    &=H(\varphi+\phi).
    \label{eqn:aquila_u1covariance}
\end{align}

This realizes the $H(\phi) = e^{i \phi A} B e^{- i \phi A}$ family with $A=\hat N$ and $B=H(0)$, extending \Cref{eqn:change_phi} to $N$ qubits.

\paragraph*{Bose--Hubbard: interaction control.}
The same rotating-frame construction applies to a Bose-Hubbard model with a tunable on-site interaction \cite{dutta2015, pasienski2010}.
Suppose the spatially uniform on-site interaction strength $u(t)$ can be varied while the hopping amplitudes remain fixed:
\begin{align}
    H_{\mathrm B}(t)&=B_{\mathrm B}+u(t)A_{\mathrm B},\\
    B_{\mathrm B}&=-\sum_{\langle i,j\rangle}
    \left(J_{ij}\hat a_i^\dagger\hat a_j
    +J_{ij}^*\hat a_j^\dagger\hat a_i\right),\\
    A_{\mathrm B}&=\frac12\sum_i\hat n_i(\hat n_i-1),
    \qquad \hat n_i=\hat a_i^\dagger\hat a_i.
\end{align}
The sum $\langle i,j\rangle$ counts each hopping once.
For the normalized occupation basis $\ket{x}$, with $x_i\in\mathbb Z_{\ge0}$,
\begin{align*}
    \hat a_i^\dagger\ket{x}&=\sqrt{x_i+1}\ket{x+e_i},\\
    \hat a_i\ket{x}&=\sqrt{x_i}\ket{x-e_i},
\end{align*}
where $e_i$ is the unit occupation vector at site $i$, and the second expression is zero for $x_i=0$.
Setting
\begin{equation}
    \chi(t)=\int_0^t u(s)\,ds,\qquad
    V(t)=e^{i\chi(t)A_{\mathrm B}},
\end{equation}
gives exactly
\begin{equation}
    \widetilde H_{\mathrm B}(t)
    =e^{i\chi(t)A_{\mathrm B}}B_{\mathrm B}e^{-i\chi(t)A_{\mathrm B}},
\end{equation}
with the laboratory-frame endpoint factor in \Cref{eqn:squash_frame_endpoint} and $A=A_{\mathrm B}$.

We realize the sampled conjugation parameters using short interaction pulses separated by intervals with $u(t)=0$. Each pulse changes the accumulated phase $\chi(t)=\int_0^t u(s)\,ds$ to the next sampled value, while each subsequent evolution interval keeps this phase constant for the sampled evolution time. We assume hopping during the interaction pulses is negligible. In the ideal instantaneous-pulse limit, the interaction-picture propagator is therefore exactly a product of the conjugated evolutions studied above; finite pulse durations introduce an approximation.
Physically, this protocol can be implemented by rapidly increasing the interaction-to-hopping ratio $u(t)/J$ by tuning $u(t)$ while keeping the hopping fixed, and returning $u(t)$ to zero during the evolution intervals, provided the pulses are short enough that hopping during them is negligible.

We consider a finite lattice of $N$ sites and restrict to the sector $\mathcal H_M$ with a fixed total number $M$ of bosons. This sector has dimension
\[
    d=\binom{M+N-1}{M}.
\]
Both $A_{\mathrm B}$ and $B_{\mathrm B}$ preserve $\mathcal H_M$, so their restriction to this sector is exact. Our main results apply, provided the connected Lie subgroup generated on this sector is closed.

The main idea is that conjugation gives hopping an occupation-dependent phase.
Let $i\ne j$, $x_j\ge1$, and $y=x+e_i-e_j$. The phase comes from the difference between the interaction eigenvalues of $\ket{y}$ and $\ket{x}$.

First, the hopping operator acts as
\[
\hat a_i^\dagger\hat a_j\ket{x}
=\sqrt{x_j}\sqrt{x_i+1}\ket{y},
\]
so
\[
\bra{y}B_{\mathrm B}\ket{x}
=-J_{ij}\sqrt{(x_i+1)x_j}.
\]

Next, write
\[
A_{\mathrm B}\ket{x}=E_x\ket{x},
\qquad
E_x=\frac12\sum_k x_k(x_k-1).
\]
Since $y_i=x_i+1$, $y_j=x_j-1$, and all other occupations are unchanged,
\[
\begin{aligned}
E_y-E_x
&=\frac12\Big[
(x_i+1)x_i-x_i(x_i-1)\\
&\qquad +(x_j-1)(x_j-2)-x_j(x_j-1)
\Big]\\
&=\frac12\big[2x_i-2(x_j-1)\big]\\
&=x_i-x_j+1.
\end{aligned}
\]

Finally, using $\widetilde H_{\mathrm B}(t)=e^{i\chi(t)A_{\mathrm B}}B_{\mathrm B}e^{-i\chi(t)A_{\mathrm B}}$,
\[
\begin{aligned}
\bra{y}\widetilde H_{\mathrm B}(t)\ket{x}
&=e^{i\chi(t)E_y}
  \bra{y}B_{\mathrm B}\ket{x}
  e^{-i\chi(t)E_x}\\
&=e^{i\chi(t)(E_y-E_x)}
  \bra{y}B_{\mathrm B}\ket{x}\\
&=-J_{ij}\sqrt{(x_i+1)x_j}\,
  e^{i\chi(t)(x_i-x_j+1)}.
\end{aligned}
\]
If $x_j=0$, then $\hat a_j\ket{x}=0$, so this hop has zero amplitude.

\paragraph*{Fermi--Hubbard: interaction control.}
For spin-$\frac12$ fermions on a finite lattice with spin-independent hopping, take~\cite{arovas2022}
\begin{align}
    H_{\mathrm F}(t)&=B_{\mathrm F}+u(t)A_{\mathrm F},\\
    B_{\mathrm F}&=-\sum_{\langle i,j\rangle,\sigma}
    \left(J_{ij}\hat c_{i\sigma}^\dagger\hat c_{j\sigma}
    +J_{ij}^*\hat c_{j\sigma}^\dagger\hat c_{i\sigma}\right),\\
    A_{\mathrm F}&=\sum_i\hat n_{i\uparrow}\hat n_{i\downarrow},
    \qquad \hat n_{i\sigma}=\hat c_{i\sigma}^\dagger\hat c_{i\sigma}.
\end{align}
Since $\hat n_{i\sigma}^2=\hat n_{i\sigma}$, the total on-site occupation $\hat n_i=\hat n_{i\uparrow}+\hat n_{i\downarrow}$ satisfies
\begin{equation}
    A_{\mathrm F}=\frac12\sum_i(\hat n_i^2-\hat n_i).
\end{equation}
As in the bosonic case, independently sampled conjugation parameters are implemented using interaction pulses followed by zero-interaction evolution intervals, with the product description exact in the instantaneous-pulse limit. Taking $V(t)=e^{i\chi(t)A_{\mathrm F}}$ and $\chi(t)=\int_0^t u(s)\,ds$ yields
\begin{equation}
    \widetilde H_{\mathrm F}(t)
    =e^{i\chi(t)A_{\mathrm F}}B_{\mathrm F}e^{-i\chi(t)A_{\mathrm F}}.
\end{equation}

\paragraph*{Superconducting circuits: microwave phase control.}
Consider capacitively coupled transmons driven at a common microwave frequency.
In a fixed frame rotating at that frequency, the low-energy oscillator model, after the rotating-wave approximation, takes the form~\cite{krantz2019}
\begin{align}
    H_{\mathrm{sc}}(\phi)
    &=D_{\mathrm{sc}}+\frac12\sum_i\Omega_i
    \left(e^{i\phi}\hat b_i^\dagger+e^{-i\phi}\hat b_i\right),\\
    D_{\mathrm{sc}}
    &=\sum_i\left[\Delta_i\hat n_i
    +\frac{\alpha_i}{2}\hat n_i(\hat n_i-1)\right]
    \nonumber\\
    &\quad+\sum_{i<j}\left(g_{ij}\hat b_i^\dagger\hat b_j
    +g_{ij}^*\hat b_j^\dagger\hat b_i\right).
\end{align}
Here $\hat b_i$ lowers the transmon excitation, $\hat n_i=\hat b_i^\dagger\hat b_i$, $\Delta_i$ is the detuning from the microwave frequency, $\alpha_i$ is the anharmonicity, and $g_{ij}$ is the exchange coupling.
The drive amplitudes $\Omega_i$, detunings $\Delta_i$, anharmonicities $\alpha_i$, and exchange couplings $g_{ij}$ are held fixed; we only vary the common microwave drive phase $\phi$.
We keep only the lowest $d_i$ levels of each transmon, under the assumption in the experiment that population of higher levels remains negligible during the evolution. To have a qubit, we keep $d_i = 2$.

Because $H_{\mathrm{sc}}$ is a driven system with tunable phase, we can apply the same strategy as for the Rydberg example.
Let $A_{\mathrm{sc}}=\sum_i\hat n_i$ and $B_{\mathrm{sc}}=H_{\mathrm{sc}}(0)$.
All terms in $D_{\mathrm{sc}}$ conserve total excitation number, while
\begin{equation}
    e^{i\phi A_{\mathrm{sc}}}\hat b_i^\dagger
    e^{-i\phi A_{\mathrm{sc}}}=e^{i\phi}\hat b_i^\dagger.
\end{equation}
Therefore,
\begin{equation}\label{eqn:sc_phase_covariance}
    H_{\mathrm{sc}}(\phi)
    =e^{i\phi A_{\mathrm{sc}}}B_{\mathrm{sc}}e^{-i\phi A_{\mathrm{sc}}}.
\end{equation}
Thus microwave phase control realizes the $H(\phi)$ family directly within this effective model, including its anharmonic and exchange terms.
Projecting each transmon onto its lowest two levels gives the driven-qubit model.
We use a frame rotating at the fixed microwave reference frequency. The $\phi$-dependent conjugation is not an interaction-picture transformation. It relates Hamiltonians produced by different drive phases, all expressed in the same fixed overall rotating frame.

\paragraph*{Trapped ions: interaction-axis phase control.}
Bichromatic laser fields, containing two frequency components,
can generate effective Ising interactions between trapped-ion
qubits through their shared vibrations~\cite{monroe2021}.
The excitation quanta of these vibrations are called phonons.
Besides the main resonance, which changes the qubit state
without changing the vibration, there are additional
resonances called motional sidebands. At the first blue
sideband, exciting the qubit also creates one phonon;
at the first red sideband, exciting the qubit removes one
phonon. For qubit frequency $\omega_0$ and vibrational
frequency $\omega_m$, these resonances occur at
$\omega_0+\omega_m$ and $\omega_0-\omega_m$, respectively.

The two laser components are placed near the red and blue
sidebands, with detunings large compared with the relevant
spin-motion coupling strengths. The resulting weak coupling
to states with different phonon numbers mediates an interaction
between the spins. Neglecting the small laser-induced changes
in motion and residual spin-motion entanglement gives an
effective Hamiltonian acting only on the spins~\cite[Eqn.~20]{monroe2021}:
\begin{equation}\label{eqn:trapped_ion}
    H_{\mathrm{ion}}(\phi)
    =\sum_{i<j}J_{ij}\sigma_i(\phi)\sigma_j(\phi)
    +\sum_i h_i\sigma_i^z,
\end{equation}
where
\begin{equation*}
    \sigma_i(\phi)=\cos\phi\,\sigma_i^x+\sin\phi\,\sigma_i^y.
\end{equation*}
Here $\phi$ is the common spin phase set by the bichromatic fields, and $J_{ij}$ and the effective fields $h_i$ are fixed.

We vary the spin phase while holding the motional phase and the parameters determining $J_{ij}$ fixed.
We assume that the motional effects remain small during the entire course of evolution so \Cref{eqn:trapped_ion} is a valid model.

Define
\begin{equation}
    A_{\mathrm{ion}}=-\frac12\sum_i\sigma_i^z,\quad
    B_{\mathrm{ion}}=\sum_{i<j}J_{ij}\sigma_i^x\sigma_j^x
    +\sum_i h_i\sigma_i^z.
\end{equation}
Since 
\begin{align*}
e^{i\phi A_{\mathrm{ion}}}&\sigma_i^x e^{-i\phi A_{\mathrm{ion}}}\\
&=
\begin{pmatrix}
e^{-i\phi/2} & 0\\
0 & e^{i\phi/2}
\end{pmatrix}
\begin{pmatrix}
0 & 1\\
1 & 0
\end{pmatrix}
\begin{pmatrix}
e^{i\phi/2} & 0\\
0 & e^{-i\phi/2}
\end{pmatrix}\\
&=
\begin{pmatrix}
0 & e^{-i\phi}\\
e^{i\phi} & 0
\end{pmatrix}\\
&=\cos\phi\,\sigma_i^x+\sin\phi\,\sigma_i^y\\
&=\sigma_i(\phi),
\end{align*}
we have
\begin{equation}\label{eqn:ion_phase_covariance}
    H_{\mathrm{ion}}(\phi)
    =e^{i\phi A_{\mathrm{ion}}}B_{\mathrm{ion}}e^{-i\phi A_{\mathrm{ion}}},
\end{equation}
and
\begin{align*}
    \sigma_i(\phi)\sigma_j(\phi)
    ={}&\cos^2\phi\,\sigma_i^x\sigma_j^x
    +\sin^2\phi\,\sigma_i^y\sigma_j^y
    \nonumber\\
    &+\sin\phi\cos\phi
    (\sigma_i^x\sigma_j^y+\sigma_i^y\sigma_j^x).
\end{align*}

An additional resonant global carrier drive can also be included:
\[
C_{\mathrm{ion}}(\varphi)
=\frac12\sum_i\Omega_i\sigma_i(\varphi).
\]

To retain a single control phase, set
\[
\varphi=\phi+\varphi_0,
\]
where $\varphi_0$ is a fixed relative phase.
Then
\[
H_{\mathrm{tot}}(\phi,\phi+\varphi_0)
=e^{i\phi A_{\mathrm{ion}}}
\bigl[B_{\mathrm{ion}}+C_{\mathrm{ion}}(\varphi_0)\bigr]
e^{-i\phi A_{\mathrm{ion}}}.\\
\]

\paragraph*{Summary of the Hamiltonians.}
The table lists the reference Hamiltonian $B$ and conjugating generator $A$ for each model.
The Hubbard entries describe the interaction-picture construction, and the trapped-ion entry includes the optional carrier with $\varphi_0=0$.

\begin{widetext}
\begin{tabular}{l c c}
\hline
System & $B$ & $A$ \\
\hline
Rydberg array
&
$\displaystyle
\begin{aligned}
&\frac{\Omega}{2}\sum_i\sigma_i^x
-\sum_i\Delta_i\hat n_i\\
&\quad+\sum_{i<j}J_{ij}\hat n_i\hat n_j
\end{aligned}$
&
$\displaystyle\sum_i\hat n_i$
\\[1em]
\hline
Bose--Hubbard
&
$\displaystyle
-\sum_{\langle i,j\rangle}
\left(J_{ij}\hat a_i^\dagger\hat a_j
+\mathrm{h.c.}\right)$
&
$\displaystyle
\frac12\sum_i\hat n_i(\hat n_i-1)$
\\[1em]
\hline
Fermi--Hubbard
&
$\displaystyle
-\sum_{\langle i,j\rangle,\sigma}
\left(J_{ij}\hat c_{i\sigma}^\dagger\hat c_{j\sigma}
+\mathrm{h.c.}\right)$
&
$\displaystyle
\sum_i\hat n_{i\uparrow}\hat n_{i\downarrow}$
\\[1em]
\hline
Superconducting array
&
$\displaystyle
\begin{aligned}
&\sum_i\left[
\Delta_i\hat n_i
+\frac{\alpha_i}{2}\hat n_i(\hat n_i-1)
\right]\\
&+\sum_{i<j}
\left(g_{ij}\hat b_i^\dagger\hat b_j
+\mathrm{h.c.}\right)\\
&+\frac12\sum_i\Omega_i
(\hat b_i^\dagger+\hat b_i)
\end{aligned}$
&
$\displaystyle\sum_i\hat n_i$
\\[1em]
\hline
Trapped ions
&
$\displaystyle
\begin{aligned}
&\sum_{i<j}J_{ij}\sigma_i^x\sigma_j^x
+\sum_i h_i\sigma_i^z\\
&\quad+\frac12\sum_i\Omega_i\sigma_i^x
\end{aligned}$
&
$\displaystyle-\frac12\sum_i\sigma_i^z$
\\[1em]

\hline
\end{tabular}
\end{widetext}

\end{document}